\pgfplotsset{compat=newest,compat/show suggested version=false}
\newcommand{\mcc}{\textcolor{black}}
\newcommand{\mc}{\textcolor{black}}
\newcommand{\last}{\textcolor{black}}
\newcommand{\lasttt}{\textcolor{black}}
\newcommand{\own}{\textcolor{black}} %activate in the camera ready to see the changes after revision
\newcommand{\ed}{\textcolor{black}}
\newcommand{\GL}{\textcolor{black}}
\begin{document}
	
\newcommand{\ensemble}{\mathscr{C}}
\newcommand{\code}{\mathcal{C}}
\newcommand{\vecu}{\boldsymbol{u}}
\newcommand{\vecd}{\boldsymbol{d}}
\newcommand{\vecv}{\boldsymbol{v}}
\newcommand{\vecw}{\boldsymbol{w}}
\newcommand{\vecp}{\boldsymbol{p}}
\newcommand{\vecpsi}{\boldsymbol{\psi}}
\newcommand{\vecx}{\boldsymbol{x}}
\newcommand{\vecone}{\boldsymbol{1}}
\newcommand{\vecuhat}{\hat{\boldsymbol{u}}}
\newcommand{\vecc}{\boldsymbol{c}}
\newcommand{\vecb}{\boldsymbol{b}}
\newcommand{\vecchat}{\hat{\boldsymbol{c}}}
\newcommand{\vecy}{\boldsymbol{y}}
\newcommand{\vecz}{\boldsymbol{z}}
\newcommand{\G}{\boldsymbol{G}}
\newcommand{\GK}{\boldsymbol{\mathsf{K}}}
\newcommand{\Per}{\boldsymbol{\Pi}}
\newcommand{\I}{\boldsymbol{I}}
\newcommand{\perP}{\boldsymbol{P}}
\newcommand{\perS}{\boldsymbol{S}}
\newcommand{\GH}[1]{\bm{\mathsf{G}}_{#1}}
\newcommand{\T}[1]{\bm{\mathsf{T}}{(#1)}}

\newcommand{\mi}{\mathrm{I}}
\newcommand{\entropy}{\mathrm{H}}
\newcommand{\Prob}{\boldsymbol{P}}
\newcommand{\Z}{\mathrm{Z}}
\newcommand{\SPC}{\mathcal{S}}
\newcommand{\Rep}{\mathcal{R}}
\newcommand{\f}{\mathrm{f}}
\newcommand{\W}{\mathrm{W}}
\newcommand{\A}{\mathrm{A}}
\newcommand{\SC}{\mathrm{SC}}
\newcommand{\Q}{\mathrm{Q}}
\newcommand{\MAP}{\mathrm{MAP}}
\newcommand{\E}{\mathrm{E}}
\newcommand{\bin}{\boldsymbol{b}}
\newcommand{\U}{\mathrm{U}}
\newcommand{\Ham}{\mathrm{H}}

\newcommand{\Amin}{\mathrm{A}_{\mathrm{min}}}
\newcommand{\dmin}{d}
\newcommand{\erfc}{\mathrm{erfc}}

\newcommand{\unaryminus}{\scalebox{0.5}[1.0]{\( - \)}}
\newcommand{\unaryplus}{\scalebox{0.5}[0.5]{\( + \)}}

\newcommand{\de}{\mathrm{d}}

\newcommand{\decoRule}{\rule{\textwidth}{.4pt}}

\newcommand{\oleq}[1]{\overset{\text{(#1)}}{\leq}}
\newcommand{\oeq}[1]{\overset{\text{(#1)}}{=}}
\newcommand{\ogeq}[1]{\overset{\text{(#1)}}{\geq}}
\newcommand{\ogeql}[2]{\overset{#1}{\underset{#2}{\gtreqless}}}
\newcommand{\inner}[1]{\langle#1\rangle}

\newcommand{\pscd}{\gl{\Prob(\mathcal{E}_{\SCD})}}
\newcommand{\uED}{\gl{\hat{u}}}
\newcommand{\LED}{\gl{L_i^{(\ED)}}}
\newcommand{\inverson}[1]{\gl{\mathbb{I}\left\{#1\right\}}}

\newtheorem{mydef}{Definition}
\newtheorem{prop}{Proposition}
\newtheorem{theorem}{Theorem}

\newtheorem{lemma}{Lemma}
\newtheorem{remark}{Remark}
\newtheorem{example}{Example}
\newtheorem{definition}{Definition}
\newtheorem{corollary}{Corollary}

\renewcommand{\qed}{\hfill\ensuremath{\blacksquare}}

%%%%%%%%%%%%%%%%%%% Colors %%%%%%%%%%%%%%%%%%%%%%%%%%%%%%%%
\definecolor{lightblue}{rgb}{0,.5,1}
\definecolor{normemph}{rgb}{0,.2,0.6}
\definecolor{supremph}{rgb}{0.6,.2,0.1}
\definecolor{lightpurple}{rgb}{.6,.4,1}
\definecolor{gold}{rgb}{.6,.5,0}
\definecolor{orange}{rgb}{1,0.4,0}
\definecolor{hotpink}{rgb}{1,0,0.5}
\definecolor{newcolor2}{rgb}{.5,.3,.5}
\definecolor{newcolor}{rgb}{0,.3,1}
\definecolor{newcolor3}{rgb}{1,0,.35}
\definecolor{darkgreen1}{rgb}{0, .35, 0}
\definecolor{darkgreen}{rgb}{0, .6, 0}
\definecolor{darkred}{rgb}{.75,0,0}
\definecolor{midgray}{rgb}{.8,0.8,0.8}
\definecolor{darkblue}{rgb}{0,.25,0.6}

\definecolor{lightred}{rgb}{1,0.9,0.9}
\definecolor{lightblue}{rgb}{0.9,0,0.0}
\definecolor{lightpurple}{rgb}{.6,.4,1}
\definecolor{gold}{rgb}{.6,.5,0}
\definecolor{orange}{rgb}{1,0.4,0}
\definecolor{hotpink}{rgb}{1,0,0.5}
\definecolor{darkgreen}{rgb}{0, .6, 0}
\definecolor{darkred}{rgb}{.75,0,0}
\definecolor{darkblue}{rgb}{0,0,0.6}

\definecolor{bgblue}{RGB}{245,243,253}
\definecolor{ttblue}{RGB}{91,194,224}

\definecolor{dark_red}{RGB}{150,0,0}
\definecolor{dark_green}{RGB}{0,150,0}
\definecolor{dark_blue}{RGB}{0,0,150}
\definecolor{dark_pink}{RGB}{80,120,90}

\begin{acronym}
	\acro{B-AWGN}{binary input additive white Gaussian noise}
	\acro{B-DMC}{binary-input discrete memoryless channel}
	\acro{B-DMSC}{binary-input discrete memoryless symmetric channel}
	\acro{BMSC}{binary-input memoryless symmetric channel}
	\acro{BCJR}{Bahl, Cocke, Jelinek, and Raviv}
	\acro{BEC}{binary erasure channel}
	\acro{BER}{bit error rate}
	\acro{BLER}{block error rate}
	\acro{BMS}{binary memoryless symmetric}
	\acro{BP}{belief propagation}
	\acro{BPSK}{binary phase shift keying}
	\acro{BSC}{binary symmetric channel}
	\acro{CER}{codeword error rate}
	\acro{CN}{check node}
	\acro{CRC}{cyclic redundancy check}
%	\acro{DE}{density evolution}
	\acro{G-JIR-WEF}{generalized joint input-redundancy weight enumerator function}
	\acro{G-JIO-WEF}{generalized joint input-output weight enumerator function}
	\acro{G-JWEF}{generalized joint weight enumerator function}
%	\acro{GJWE}{generalized joint WE}
	\acro{IO-WE}{input-output weight enumerator}
%	\acro{IR-WE}{input-redundancy weight enumerator}
	\acro{IOWEF}{input-output weight enumerator function}
	\acro{IRWEF}{input-redundancy weight enumerator function}
%	\acro{JIO-WE}{joint IO-WE}
	\acro{JIOWEF}{joint input-output weight enumerator function}
%	\acro{JIR-WE}{joint IR-WE}
	\acro{JWEF}{joint weight enumerator function}
	\acro{LDPC}{low-density parity-check}
	\acro{LHS}{left-hand side}
	\acro{LLR}{log-likelihood ratio}
	\acro{MAP}{maximum a posteriori}
	\acro{MC}{metaconverse}
	\acro{ML}{maximum-likelihood}
	%\acro{PC}{Product code}
	\acro{p.d.f.}{probability density function}
	\acro{RCB}{random coding bound}
	\acro{RCU}{random coding union}
	\acro{RM}{Reed-Muller}
	\acro{RHS}{right-hand side}
	\acro{RV}{random variable}
	\acro{SPC}{single parity-check}
	\acro{SC}{successive cancellation}
	\acro{SCC}{super component codes}
	\acro{SCL}{successive cancellation list}
	\acro{SISO}{soft-input soft-output}
	\acro{SNR}{signal-to-noise ratio}
	\acro{TSB}{tangential-sphere bound}
	\acro{UB}{union bound}
	\acro{VN}{variable node}
%	\acro{WE}{weight enumerator}
	\acro{WEF}{weight enumerator function}
\end{acronym}

	\title{Successive Cancellation Decoding of\\Single Parity-Check Product Codes:\\Analysis and Improved Decoding}
	\author{Mustafa Cemil Co\c{s}kun, \IEEEmembership{Student Member, IEEE}, Gianluigi Liva, \IEEEmembership{Senior Member, IEEE},\\ Alexandre Graell i Amat, \IEEEmembership{Senior Member, IEEE}, Michael Lentmaier, \IEEEmembership{Senior Member, IEEE},\\ and Henry D. Pfister, \IEEEmembership{Senior Member, IEEE}
	\thanks{This work was supported by the Helmholtz Gemeinschaft through the HGF-Allianz DLR@Uni project Munich Aerospace via the research grant ``Efficient Coding and Modulation for Satellite Links with Severe Delay Constraints''. This paper was presented in part at the IEEE International Symposium on Information Theory, June 2017, Aachen, Germany \cite{coskun17}.}
	\thanks{Mustafa Cemil Co\c{s}kun is with the Institute for Communications Engineering (LNT), Technical University of Munich (TUM), Munich, Germany (email: mustafa.coskun@tum.de). This work was carried out when he was also with the Institute of Communications and Navigation of the German Aerospace Center (DLR), Weßling, Germany.}
	\thanks{Gianluigi Liva is with the Institute of Communications and Navigation of the DLR, We{\ss}ling, Germany (email: gianluigi.liva@dlr.de).}
	\thanks{Alexandre Graell i Amat is with the Department of Signals and Systems, Chalmers University of Technology, Gothenburg, Sweden (email: alexandre.graell@chalmers.se).}
	\thanks{Michael Lentmaier is with the Department of Electrical and Information Technology, Lund University, Lund, Sweden (email: michael.lentmaier@eit.lth.se).}
	\thanks{Henry D. Pfister is with the Department of Electrical and Computer Engineering, Duke University, Durham, USA (email: henry.pfister@duke.edu).}
	} 

% make the title area
	\maketitle

% As a general rule, do not put math, special symbols or citations
% in the abstract
	%\vspace{-15mm}
	\begin{abstract}
		A product code with single parity-check component codes can be \ed{described via the tools of} a multi-kernel polar code, where the rows of the generator matrix are chosen according to the constraints imposed by the product code construction. Following this observation, successive cancellation decoding of such codes is introduced. In particular, the error probability of single parity-check product codes over \ed{binary memoryless symmetric channels} under successive cancellation decoding is characterized. A bridge with the analysis of product codes introduced by Elias is also established \ed{for the binary erasure channel}. Successive cancellation list decoding of single parity-check product codes is then described. For the provided example, \ed{simulations over the binary input additive white Gaussian channel show} that successive cancellation list decoding outperforms belief propagation decoding \ed{applied to} the code graph. Finally, the performance of the concatenation of a product code with a high-rate outer code is investigated via distance spectrum analysis. \ed{Examples of concatenations performing within $0.7$ dB from the random coding union bound are provided.}
	\end{abstract}
	\begin{keywords}
		Successive cancellation decoding, list decoding, product codes, multi-kernel polar codes.
	\end{keywords}
	\markboth
	{submitted to IEEE Transactions on Information Theory}
	{}

	\section{Introduction}\label{sec:intro}

\IEEEPARstart{P}{roduct} codes were introduced in 1954 by Elias\cite{Elias:errorfreecoding54} \mcc{with} extended Hamming \mcc{component} codes over an infinite number of dimensions. \mcc{Elias} showed that \last{this code has positive rate and its} bit error probability can be made arbitrarily small over the \ac{BSC}. \last{His decoder treats} the product code as a serially concatenated code and applies independent decoding to the component codes \last{sequentially} across its dimensions. \last{Much later, t}he suitability of product code constructions for iterative decoding algorithms\cite{Berrou93} led to a very powerful class of codes\cite{Pyndiah98,Tanner,Li04,Feltstrom09,Pfister15}. \mcc{For} an overview of product codes and their variants\mcc{, we refer the reader to} \cite{Berrou05,hager17}. Usually, product codes are constructed with high-rate algebraic component codes, for which low-complexity \ac{SISO}\cite{Pyndiah98} or algebraic (e.g., bounded distance) \cite{Abramson,KoetterPC,Haeger2018} decoders are available. Specifically, product codes with \ac{SPC} component codes are considered in \cite{caire,rankin1,rankin2}, where the interest was mainly their performance and \mcc{their} weight enumerators.

In \cite{6125399}, \last{a} bridge between generalized concatenated codes and polar codes\cite{arikan2009channel,stolte2002rekursive} was established. In \cite{Coskun19:RM-Product}, the standard polar \ac{SCL} decoder is proposed for a class of product codes with Reed-Muller component codes, e.g., SPC \ed{codes whose length is a power of $2$} and/or extended Hamming component codes, \mcc{with non-systematic encoders}. Sizeable gains were observed with moderate list sizes over \ac{BP} decoding for short \ed{blocklengths} when the product codes were modified by \ed{introducing} a high-rate outer code.

In this paper, we show that \ac{SPC} product codes can be \ed{described using the tools} of polar codes based on generalized kernels \ed{\cite{PSL16,gabry_land,BGL20,benammar_land}}, where the frozen bit indices are chosen according to the constraints imposed by the product code construction. \mcc{Following this observation}, \ac{SC} decoding of \ac{SPC} product codes is introduced. A bridge between the original decoding algorithm of product codes, which is referred to as Elias' decoder\cite{Elias:errorfreecoding54},\footnote{By Elias' decoder, we refer to \last{the} decoding algorithm that treats the product code as a serially concatenated block code, where the decoding is performed starting from the component codes of the first dimension, up to those of the last dimension, in a one-sweep fashion.} and the \ac{SC} decoding algorithm is established for \ac{SPC} product codes over \last{the} \ac{BEC}. \last{Further,} the block error probability of \ac{SPC} product codes is upper bounded \last{via} the union bound under both decoding algorithms. A comparison between Elias' decoding and \ac{SC} decoding of \ac{SPC} product codes is \last{also} provided in terms of block error probability, proving that \ac{SC} decoding yields a \ed{probability of error that does not exceed the one of} Elias' decoding. \ed{The analysis of \ac{SC} decoding is extended to general \ac{BMS} channels.} Finally, \ac{SCL} decoding \cite{tal15} of \mcc{product} codes is introduced to overcome the significant performance gap of \ac{SC} decoding to \ed{the block error probability under \ac{ML} decoding (estimated through Poltyrev's \ac{TSB}\cite{poltyrev94})}. The performance improvement is significant, i.e., \ac{SCL} decoding \ed{yields a block error probability that is} below \ed{the} \ac{TSB}\ed{,} even for small list sizes\ed{,} for the analyzed code, \ed{delivering} a lower error probability compared to \ac{BP} decoding (especially at low error rates). 
In addition to the potential coding gain over \ac{BP} decoding, \ac{SCL} decoding enables a low-complexity decoding of the concatenation of the product code with a high-rate outer code as for polar codes \cite{tal15}. It is shown \ed{via simulations} that the concatenation provides remarkable gains over the product code alone. \ed{The gains would} not be possible under a \ac{BP} decoder which jointly decodes the outer code and the inner product code \cite{Coskun19:RM-Product}. \ed{We show examples where t}he resulting construction operates within \ed{$0.7$} dB of the \ac{RCU} bound\cite{Polyanskiy10:BOUNDS} with a moderate list size. \ed{From an application viewpoint, the performance gain with respect to BP decoding may be especially relevant for systems employing \ac{SPC} product codes with an outer error detection code (see, e.g., the IEEE~802.16 standard \cite{IEEE80216}). Moreover, we show that short \ac{SPC} product codes, concatenated with an outer \ac{CRC} code can outperform (under \ac{SCL}  decoding) 5G-NR \ac{LDPC} code with similar blocklength and dimension.}

\ed{By noticing that, for medium to short blocklengths, the \ac{SCL} decoder can approach the \ac{ML} decoder performance with a moderate list size, t}he analysis of the concatenated construction is addressed from a distance spectrum viewpoint. In \cite{caire}, a closed form expression is provided to compute the weight enumerator of $2$-dimensional \ac{SPC} product codes, relying on the MacWilliams identity for joint weight enumerators\cite{MacWilliams72}. In \cite{rankin1}, the closed form solution is extended to compute the input-output weight enumerator of $2$-dimensional \ac{SPC} product codes by converting the dual code into a systematic form. This method does not seem applicable for higher-dimensional constructions, as it is not trivial how to get to a systematic form of the dual code in such cases. In this work, the method in \cite{caire} is presented \last{using} a different approach, that avoids \ed{the use of} joint weight enumerators. This approach is \last{then} extended to accommodate the input-output weight enumerator of $2$-dimensional product codes, where one component code is an \ac{SPC} code. The method is used to compute the input-output weight enumerator of the exemplary short $3$-dimensional \ac{SPC} product code \last{as it can be seen as a $2$-dimensional product code, where one component code is an \ac{SPC} code}. By combining this result with the uniform interleaver approach, the average input-output weight enumerator of the concatenated code ensemble is computed, which is, then, used to compute some tight bounds on the block error probability \cite{di02,poltyrev94}, e.g., \ed{via} Poltyrev's \ac{TSB}.

The work is organized as follows. In Section \ref{sec:prelim}, we provide the preliminaries needed for the rest of the work. In Section \ref{sec:spcpc_as_pc}, we establish a bridge between \ac{SPC} product codes and the multi-kernel polar construction \cite{gabry_land,benammar_land}. The \ac{SC} decoding algorithm for \ac{SPC} product codes is described and analyzed over \ed{\ac{BMS} channels, with a particular focus on the \ac{BEC} and the \ac{B-AWGN} channel, in Section \ref{sec:sc}.} In Section \ref{sec:scl}, the \ac{SCL} decoding algorithm is described and the codes are analyzed from a \ac{ML} decoding point of view through their distance spectrum. Conclusions follow in Section \ref{sec:conc}.

	\section{Preliminaries}\label{sec:prelim}

\subsection{Notation}

In the following, lower-case bold letters are used for vectors, e.g., $\vecx = (x_1,x_2,\dots,x_n)$. The Hamming weight of $\vecx$ is $w_{\Ham}(\vecx)$. \mcc{When required, we} use $ x_a^b $ to denote the vector $ (x_a, x_{a+1}, \dots, x_b) $ where $ b > a $. Furthermore, we write $x_{a,m}^{b}$ to denote the subvector with indices $\{i\in[b]:a = i\mod m\}$, where $[b]$ denotes the set $\{1,2,\dots,b\}$. For instance, $x_{1,3}^9 = (x_1, x_4, x_7)$. In addition, $\vecx_{\sim i}$ refers to the vector where the \ed{element with index $i$} is removed, i.e., $ \vecx_{\sim i} = (x_1,x_2,\dots,x_{i-1},x_{i+1},\dots,x_{n})$. Component-wise \ed{addition} of two binary vectors \ed{in $\mathbbm{F}_2$} is denoted as $\vecx\oplus \vecy$. The $m$-digit multibase representation of a decimal number $a$ is denoted by $(a_1a_2\dots a_m)_{b_1b_2\dots b_m}$ and the conversion is done according to
\begin{equation}\label{eq:multibase}
	a = \sum_{i=1}^{m} a_i\prod_{j=i+1}^{m}b_j
\end{equation}
where $b_j$ is the base of the $j$-th digit $a_j$ with left-most digit being the most significant one with $0\leq a_j<b_j$. \ed{For example, the binary representation of a number is obtained by setting $b_j=2$, $j=1,\ldots,m$.}

Capital bold letters, e.g., $\boldsymbol{X}_{a\times b} = [x_{i,j}]$, are used for $a\times b$ matrices. \mcc{The subscript} showing the dimensions is omitted whenever the dimensions are clear from the context. Similarly, $\I_a$ refers to the $a\times a$ identity matrix. The Kronecker product of two matrices $\boldsymbol{X}$ and $\boldsymbol{Y}$ is
\begin{equation*}
\boldsymbol{X}\otimes \boldsymbol{Y} \triangleq \begin{bmatrix}
x_{1,1}\boldsymbol{Y} & x_{1,2}\boldsymbol{Y} & \dots \\ 
x_{2,1}\boldsymbol{Y} & x_{2,2}\boldsymbol{Y} & \dots \\
\vdots & \vdots &\ddots
\end{bmatrix}. \label{eq:Kronecker}
\end{equation*}
We define an $ab\times ab$ \emph{perfect shuffle} matrix \ed{\cite{R80}}, denoted as $\Per_{a,b}$, by the following operation
\begin{equation}
(x_1,x_2,\dots,x_{ab})\Per_{a,b} = (x_{1,b}^{ab},x_{2,b}^{ab},\dots,x_{b-1,b}^{ab},x_{b,b}^{ab}).
\label{def:permutation}
\end{equation}

We use capital letters, e.g., $X$, for \acp{RV} and lower-case counterparts, e.g., $x$, for their realizations. \ed{For random vectors, similar notation above is used, e.g., we use $X_a^b$ to denote the random vector $(X_a, X_{a+1},\ldots X_{b})$.} We denote a \ac{BMS} channel with input alphabet $ \mathcal{X} = \{0,1\} $, output alphabet $ \mathcal{Y} $, and transition probabilities \ed{(densities)} $ \W(y|x) $, $ x \in \mathcal{X} $, $ y \in \mathcal{Y} $ by $ \W : \mathcal{X} \rightarrow \mathcal{Y} $. \ed{For a given channel $\W$, let $I\left(\W\right)$ denote its mutual information with uniform inputs, which amounts to the capacity for any \ac{BMS} channel \cite{arikan2009channel}. Then}, $n$ independent uses of channel $\W$ are denoted as $\W^n : \mathcal{X}^n \rightarrow \mathcal{Y}^n$, with transition probabilites \ed{(densities)} $\W^n(\vecy|\vecx) = \prod_{i=1}^{n} \W(y_i|x_i)$. In addition, we define $\W_{\G} : \mathcal{X}^k \rightarrow \mathcal{Y}^n$ as the channel seen by the $k$-bit message, e.g., $\vecu$, to be encoded into $n$-bit $\vecx$ by a generator matrix $\G$\ed{, i.e., $\vecx = \vecu\G$}. In other words, the likelihood of message $\vecu$, encoded via $\G$, upon observing the channel output $\vecy$ is defined as
\begin{align}
    \W_{\G}(\vecy|\vecu)&\triangleq \W^n(\vecy|\vecu\G) \\
                        &= \W^n(\vecy|\vecx).
\end{align}
We write \ac{BEC}($ \epsilon_{\mcc{\mathrm{ch}}} $) to denote the \ac{BEC} with erasure probability $ \epsilon_{\mathrm{ch}} $. \ed{Here, the output alphabet is $\mathcal{Y} = \{0,1,?\}$, where $?$ denotes an erasure. The output of the \ac{BEC}($\epsilon_{\mathrm{ch}}$) is equal to the input (i.e., $y=x$) with probability $1-\epsilon_{\mathrm{ch}}$ and it is erased (i.e., $y=?$) with probability $\epsilon_{\mathrm{ch}}$.} We denote an \ac{SPC} code with blocklength $n$ by $\SPC_n$. For a given $(n,k)$ binary linear block code $\code$, its complete \ac{WEF} is
\begin{equation}
	\A_{\code}(\vecz) \triangleq \sum_{\vecx\in\code} \vecz^{\vecx}\label{eq:cwef}
\end{equation}
where $\vecz^{\vecx} \triangleq \ed{\prod_{i=1}^n z_i^{x_i}}$. \ed{Let $w_{\Ham}(\vecx)$ denote the Hamming weight of vector $\vecx$.} One sets $z_i = z$, $i=1,\dots,n$, to get \ed{(with slight abuse of notation)} the resulting \ac{WEF} of $\code$ as
\begin{equation*}
\A_{\code}(z) \triangleq \A_{\code}\left(z,z,\dots,z\right) = \sum_{i=0}^{n} A_iz^i
\end{equation*}
where $A_i$ is the number of codewords $\vecx\in\code$ of $w_{\Ham}(\vecx)=i$ (the sequence $A_0, A_1,\ldots,A_n$ is instead referred to as the weight enumerator of the code). The distinction between complete \acp{WEF} and \acp{WEF} should be clear from the different arguments. \ed{Finally,} we write $A_{\code}^{\mathrm{IO}}(x,z)$ as the \ac{IOWEF}, defined as
\begin{align}
    \A_{\code}^{\mathrm{IO}}(x,z) \triangleq \sum_{i=0}^{k}\sum_{w=0}^{n} A_{i,w}^{\mathrm{IO}}x^iz^w
\end{align}
\ed{where} $A_{i,w}^{\mathrm{IO}}$ is the number of codewords $\vecx\in\code$ of $w_{\Ham}(\vecu)=i$ and $w_{\Ham}(\vecx)=w$.

\subsection{Product Codes}

An $m$-dimensional  $(n,k,d)$ product code $\code$ is obtained by requiring that an $m$-dimensional array of bits satisfies a \ed{linear} code constraint along each axis \cite{Elias:errorfreecoding54}. \ed{More precisely, the information bits are arranged in an $m$-dimensional hypercube, where the length of dimension $i\in[m]$ is $k_i$. Then, the vectors in the $\ell$-th dimension are encoded via a linear (systematic) component code $\code_\ell$ with parameters $(n_\ell,k_\ell,d_\ell)$, where $n_\ell$, $k_\ell$, and $ d_\ell $ are its blocklength, dimension, and minimum Hamming distance, respectively.} The parameters of the resulting product code are \ed{\cite[Ch. 18, Sec. 2]{macwilliams}}
\begin{equation}
	n=\prod_{\ell=1}^m n_\ell, \quad k=\prod_{\ell=1}^m k_\ell, \quad \text{and} \quad \ed{d=\prod_{\ell=1}^m d_\ell}. \label{eq:parameters}
\end{equation}
\ed{The rate of the product code is
\[R\triangleq\frac{k}{n}=\prod_{\ell=1}^m R_\ell\]}
where $R_\ell$ is the rate of the $\ell$-th component code.

\subsubsection{Encoding}
A generator matrix of the $ m $-dimensional product code is \ed{\cite[Ch. 18, Sec. 2]{macwilliams}
\begin{equation}
    \G = \boldsymbol{G}_{1} \otimes \boldsymbol{G}_{2} \otimes \ldots \otimes \boldsymbol{G}_{m}\label{eq:generator_product_base}
\end{equation}
where $\boldsymbol{G}_{\ell}$} is the generator matrix of the $\ell$-th component code. Alternatively, we can define a generator matrix recursively as follows.
Let binary vectors $ \vecu $ and $ \vecx $ be the $k$-bit message to be encoded and the corresponding $n$-bit codeword, respectively, where the relation between them is given as $\vecx = \vecu\ed{\G^{[m]}}$, $\ed{\G^{[m]}}$ being the generator matrix of the product code with $m$ dimensions. We obtain $\ed{\G^{[m]}}$ recursively as
\begin{equation}
	\ed{\G^{[m]}} = \left(\I_{\ed{k^{[m-1]}}}\otimes\G_{\ed{m}}\right)\Per_{\ed{k^{[m-1]}},n_m}\left(\I_{n_m}\otimes\ed{\G^{[m-1]}}\right)
	\label{eq:generator_product}
\end{equation}
where $\ed{\G^{[0]}} \triangleq 1$ and $\ed{k^{[m-1]}} \triangleq \prod_{i=1}^{m-1}k_i$ with $\ed{k^{[0]}} = 1$ (observe that $\ed{k^{[m]}} = k$). \ed{We note also that $n = n^{[m]} \triangleq \prod_{i=1}^{m}n_i$ with $n^{[0]} = 1$. Fig. \ref{fig:transmission_recursive} depicts the encoding with \ac{SPC} product codes, where the encoding recursion is based on \eqref{eq:generator_product}.}
\begin{figure*}
	\begin{center}
		\includegraphics[width=\textwidth]{./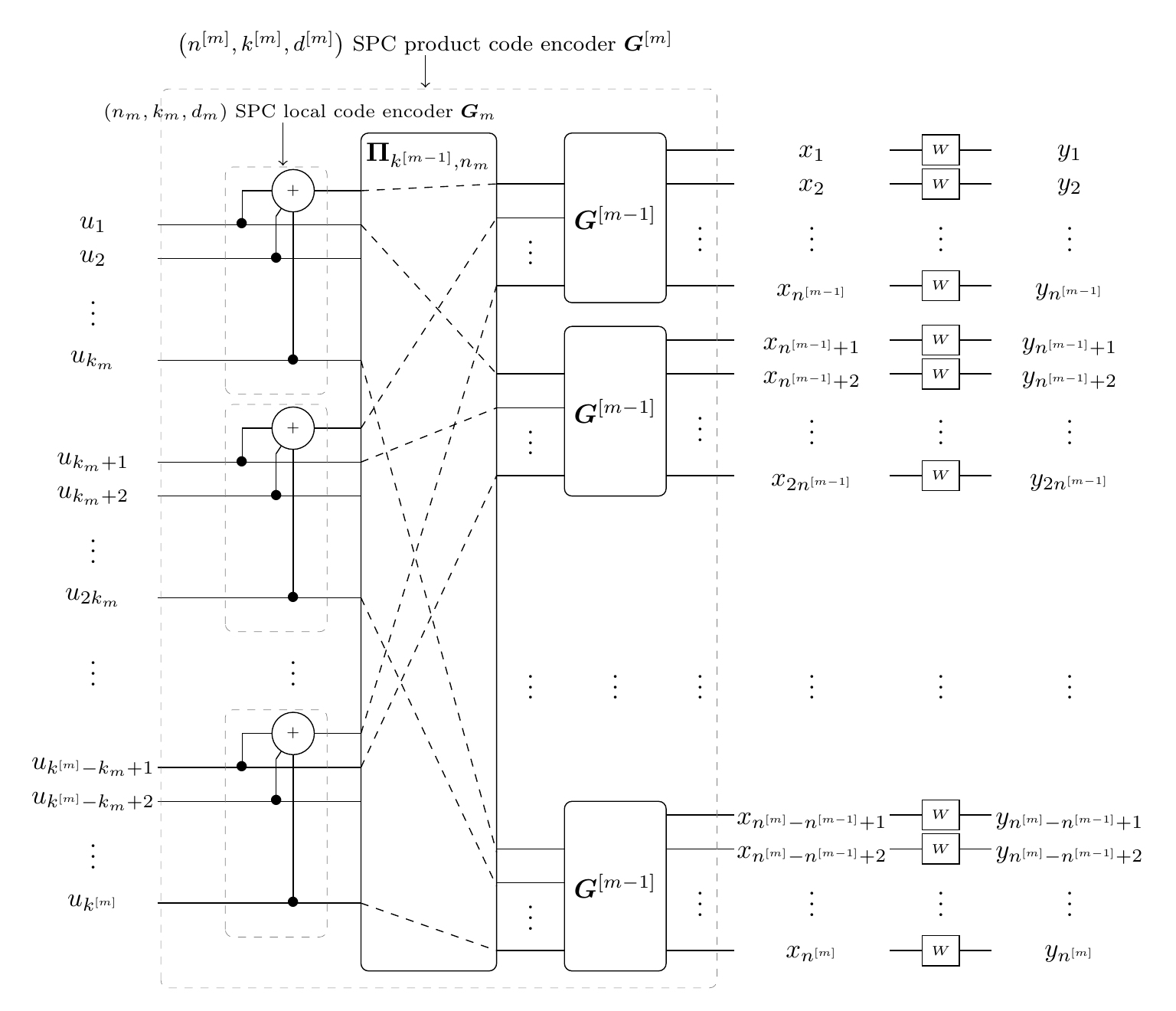}
	\end{center}
	\vspace{-0.5cm}
	\caption{\ed{Encoding using an $(n,k,d)$ \ac{SPC} product code with $m$ dimensions.}}
	\label{fig:transmission_recursive}
\end{figure*}

\ed{To see the relation between $\G$ and $\G^{[m]}$, we write
    \begin{align}
	    \ed{\G^{[m]}}
	    &= \left(\I_{\ed{k^{[m-1]}}}\otimes\G_{m}\right)\left(\ed{\G^{[m-1]}}\otimes\I_{n_{m}}\right)\Per_{\ed{n^{[m-1]}},n_{m}}\label{eq:generator_equivalence1}\\
	    &\!\!\!\!\!\!\!\!\!\!= \left(\ed{\G^{[m-1]}}\otimes\G_{m}\right)\Per_{\ed{n^{[m-1]}},n_{m}}\label{eq:generator_equivalence2} \\
	    &\!\!\!\!\!\!\!\!\!\!= \left(\G_{1}\otimes\G_{2}\otimes\ldots\otimes\G_{m}\right)\prod_{i=1}^{m}\left(\Per_{\ed{n^{[i-1]}},n_{i}}\otimes\I_{\nicefrac{\ed{n^{[m]}}}{\ed{n^{[i]}}}}\right)\label{eq:generator_equivalence5}
    \end{align}
    where \eqref{eq:generator_equivalence1} follows from applying the identity
    \[\Per_{\ed{k^{[m-1]}},n_m}\left(\I_{n_m}\otimes\ed{\G^{[m-1]}}\right) = \left(\ed{\G^{[m-1]}}\otimes\I_{n_{m}}\right)\Per_{\ed{n^{[m-1]}},n_{m}}\]
    \own{and \eqref{eq:generator_equivalence2} from the mixed-product identity. Then, \eqref{eq:generator_equivalence5} follows by re-writing $\G^{[m-1]}$ through \eqref{eq:generator_product} and by applying similar steps, recursively}. By noting the fact that the product of an arbitrary number of permutation matrices yields another permutation matrix, we can conclude that $\G$ and $\G^{[m]}$ are equivalent up to a column permutation for all $m\geq 1$.}
\subsubsection{Distance Spectrum}

Although the characterization of the complete distance spectrum of a product code is still an open problem even for the case where the distance spectrum of its component codes is known\mc{\cite{Tolhuizen02,El-Khamy05:conf,El-Khamy05}}, the minimum distance multiplicity is known~\ed{\cite[Theorem 3]{TOLHUIZEN92}} and equal to 
\[ A_{\ed{d^{[m]}}}=\prod_{\ell=1}^m A^{(\ell)}_{d_\ell}. \]
\mcc{Here, $ A^{(\ell)}_{d_\ell} $ is the minimum distance multiplicity of the $ \ell $-th component code.}

As a final remark, note that product codes can be \mcc{seen as a special class of generalized \ac{LDPC} codes} \cite{Tanner}. \ac{SPC} product codes, in particular, are a special class of (left-regular) \ac{LDPC} \cite{ldpc}, whose bipartite graph representation has girth $8$.

\subsection{Polar Codes}
\label{sec:polar}
Polar codes were \mcc{shown to be} the first class of provably capacity-achieving codes with low encoding and decoding complexity over any \ac{BMS} channel under \mcc{low-complexity} \ac{SC} decoding \cite{arikan2009channel}. In addition to the theoretical interest, polar codes concatenated with an outer \ac{CRC} code are very attractive from a practical viewpoint\ed{\cite[Ch. 5]{5G20}} thanks to their \mcc{excellent} performance \mcc{under} \ac{SCL} decoding \cite{tal15} in \mcc{the} short and moderate blocklength regime \cite{Coskun18:Survey}.

\ed{A transform matrix for a length $n = 2^m$ polar code is defined as $\GK_2^{\otimes m}$, where $\GK_2^{\otimes m}$ is the $m$-fold Kronecker product with $\GK_2^{\otimes 0}\triangleq 1$. Similar to the generator matrix \eqref{eq:generator_product} of product codes, an alternative construction of the transformation is possible. In this case, the $n \times n$ transform matrix $\boldsymbol{\mathsf{G}}^{[m]}$ is constructed recursively as
\begin{equation}
	\label{eq:generator_polar}
	\boldsymbol{\mathsf{G}}^{[m]} = \left(\I_{n/2}\otimes\GK_{2}\right)\Per_{n/2,2}\left(\I_2\otimes\boldsymbol{\mathsf{G}}^{[m-1]}\right)
\end{equation}
where the kernel at each recursion is fixed and it is defined as
\[
\GK_{2}\triangleq\begin{bmatrix}
1 & 0 \\ 
1 & 1
\end{bmatrix}
\]
and $\boldsymbol{\mathsf{G}}^{[0]} \triangleq 1$. Let $\vecw$ be any vector in $\mathcal{X}^n$ and it is mapped onto $\vecx$ as $\vecx = \vecw\boldsymbol{\mathsf{G}}^{[m]}$. Transition probabilities of the $i$-th bit-channel, a \emph{synthesized} channel with the input $w_i$ and the output $(\vecy, w_1^{i-1})$, are defined by
\begin{equation}
	\W^{(i)}_{\boldsymbol{\mathsf{G}}^{[m]}}(\vecy,w_1^{i-1}|w_i) \triangleq \sum_{w_{i+1}^n\in\mathcal{X}^{n-i}} \frac{1}{2^{n-1}}\W_{\boldsymbol{\mathsf{G}}^{[m]}}(\vecy|\vecw).
	\label{eq:bit_ML_polar}
\end{equation}
The channels $\W^{(i)}_{\boldsymbol{\mathsf{G}}^{[m]}}$, $i\in[n]$, polarize, i.e, the fraction of channels with $I\left(\W^{(i)}_{\boldsymbol{\mathsf{G}}^{[m]}}\right)>1-\delta$ goes to $I\left(\W\right)$ and the fraction of channels with $I\left(\W^{(i)}_{\boldsymbol{\mathsf{G}}^{[m]}}\right)<\delta$ to $1-I\left(\W\right)$ for any $\delta\in(0,1)$ as $n\rightarrow\infty$\cite[Theorem 1]{arikan2009channel}.}

\ed{A generator matrix $\G$} is obtained by removing the rows of \ed{$\boldsymbol{\mathsf{G}}^{[m]}$} with indices in $\mathcal A$, where $\mathcal A$ is the set containing the indices of the frozen bits. We refer to the matrix \ed{$\boldsymbol{\mathsf{G}}^{[m]}$} as the \ed{polar transform}, from which the desired polar code is derived. Encoding can be performed by multiplying the $k$-bit message $\vecu$ by $\G$, i.e., $\vecx = \vecu\G$. Equivalently, the encoding process can be described via matrix \ed{$\boldsymbol{\mathsf{G}}^{[m]}$}. In this case, an $n$-bit vector \ed{$\vecw$} has to be defined, where $\ed{w_i}=0$ for all $i\in \mathcal A$ and the remaining $k$ elements of \ed{$\vecw$} carry information. Encoding is then performed as $\vecx=\ed{\vecw\boldsymbol{\mathsf{G}}^{[m]}}$.

It was already mentioned in \cite{arikan2009channel} that generalizations of polar codes \mc{are} possible by choosing different kernels than $\GK_{2}$ \ed{and those kernels can even} be mixed. Later, the conditions for polarizing kernels \mc{were} provided in \cite{KSU10} and corresponding error exponents \mc{were} derived. Then, \cite{benammar_land} extended the error exponent derivation \mc{to} the constructions mixing kernels as \cite{arikan2009channel} suggested\mc{, while} \cite{gabry_land} provided \mc{examples of} constructions using this approach, namely multi-kernel polar codes. In the following, we \ed{study the relations between \ac{SPC} product codes and multi-kernel polar codes. In particular, we show how the kernels and the frozen bits can be chosen so that the tools of multi-kernel polar codes can be used in the description of \ac{SPC} product codes.}
	\section{\ed{Relations Between Single Parity-Check Product Codes and Multi-Kernel Polar Codes}}\label{sec:spcpc_as_pc}

\ed{We consider $(n_\ell\times n_\ell)$ kernels $\GK_{n_\ell}$, where $n_\ell\geq 2$, $\ell\in[m]$,} of the form
\begin{equation}
	\GK_{n_\ell} = \begin{bmatrix}
	1 & 0 & \dots & 0 \\
	1 & & & \\
	\vdots & & \I_{k_\ell} & \\
	1 & & &
	\end{bmatrix}
	\label{eq:kernel}
\end{equation}
\ed{with $k_\ell = n_\ell - 1$.} Similar to \eqref{eq:generator_polar}, \ed{an $n^{[m]} \times n^{[m]}$ transform matrix $\boldsymbol{\mathsf{G}}^{[m]}$ is obtained, recursively,} as
\ed{\begin{equation}
	\boldsymbol{\mathsf{G}}^{[m]} = \left(\I_{n^{[m-1]}}\otimes\GK_{n_{m}}\right)\Per_{n^{[m-1]},n_{m}}\left(\I_{n_{m}}\otimes\boldsymbol{\mathsf{G}}^{[m-1]}\right)
	\label{eq:bigGpolar}
\end{equation}
where $\G^{[0]} \triangleq 1$. The proof of the following lemma is given as appendix.
\begin{lemma}\label{lemma:polarization}
    The multi-kernel construction \eqref{eq:bigGpolar}, with a sequence of kernels of the form \eqref{eq:kernel}, polarizes. More formally, the fraction of channels with $I\left(\W^{(i)}_{\boldsymbol{\mathsf{G}}^{[m]}}\right)>1-\delta$ goes to $I\left(\W\right)$ and the fraction of channels with $I\left(\W^{(i)}_{\boldsymbol{\mathsf{G}}^{[m]}}\right)<\delta$ goes to $1-I\left(\W\right)$ for any $\delta\in(0,1)$ as $m\rightarrow\infty$. This holds for any  sequence $n_1, n_2,\ldots$ where each $n_i$ satisfies $2\leq n_i<\infty$.
\end{lemma}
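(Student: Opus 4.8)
The plan is to follow the classical martingale proof of channel polarization, adapting the bookkeeping to the fact that a different kernel $\GK_{n_\ell}$ is used at level $\ell$ of the recursion \eqref{eq:bigGpolar}. First I would describe the one-step bit-channel transformation of a single kernel $\GK_{n_\ell}$ of the form \eqref{eq:kernel}. With $\vecx = \vecw\GK_{n_\ell}$ one has $x_1 = w_1\oplus w_2\oplus\cdots\oplus w_{n_\ell}$ and $x_j = w_j$ for $j = 2,\ldots,n_\ell$, so the first synthesized channel is the check-node combination $\W^{\boxplus n_\ell}$ of $n_\ell$ independent copies of $\W$, while for $i\geq 2$ the $i$-th synthesized channel is the variable-node combination of $\W$ with $\W^{\boxplus(n_\ell-i+1)}$; in particular the last one is $\W\circledast\W$. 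Two consequences matter: (i) since $\GK_{n_\ell}$ is invertible over $\mathbbm{F}_2$, the transformation conserves mutual information on average, $\sum_{i=1}^{n_\ell} I\bigl(\W^{(i)}_{\GK_{n_\ell}}\bigr) = n_\ell\, I(\W)$ (chain rule with uniform inputs), and the children's Bhattacharyya parameters satisfy $Z(\W\circledast\W) = Z(\W)^2$ exactly and, together with the extremal bound for the check-combined children, average to at most $Z(\W)$; (ii) by iterating the data-processing inequality $I(\W'\boxplus\W'')\leq I(\W')$ one gets $I(\W^{\boxplus n_\ell})\leq I(\W^{\boxplus 2})\leq I(\W)$, and the single-kernel analysis shows $I(\W) - I(\W^{\boxplus 2})$ is strictly positive whenever $I(\W)\notin\{0,1\}$ --- importantly, the resulting lower bound on $I(\W)-I(\W^{\boxplus n_\ell})$ is uniform in $n_\ell\geq 2$, the worst case being $n_\ell = 2$, where the step reduces to the standard $2\times 2$ polar kernel $\GK_2$.

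Next I would set up the polarization martingale. The recursion \eqref{eq:bigGpolar} arranges the $n^{[m]}$ bit channels of $\boldsymbol{\mathsf{G}}^{[m]}$ as the leaves of a depth-$m$ tree in which each node at level $\ell-1$ branches into $n_\ell$ children through the transformation of $\GK_{n_\ell}$, exactly as \eqref{eq:generator_polar} does for the binary polar tree; I would first make this identification precise by manipulations paralleling \eqref{eq:generator_equivalence1}--\eqref{eq:generator_equivalence5}, so that ``the $i$-th bit channel of $\boldsymbol{\mathsf{G}}^{[m]}$'' is unambiguous. Drawing a leaf uniformly at random produces a random channel path $\W = \W_0, \W_1,\ldots$; by (i), $I_m \triangleq I(\W_m)$ is a martingale with values in $[0,1]$, so it converges almost surely and in $L^2$ to some $I_\infty$, and $Z_m \triangleq Z(\W_m)$ is a bounded supermartingale, so it converges almost surely to some $Z_\infty$.

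It then remains to show $I_\infty\in\{0,1\}$ almost surely. For this, observe that the ``good'' child (the last one, which squares the Bhattacharyya parameter) is selected with probability $1/n_{m+1}$ at each step; when the kernel sizes are bounded this probability is bounded away from $0$, so by the conditional Borel--Cantelli lemma the last child is chosen infinitely often along almost every path, whence $Z_\infty = Z_\infty^2$, i.e. $Z_\infty\in\{0,1\}$ and therefore $I_\infty\in\{0,1\}$. The rest is routine: the fraction of indices $i$ with $I\bigl(\W^{(i)}_{\boldsymbol{\mathsf{G}}^{[m]}}\bigr) > 1-\delta$ equals $\Pr[I_m > 1-\delta]$, which tends to $\Pr[I_\infty = 1] = \mathbb{E}[I_\infty] = I(\W)$ --- the middle equality because $I_\infty$ is $\{0,1\}$-valued, the last by the martingale property --- and symmetrically the fraction with $I < \delta$ tends to $1 - I(\W)$.

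The step I expect to be the main obstacle is making this last argument uniform over an arbitrary sequence with only $2\leq n_i<\infty$, so that the kernel sizes may grow without bound. Then the ``good'' child is taken with vanishing probability $1/n_{m+1}$, and both the conditional-Borel--Cantelli argument and the cruder route through the $i=1$ term of $\mathbb{E}\bigl[(I_{m+1}-I_m)^2\mid\mathcal{F}_m\bigr] = \tfrac{1}{n_{m+1}}\sum_i\bigl(I(\W_m^{(i)})-I_m\bigr)^2$ are diluted by that factor. Closing the gap requires exploiting the full spread of the $n_{m+1}$ synthesized channels at once --- for instance, lower bounding $\tfrac{1}{n_{m+1}}\sum_i\bigl(I(\W_m^{(i)})-I_m\bigr)^2$, equivalently the per-step spread of $Z$, by a strictly positive function of $I_m$ alone that does not degrade with $n_{m+1}$, using the conservation law (i) and the comparisons between $I$ and $Z$ under check- and variable-node combining. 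Everything else is the standard polarization toolbox; the special structure of the kernels \eqref{eq:kernel} enters only through the single-kernel facts (i) and (ii).
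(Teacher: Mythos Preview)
Your route differs substantially from the paper's. You build the polarization martingale from scratch and try to force $I_\infty\in\{0,1\}$ via the Bhattacharyya identity $Z(\W\circledast\W)=Z(\W)^2$ and Borel--Cantelli. The paper instead reduces everything to a single-kernel criterion from \cite[Theorem~1]{benammar_land}: it suffices that for each kernel $\GK_{n_\ell}$ there exist $\alpha,\beta>0$ with $\bigl|H(U_i\mid Y_1^{n_\ell},U_1^{i-1})-H(\W)\bigr|\ge (1-H(\W))^\alpha H(\W)^\beta$ for every $i\in[n_\ell]$. The verification is then purely information-theoretic: Mrs.\ Gerber's Lemma, combined with two one-line inequalities for $H_2(a\ast b)$, gives $H(U_i\mid Y_1^{n_\ell},U_1^{i-1})-H(\W)\le -(1-H(\W))^{n_\ell-1}H(\W)^2$ for all $i\ge 2$, and the $i=1$ child inherits the matching lower bound from the conservation law. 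So the paper's key tool is Mrs.\ Gerber's Lemma, not martingale convergence or Bhattacharyya parameters; the entire martingale layer is outsourced to \cite{benammar_land}. Your approach is more self-contained and makes the tree structure explicit; the paper's is shorter and delivers a concrete per-step entropy spread.

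On the obstacle you flag for unbounded $n_\ell$: your proposed fix---a lower bound on $\tfrac{1}{n_{m+1}}\sum_i\bigl(I(\W_m^{(i)})-I_m\bigr)^2$ independent of $n_{m+1}$---cannot work for these kernels. By conservation, $\sum_{i\ge 2}\bigl(I(\W^{(i)})-I(\W)\bigr)=I(\W)-I(\W^{(1)})\le I(\W)$, and since each summand lies in $[0,1]$ the sum of squares is also at most $I(\W)$; adding the $i=1$ term keeps the total $O(1)$, hence the average is $\Theta(1/n_\ell)$. Intuitively only the first child and the last handful differ appreciably from $\W$; the bulk of the children have a nearly useless extrinsic $\W^{\boxplus(n_\ell-i+1)}$ and sit close to $\W$. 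The paper's bound also carries an $n_\ell$-dependent exponent $\alpha=n_\ell-1$; what it gains is that the criterion of \cite{benammar_land} is stated per kernel, so the passage to an arbitrary sequence is absorbed by that reference rather than by a uniform-in-$n_\ell$ estimate you would have to supply yourself.
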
}

\ed{Note that Lemma \ref{lemma:polarization}, upon proving that the rate of convergence is positive (which can be computed via \cite[Theorem 2]{benammar_land} after fixing the kernels), implies that a multi-kernel polar code based on the kernels of the form \eqref{eq:kernel} achieves capacity for general \ac{BMS} channels. In the following, however, we provide a selection procedure for the frozen bits yielding an \ac{SPC} product code, which does not take into account the quality of the synthesized channels. This hinders the possibility to achieve capacity for the \ac{SPC} product codes under \ac{SC} decoding.}

Recall the multibase representation \lasttt{\eqref{eq:multibase}} of a decimal number $i$, denoted by $(i_1i_2\dots i_m)_{n_1n_2\dots n_m}$. Then, \mcc{the} generator matrix $\ed{\G^{[m]}}$ is obtained by choosing set $\mathcal{A}_\text{PC}\subset [n]$ of frozen bits as
\begin{align}
	\mathcal{A}_\text{PC} = \ed{[n]\setminus\left\{i+1\in[n]: i_j \neq 0, \quad\forall j=1,2,\dots,m \right\}}.
	\label{eq:frozen_set}
\end{align}
Note that encoding can be done either by using \eqref{eq:generator_product} as $\vecx = \vecu\ed{\G^{[m]}}$, or by using \eqref{eq:bigGpolar} as $\vecx = \ed{\vecw\boldsymbol{\mathsf{G}}^{[m]}}$ with $\ed{w_i} = 0$ for all $i\in\mathcal{A}_\text{PC}$ and the remaining positions are allocated for the information bits as for polar codes (see Section \ref{sec:polar}). In other words, \eqref{eq:bigGpolar} generalizes \eqref{eq:generator_polar} to generate the mother code for multi-kernel polar codes generated by $m$ kernels $\GK_{n_{\ell}}$ in dimensions \ed{$\ell\in[m]$. Note that \eqref{eq:bigGpolar} recovers \eqref{eq:generator_polar} by setting $n_\ell = 2$ for all $\ell\in[m]$.}
\begin{example}
	Consider \mcc{the} $(3\times 3)$ kernel\mcc{s} 
	\[\mcc{\GK_{n_1} = \GK_{n_2} =} \begin{bmatrix}
	1 & 0 & 0 \\
	1 & 1 & 0 \\
	1 & 0 & 1
	\end{bmatrix}.\]
	We construct $\ed{\boldsymbol{\mathsf{G}}^{[2]}}$ by using \eqref{eq:bigGpolar}, i.e.,
	\[\ed{\boldsymbol{\mathsf{G}}^{[2]}} = \begin{bmatrix}
	1 & 0 & 0 & 0 & 0 & 0 & 0 & 0 & 0 \\
	1 & 0 & 0 & 1 & 0 & 0 & 0 & 0 & 0 \\
	1 & 0 & 0 & 0 & 0 & 0 & 1 & 0 & 0 \\
	1 & 1 & 0 & 0 & 0 & 0 & 0 & 0 & 0 \\
	1 & 1 & 0 & 1 & 1 & 0 & 0 & 0 & 0 \\
	1 & 1 & 0 & 0 & 0 & 0 & 1 & 1 & 0 \\
	1 & 0 & 1 & 0 & 0 & 0 & 0 & 0 & 0 \\
	1 & 0 & 1 & 1 & 0 & 1 & 0 & 0 & 0 \\
	1 & 0 & 1 & 0 & 0 & 0 & 1 & 0 & 1
	\end{bmatrix}.
	\]
	Then, the generator matrix $\ed{\G^{[2]}}$ is constructed as
	\[\ed{\G^{[2]}} = \begin{bmatrix}
	1 & 1 & 0 & 1 & 1 & 0 & 0 & 0 & 0 \\
	1 & 1 & 0 & 0 & 0 & 0 & 1 & 1 & 0 \\
	1 & 0 & 1 & 1 & 0 & 1 & 0 & 0 & 0 \\
	1 & 0 & 1 & 0 & 0 & 0 & 1 & 0 & 1
	\end{bmatrix}\]
	by removing the rows, with indices given by \eqref{eq:frozen_set}, i.e., $\mathcal{A}_\text{PC} = \{1,2,3,4,7\}$, \mcc{as depicted} in Fig. \ref{fig:size9_polar_construction_PC}(a). Equivalently, \mcc{the} generator matrix $\ed{\G^{[2]}}$ can be formed by \mcc{using \eqref{eq:generator_product} after} removing the first rows of the kernels to get \mcc{the generator} matrices $\G_{\ed{1}}$ and $\G_{\ed{2}}$ defining \ac{SPC} \mcc{component} codes, i.e.,
	\[\G_{\ed{1}} = \G_{\ed{2}} = \begin{bmatrix}
	1 & 1 & 0 \\
	1 & 0 & 1
	\end{bmatrix}.\]
	\begin{figure}    
		\centering
		\begin{subfigure}{0.5\textwidth}
			\includegraphics[width=1\textwidth]{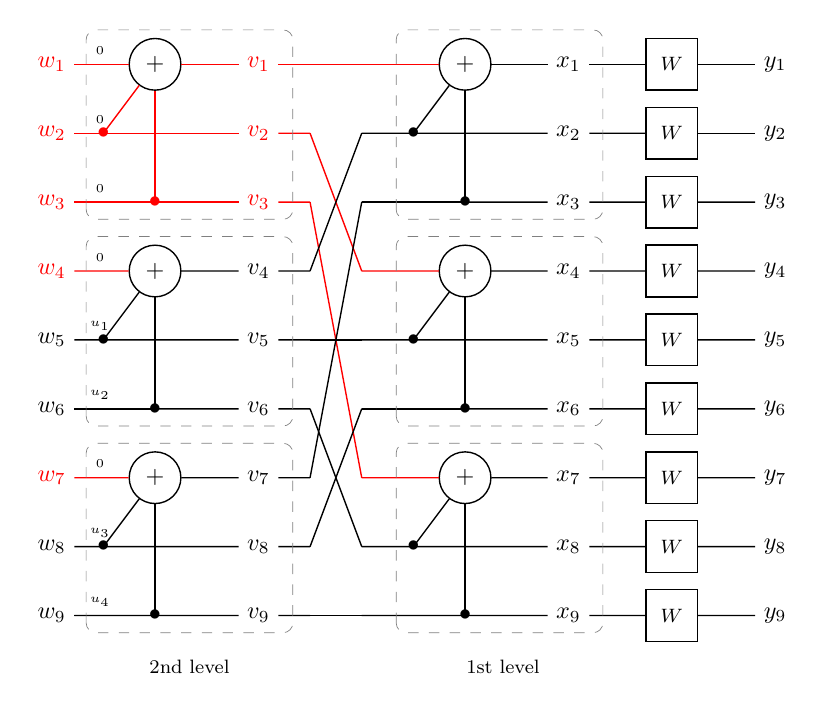}
			\vspace{-0.5cm}
			\caption{}
			%	\label{fig:size9_polar_construction_PCa}
		\end{subfigure}
		\begin{subfigure}{0.5\textwidth}
			\includegraphics[width=1\textwidth]{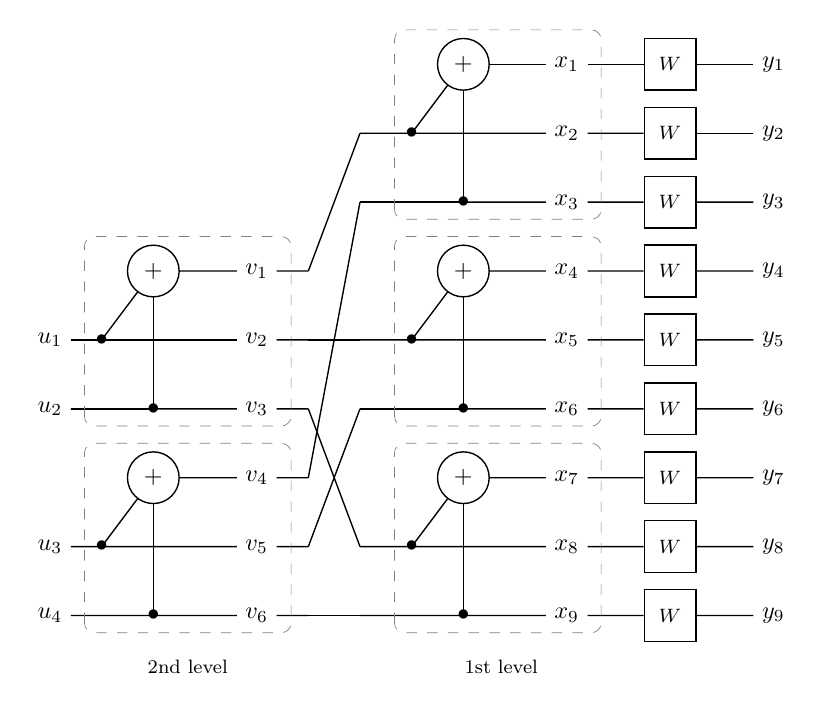}
			\vspace{-0.5cm}
			\caption{}
		\end{subfigure}
		\caption{(a) \ed{illustrates} how to choose the frozen bits \ed{to obtain} a SPC product code, where red edges show those carrying frozen bit values and red variables \ed{are set to} $0$. In (b), the frozen bits and the corresponding edges \ed{are removed, providing a graphical representation of a} $2$-dimensional SPC product code \ed{as in Fig. \ref{fig:transmission_recursive}}.}
		\label{fig:size9_polar_construction_PC}
	\end{figure}
	\label{ex:construction}
\end{example}
	\section{Successive Cancellation Decoding of Single Parity-Check Product Codes}
\label{sec:sc}
Consider transmission over a \ac{BMS} channel $ \W $ using \mcc{an} $m$-dimensional $(n,k)$ \mcc{systematic} \ac{SPC} product code $\code$ with generator matrix $\ed{\G^{[m]}}$ \mcc{as in \eqref{eq:generator_product}, with component code generator matrices $\G_{\ed{m}}$ obtained by removing the first row of the kernels of the form \eqref{eq:kernel} (see Example \ref{ex:construction})}. Assume now that one is interested in the likelihood of $u_i$ upon observing \ed{the channel output $ \vecy \in \mathcal{Y}^n$} and given the knowledge of $u_1^{i-1}$
\begin{equation}
	\W^{(i)}_{\ed{\G^{[m]}}}(\vecy,u_1^{i-1}|u_i) \triangleq \sum_{u_{i+1}^k\in\mathcal{X}^{k-i}} \frac{1}{2^{k-1}}\W_{\ed{\G^{[m]}}}(\vecy|\vecu).
	\label{eq:bit_ML}
\end{equation}
\ed{Assume further that we interpret the \ac{SPC} product code by the multi-kernel polar code perspective discussed in Sec. \ref{sec:spcpc_as_pc} as depicted in Fig. \ref{fig:size9_polar_construction_PC}(a). The evaluation of \eqref{eq:bit_ML} entails the use of the knowledge of \emph{all} frozen bits, i.e., all $w_j$ with $j\in\mathcal{A}_{\mathrm{PC}}$, including the ones with indices larger than the bit index under consideration, e.g., the knowledge of $w_7 = 0$ to compute \eqref{eq:bit_ML} for $i=1$ or $i=2$.\footnote{\ed{In fact, note that this suboptimality is encountered also in polar codes for the decoding of the $i$-th bit \ed{$w_i$} because of the assumption that the future frozen bits $\{\ed{w_j}: j\in\mathcal{A}, i<j\leq n\}$ are uniform \acp{RV} \ed{(see Fig. \ref{fig:size9_polar_construction_PC}(b))} although they are deterministic and known.}} \ac{SC} decoding allows evaluating \eqref{eq:bit_ML} with good accuracy. The recursive operation of the \ac{SC} decoder can be easily described by means of the representation in Fig. \ref{fig:size9_polar_construction_PC}(b).}

\ac{SC} decoding follows the schedule in \cite{arikan2009channel,stolte2002rekursive} for polar codes. Explicitly, decision $\vecuhat$ is made successively as
\begin{equation}
\ed{\hat{u}_i(\vecy,\hat{u}_1^{i-1})} = \left\{\begin{array}{lll}
0 &\text{if}& \W^{(i)}_{\ed{\G^{[m]}}}(\vecy,\hat{u}_1^{i-1}|0) \geq \W^{(i)}_{\ed{\G^{[m]}}}(\vecy,\hat{u}_1^{i-1}|1) \\
1 &\text{if}& \text{otherwise}
\end{array}
\right.
\label{eq:decision2}
\end{equation}
for $i=1,\dots,k$ by approximating $\W^{(i)}_{\ed{\G^{[m]}}}(\vecy,\ed{u}_1^{i-1}|u_i)$ \ed{recursively as follows: Let $\vecy^{[m]}\triangleq y_1^{n^{[m]}}$ be the channel output vector for the $m$-dimensional \ac{SPC} product code. Assume $\vecy^{[m]}$ to be partitioned into $n_m$ blocks of length-$n^{[m-1]}$, where the $j$-th block is denoted as $\vecy_j^{[m]}\triangleq y_{jn^{[m-1]}+1}^{(j+1)n^{[m-1]}}$, $j=0,1,\ldots,k_m$ (see Fig. \ref{fig:transmission_recursive}). Then, the recursion to compute the likelihoods is} \own{given as \eqref{eq:recurs_metric} at the top of the next page and it is} continued down to $\W_{1}^{(1)}(y_i|x_i) \triangleq \W(y_i|x_i)$.
\begin{figure}
	\begin{center}
		\includegraphics[width=0.9\columnwidth]{./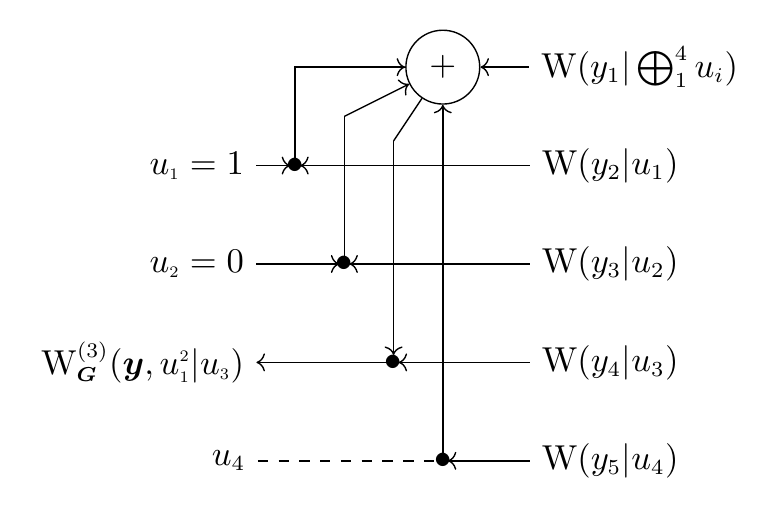}
	\end{center}
	\vspace{-0.5cm}
	\caption{\ed{The decoding graph of a $(5,4)$ \ac{SPC} code illustrating the \ac{SC} decoding equation \eqref{eq:recurs_metric} for bit $u_3$ with $j = 0$ and $t = 3$. Note that the arrows denote the flow of the quantities on the graph. The filled circles denote the equality constraints while the summation sign in the circle denote a parity check constraint. The knowledge of the bits $u_{1}^{2} = (1,0)$ is available while the upcoming bit $u_{4}$ is assumed to be uniformly distributed, illustrated with a dashed line.}}
	\label{fig:local_dec_sc}
\end{figure}
\begin{figure*}[t]
%	\begin{strip}
		\footnotesize
        \begin{align}
	        &\W^{(i)}_{\ed{\G^{[m]}}}\left(\ed{\vecy^{[m]}},\ed{u}_1^{i-1}|u_i\right) \approx \sum_{u_{i+1}^{(j+1)k_{m}}}\frac{1}{2^{k_{m}-1}}\W^{(j+1)}_{\ed{\G^{[m-1]}}}\left(\ed{\vecy_0^{[m]}},\bigoplus_{i^\prime = 1}^{k_m} \ed{u}_{i^\prime,k_m}^{jk_m}\Big|\bigoplus_{i^\prime = 1}^{k_m} u_{jk_m+i^\prime}\right)\prod_{j^\prime=1}^{k_m}\W^{(j+1)}_{\ed{\G^{[m-1]}}}\left(\ed{\vecy_{j^\prime}^{[m]}},\ed{u}_{j^\prime,k_m}^{jk_m}\Big|u_{jk_m+j^\prime}\right)\label{eq:recurs_metric1} \\
	        &= \!\prod_{j^\prime=1}^{t}\!\W^{(j+1)}_{\ed{\G^{[m-1]}}}\!\left(\vecy_{j^\prime}^{[m]},\ed{u}_{j^\prime,k_m}^{jk_m}\Big|u_{jk_m+j^\prime}\right)\!\sum_{u_{i+1}^{(j+1)k_{m}}}\!\frac{1}{2^{k_{m}-1}}\W^{(j+1)}_{\ed{\G^{[m-1]}}}\!\left(\vecy_{0}^{[m]},\bigoplus_{i^\prime = 1}^{k_m} \ed{u}_{i^\prime,k_m}^{jk_m}\Big|\bigoplus_{i^\prime = 1}^{k_m} u_{jk_m+i^\prime}\!\right)\!\prod_{j^\prime= t+1 }^{k_m}\!\W^{(j+1)}_{\ed{\G^{[m-1]}}}\!\left(\vecy_{j^\prime}^{[m]},\ed{u}_{j^\prime,k_m}^{jk_m}\Big|u_{jk_m+j^\prime}\!\right) \label{eq:recurs_metric}
        \end{align}
		where $j = \lfloor\frac{i-1}{k_m}\rfloor$ and $t = [(i-1)\!\!\mod k_m] +1$.
		\newline\decoRule
%	\end{strip}
\end{figure*}

\ed{To gain insight on \eqref{eq:recurs_metric1} and \eqref{eq:recurs_metric}, let us consider the simple case of a length-$5$ \ac{SPC} code with generator matrix $\G$. As illustrated in Fig. \ref{fig:local_dec_sc}, suppose that we are  interested in the likelihood $\W^{(3)}_{\G}(\vecy,u_1^{2}|u_3)$, for every $u_3\in\{0,1\}$, by assuming that the previous bits are given as $u_1^2 = (1,0)$. Using \eqref{eq:recurs_metric}, the computation is performed as
\begin{align}
	\W^{(3)}_{\G}(\vecy,u_1^{2}|u_3) &= \prod_{j^\prime=1}^{3}\W_{1}^{(1)}\left(y_{j^\prime+1}|u_{j^\prime}\right)\cdot\\
	&\sum_{u_{4}}\!\frac{1}{2^{3}}\W_{1}^{(1)}\!\left(y_{1}| u_1\!\oplus \!u_2\!\oplus\! u_3\!\oplus\! u_4\!\right)\!\W_{1}^{(1)}\!\left(y_{5}|u_{4}\!\right) \\
	&=\W\left(y_{2}|1\right)\W\left(y_{3}|0\right)\W\left(y_4|u_3\right)\cdot\\
	&\sum_{u_{4}}\frac{1}{2^{3}}\W\left(y_{1}| 1\oplus u_3\oplus u_4\right)\W\left(y_{5}|u_{4}\right) \label{eq:recurs_metric_SPC}
\end{align}
where \eqref{eq:recurs_metric_SPC} follows by plugging in the values of bits $u_1^2$ and noting that the recursion ends at $\W_{1}^{(1)}(y_i|x_i) = \W(y_i|x_i)$, where $x_1 = \bigoplus_{i=1}^4 u_i$ and $x_{i+1} = u_i$, $i=1,2$.\footnote{\ed{Note also that \eqref{eq:recurs_metric_SPC} computes the desired probability exactly since there is no future frozen bit in the polar code representation of an \ac{SPC} code while decoding any information bit. This turns out to be an approximation for the case of an \ac{SPC} product code, i.e., when $m\geq 2$.}}}

Over the \ac{BEC}, ties are not broken towards any decision by revising \eqref{eq:decision2} as
\begin{equation}
\ed{\hat{u}_i(\vecy,\hat{u}_1^{i-1})} = \left\{\begin{array}{lll}\vspace{1mm}
0 &\text{if}& \W^{(i)}_{\ed{\G^{[m]}}}(\vecy,\hat{u}_1^{i-1}|0) > \W^{(i)}_{\ed{\G^{[m]}}}(\vecy,\hat{u}_1^{i-1}|1) \\\vspace{1mm}
\last{?} &\text{if}& \W^{(i)}_{\ed{\G^{[m]}}}(\vecy,\hat{u}_1^{i-1}|0) = \W^{(i)}_{\ed{\G^{[m]}}}(\vecy,\hat{u}_1^{i-1}|1) \\
1 &\text{if}& \mcc{\text{otherwise.}}
\end{array}
\right.
\label{eq:decision_BEC}
\end{equation}
\ed{for $i=1,\dots,k$}. A block error event occurs if $ \vecuhat \neq \vecu $\ed{, where \[\vecuhat\triangleq (\hat{u}_1(\vecy),\hat{u}_2(\vecy,\hat{u}_1),\ldots,\hat{u}_k(\vecy,\hat{u}_1^{k-1})).\]}

The event that the decoding of $u_i$ is erroneous under \ac{SC} decoding, for which the knowledge of $u_1^{i-1}$ is available at the decoder via a genie, is defined as
\begin{equation*}
	\mathcal{B}_{\SC,i}^{\mathrm{GA}} \triangleq \{(\vecu, \vecy)\in\mathcal{X}^k\times\mathcal{Y}^n: \hat{u}_i(\vecy,u_1^{i-1}) \neq u_i\}.
\end{equation*}
Then, the block error event of the \ac{SC} decoding is equal to that of the genie-aided \ac{SC} decoding as stated in the following lemma. The proof is skipped as it can be easily \lasttt{derived} from\ed{\cite[Lemma 1]{Mori:2009SIT}}.
	\begin{lemma}
		\label{lem:block_error}
		The block error event for the \ac{SC} decoder satisfies
		\[\mathcal{E}_{\SC} = \bigcup_{i=1}^{k} \mathcal{B}_{\SC,i}^{\mathrm{GA}}.\]
	\end{lemma}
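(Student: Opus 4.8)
The plan is to show the two directions of set equality $\mathcal{E}_{\SC} = \bigcup_{i=1}^{k} \mathcal{B}_{\SC,i}^{\mathrm{GA}}$, exactly as in the classical polar-code argument. Recall that $\mathcal{E}_{\SC} = \{(\vecu,\vecy) : \vecuhat \neq \vecu\}$, where $\vecuhat$ is produced by the genuine SC decoder fed with its own past decisions $\hat{u}_1^{i-1}$, whereas $\mathcal{B}_{\SC,i}^{\mathrm{GA}}$ is the event that the $i$-th decision is wrong assuming the \emph{true} past $u_1^{i-1}$ was handed to the decoder by a genie. The inclusion $\bigcup_{i=1}^{k} \mathcal{B}_{\SC,i}^{\mathrm{GA}} \subseteq \mathcal{E}_{\SC}$ is the easy direction: if on some pair $(\vecu,\vecy)$ the genie-aided decision for some index $i$ is already wrong, then let $i^\star$ be the smallest such index. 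For all $j < i^\star$ the genie-aided decisions agree with $\vecu$, and since the genuine SC decoder and the genie-aided decoder use identical decision rules \eqref{eq:decision2} (or \eqref{eq:decision_BEC} over the BEC) and identical inputs up to index $i^\star$, a straightforward induction on $j$ shows $\hat{u}_j = u_j$ for $j < i^\star$; hence at index $i^\star$ the genuine decoder is in the same state as the genie-aided one and therefore also errs, so $\vecuhat \neq \vecu$ and $(\vecu,\vecy)\in\mathcal{E}_{\SC}$.

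For the reverse inclusion $\mathcal{E}_{\SC} \subseteq \bigcup_{i=1}^{k} \mathcal{B}_{\SC,i}^{\mathrm{GA}}$, I would argue by contraposition. Suppose $(\vecu,\vecy) \notin \mathcal{B}_{\SC,i}^{\mathrm{GA}}$ for every $i\in[k]$, i.e., every genie-aided decision is correct. Then an induction on $i$ establishes that the genuine SC decoder also produces $\hat{u}_i = u_i$ for all $i$: the base case $i=1$ holds because for $i=1$ there is no past and the two decoders coincide; for the inductive step, if $\hat{u}_1^{i-1} = u_1^{i-1}$ then the genuine decoder computes exactly the same likelihood $\W^{(i)}_{\G^{[m]}}(\vecy,u_1^{i-1}|\cdot)$ as the genie-aided decoder, applies the same decision rule, and hence outputs $\hat{u}_i = u_i$ by the assumption that the genie-aided decision is correct. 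Therefore $\vecuhat = \vecu$ and $(\vecu,\vecy) \notin \mathcal{E}_{\SC}$, which is the desired contrapositive.

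The only point requiring a little care — and the closest thing to an obstacle — is that both $\mathcal{E}_{\SC}$ and the $\mathcal{B}_{\SC,i}^{\mathrm{GA}}$ must be defined with respect to the \emph{same} set of likelihood approximations, namely the recursive metrics \eqref{eq:recurs_metric1}--\eqref{eq:recurs_metric} used by the actual SC decoder rather than the exact \eqref{eq:bit_ML}. As long as the decision functions $\hat{u}_i(\vecy,\cdot)$ appearing in both definitions are literally the same map from (observation, past bits) to a decision, the inductive coupling argument above goes through verbatim regardless of whether that map is based on exact or approximate likelihoods; the approximation affects \emph{which} pairs end up in the error events but not the identity between the two event descriptions. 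Over the BEC one simply uses the ternary rule \eqref{eq:decision_BEC} in place of \eqref{eq:decision2}, with erasure outputs treated as a distinct value so that $\hat{u}_i = u_i$ continues to mean a correct (non-erased, matching) decision; the argument is unchanged. Since this is precisely the structure of \cite[Lemma 1]{Mori:2009SIT}, I would simply invoke that result with the substitutions above rather than reproduce the induction in full.
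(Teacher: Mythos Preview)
Your proposal is correct and matches the paper's approach: the paper does not give a proof at all but simply states that it ``can be easily derived from \cite[Lemma 1]{Mori:2009SIT},'' and your two-direction inductive coupling argument is precisely that derivation spelled out. Your remark about the decision functions being the same map (approximate or not) in both event definitions is a worthwhile clarification that the paper leaves implicit.
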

\ed{The block error probability under \ac{SC} decoding, denoted by $P_{\SC}$, is defined as $P_{\SC}\triangleq P(\mathcal{E}_{\SC})$ and it is bounded as
\begin{equation}
	\label{eq:SC_bound}
	\max_{i=1,\ldots,k} P(\mathcal{B}_{\SC,i}^{\mathrm{GA}}) \leq P_{\SC} \leq\sum_{i=1}^{k} P(\mathcal{B}_{\SC,i}^{\mathrm{GA}})
\end{equation}
where the upper bound follows from the straightforward application of the union bound.}
\begin{remark}
	Assume now that one is interested in the likelihood of $u_i$ upon observing $\vecy$ (by not imposing any order in the decoding)
	\begin{equation}
		\W^{(i)}_{\ed{\G^{[m]}}}(\vecy|u_i) \triangleq \sum_{\vecu_{\sim i}\in\mathcal{X}^{k-1}} \frac{1}{2^{k-1}}\W_{\ed{\G^{[m]}}}(\vecy|\vecu).
		\label{eq:bit_ML_elias}
	\end{equation}
	Observe that \eqref{eq:bit_ML_elias} corresponds to the bit-wise \ac{ML} function of \mc{the} information bit $u_i$. \ed{Computing \eqref{eq:bit_ML_elias} is hard in general}. Elias' decoder \cite{Elias:errorfreecoding54} tackles the problem as follows: the likelihoods of \mc{the} bits $u_i$ \eqref{eq:bit_ML_elias} are approximated starting from the first dimension up to the last one in a one-sweep fashion. \ed{Let $\vecu^{[m]}\triangleq u_1^{k^{[m]}}$ be the $k$-bit information to be encoded via $m$-dimensional \ac{SPC} product code and it is divided into $k^{[m-1]}$ blocks of length $k_m$, where $j$-th block is denoted as $\vecu_{j}^{[m]}\triangleq u_{jk_m+1}^{(j+1)k_m}$, $j=0,\ldots,k^{[m-1]}-1$ (see Fig. \ref{fig:transmission_recursive}).}
	For the computation, \eqref{eq:recurs_metric} is revised as
	\begin{figure*}[t]
%	\begin{strip}
		\footnotesize
        	\begin{align}
		        \W^{(i)}_{\ed{\G^{[m]}}}\left(\ed{\vecy^{[m]}}|u_i\right) &\approx \sum_{\ed{(\vecu_{j}^{[m]})_{\sim i}}}\frac{1}{2^{k_m-1}}\W^{(j+1)}_{\ed{\G^{[m-1]}}}\left(\ed{\vecy_0^{[m]}}\Big|\bigoplus_{i^\prime = 1}^{k_m} u_{jk_m+i^\prime}\right)\prod_{j^\prime=1}^{k_m}\W^{(j+1)}_{\ed{\G^{[m-1]}}}\left(\ed{\vecy_{j^\prime}^{[m]}}\Big|u_{jk_m+j^\prime}\right)\\
	        	&\ed{=\W^{(j+1)}_{\ed{\G^{[m-1]}}}\!\left(\ed{\vecy_{t}^{[m]}}\Big|u_{jk_m+t}\right)\sum_{\ed{(\vecu_{j}^{[m]})_{\sim i}}}\frac{1}{2^{k_m-1}}\W^{(j+1)}_{\ed{\G^{[m-1]}}}\left(\ed{\vecy_{0}^{[m]}}\Big|\bigoplus_{i^\prime = 1}^{k_m} u_{jk_m+i^\prime}\right)\prod_{j^\prime=1,j^\prime\neq t}^{k_m}\W^{(j+1)}_{\ed{\G^{[m-1]}}}\!\left(\ed{\vecy_{j^\prime}^{[m]}}\Big|u_{jk_m+j^\prime}\right)}
	            \label{eq:recurs_elias_metric}
        	\end{align}
		where $j = \lfloor\frac{i-1}{k_m}\rfloor$ and $t = [(i-1)\!\!\mod k_m] +1$.
		\newline\decoRule
%	\end{strip}
    \end{figure*}
	\own{\eqref{eq:recurs_elias_metric}}. On contrary to \ac{SC} decoding, the summation is over all information bits of the local SPC code in the $m$-th level except for $u_i$, i.e., over all $(\vecu_{j}^{[m]})_{\sim i}\in\{0,1\}^{k_m-1}$, which is computed for both values of $u_i$. In other words, the decoder does not make use of any decision on information bits to decode another one. This enables fully parallel computation of the likelihoods \eqref{eq:recurs_elias_metric} for each bit.
	
	\ed{To better illustrate the difference to the \ac{SC} decoding, we consider the Elias' decoding of bit $u_3$ in Fig. \ref{fig:local_dec_ed} for the case of $(5,4)$ \ac{SPC} code. In particular, we are interested in the likelihoods $\W^{(3)}_{\G}(\vecy|u_3)$, for every $u_3\in\{0,1\}$. Using \eqref{eq:recurs_elias_metric}, we compute
    \begin{align}
	    &\W^{(3)}_{\G}(\vecy|u_3) = \W_{1}^{(1)}\left(y_{4}|u_{3}\right)\cdot\\
	    &\,\,\,\,\sum_{\own{\vecu_{\sim 3}}}\frac{1}{2^{3}}\W_{1}^{(1)}\left(y_{1}| u_1\oplus u_2\oplus u_3\oplus u_4\right)\prod_{j^\prime = 1,\,j^\prime \neq 3}^{4}\W_{1}^{(1)}\left(y_{j^\prime+1}|u_{j^\prime}\right) \\
	    &=\!\W\!\left(y_4|u_3\!\right)\!\sum_{\own{\vecu_{\sim 3}}}\frac{1}{2^{3}}\W\!\left(y_{1}| 1\oplus u_3\oplus u_4\right)\!\own{\prod_{j^\prime = 1,\,j^\prime \neq 3}^{4}\!\W\left(y_{j^\prime+1}|u_{j^\prime}\right)}. \label{eq:recurs_metric_SPC_ed}
    \end{align}}
	\begin{figure}
	\begin{center}
		\includegraphics[width=0.9\columnwidth]{./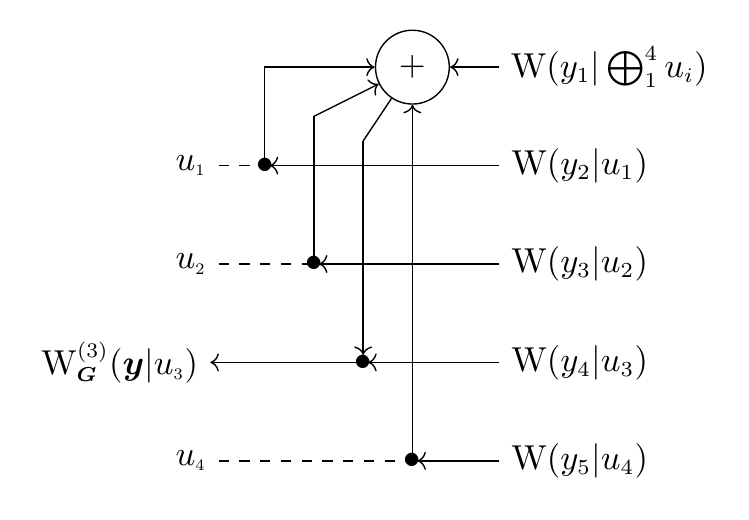}
	\end{center}
	\vspace{-0.5cm}
	\caption{\ed{The decoding graph of a $(5,4)$ \ac{SPC} code illustrating the Elias' decoding equation \eqref{eq:recurs_elias_metric} for bit $u_3$ with $j = 0$ and $t = 3$. Note that the arrows denote the flow of the quantities on the graph. The filled circles denote the equality constraints while the summation sign in the circle denote a parity check constraint. Unlike \ac{SC} decoding, bits $u_{\{1,2,4\}}$ are assumed to be uniformly distributed, illustrated with a dashed line. Hence, Elias' decoding of each bit $u_i$ can be run in parallel.}}
	\label{fig:local_dec_ed}
\end{figure}
	Elias' decoder, then, makes a decision \ed{as
	\begin{equation}
        \hat{u}_i^{(\mathrm{E})}(\vecy) = \left\{\begin{array}{lll}
        0 &\text{if}& \W^{(i)}_{\ed{\G^{[m]}}}(\vecy|0) \geq \W^{(i)}_{\ed{\G^{[m]}}}(\vecy|1) \\
        1 &\text{if}& \text{otherwise}
        \end{array}
        \right.
        \label{eq:decision_elias}
    \end{equation}
    for $i\in[k]$. Over the BEC, the decision function is modified as
    \begin{equation}
        \hat{u}_i^{(\mathrm{E})}(\vecy) = \left\{\begin{array}{lll}\vspace{1mm}
        0 &\text{if}& \W^{(i)}_{\ed{\G^{[m]}}}(\vecy|0) > \W^{(i)}_{\ed{\G^{[m]}}}(\vecy|1) \\\vspace{1mm}
        ? &\text{if}& \W^{(i)}_{\ed{\G^{[m]}}}(\vecy|0) = \W^{(i)}_{\ed{\G^{[m]}}}(\vecy|1) \\
        1 &\text{if}& \text{otherwise.}
        \end{array}
        \right.
        \label{eq:decision_Elias_BEC}
    \end{equation}}
    \ed{A block error event occurs if $ \vecuhat^{(\mathrm{E})} \neq \vecu $, where \[\vecuhat^{(\mathrm{E})}\triangleq (\hat{u}_1^{(\mathrm{E})}(\vecy),\hat{u}_2^{(\mathrm{E})}(\vecy),\ldots,\hat{u}_k^{(\mathrm{E})}(\vecy)).\]}
\end{remark}
	\subsection{Analysis over the Binary Erasure Channel}

In the following, we analyze the \ac{SC} decoder of \ac{SPC} product codes over the \ac{BEC}. We do so to gain a deeper understanding on the behavior of the \ac{SC} decoder when applied to the code construction under investigation. We start by analyzing the behavior of \ed{an $(n,n-1)$} \ac{SPC} code with generator matrix \ed{$\G$ over the \ac{BEC}($\epsilon_{\mathrm{ch}}$)}. We denote by $\ed{\epsilon^{(i)}_{\G}}$ the erasure probability for the $i$-th information bit after \ac{SC} decoding conditioned on the knowledge of the $i-1$ preceding information bits, $i=1,\ldots , n-1$. \ed{When the knowledge of the $i-1$ preceding information bits is available, then the decoding of the $i$-th information bit is successful either when $y_{i+1} = ?$ or there is no erasure in the subvector $(y_1,y_{i+2}^{n})$. Hence,} the relationship between the input-output erasure probabilities is given by
\ed{\begin{equation}
	\ed{\epsilon^{(i)}_{\G}}=\epsilon_{\mathrm{ch}}\left(1-\left(1-\epsilon_{\mathrm{ch}}\right)^{n-i}\right),\quad i\in [n-1].
	\label{eq:erasure_SPC}
\end{equation}}

Based on the relation given in \eqref{eq:erasure_SPC}, we proceed by bounding the performance of an $(n,k)$ \ac{SPC} product code $\code$, via \eqref{eq:SC_bound}. In particular, we can derive the erasure probability associated with the information bit $u_i$ of an \ac{SPC} product code under the genie-aided \ac{SC} decoding by iterating \eqref{eq:erasure_SPC}. \ed{More precisely, for $i \in [k]$, we have the recursion in $m$ as
\begin{equation}
	\epsilon^{(i)}_{\G^{[m]}}=\epsilon^{(j +1)}_{\G^{[m-1]}}\left(1-\left(1-\epsilon^{(j +1)}_{\G^{[m-1]}}\right)^{n_m-t}\right)
	\label{eq:erasure_SPC_PC}
\end{equation}
with $j = \lfloor\nicefrac{(i-1)}{k_m}\rfloor$ and $t = [(i-1)\!\!\mod k_m] +1$ where $k$ information bits are divided into $k^{[m-1]}$ blocks $\vecu_{j}^{[m]}$, $j\in[k^{[m-1]}]$, of size $k_m$ (see Fig. \ref{fig:transmission_recursive}).} Let $\epsilon_i$ be a shorthand for $\ed{\epsilon^{(i)}_{\G^{[m]}}}$, i.e., $\epsilon_i\triangleq\ed{\epsilon^{(i)}_{\G^{[m]}}}$ $i=1,\ldots,k$.\footnote{If the blocklength and the rate of an \ac{SPC} product code is given, then there exists a unique sequence of component codes satisfying the parameters \ed{\eqref{eq:parameters}}. For such a sequence, when the blocklengths of component codes are not equal, then a question is what decoding order should be adopted. The natural approach is to start the decoding from the lowest rate \ac{SPC} component code, i.e., to treat it as the component code in the first level as in Fig. \ref{fig:size9_polar_construction_PC}(b), because a code with a lower rate has a higher error-correction capability. \ed{This ordering has also been verified via numerical computation for an exemplary construction provided as Example \ref{sec:asymptotic1}, where a larger threshold is obtained if the decoding is performed in the reverse order of the component code rates.}}
\ed{Since the \ac{RHS} of \eqref{eq:erasure_SPC_PC} is monotonically increasing in the input erasure probability $\epsilon^{(j +1)}_{\G^{[m-1]}}$ and monotonically decreasing in $t$,} the largest bit erasure probability is equal to that of the first decoded information bit, i.e.,
\begin{equation}
    \epsilon_{\mathrm{max}}\triangleq \max_{i=1,\ldots,k} \epsilon_i = \epsilon_1.
\end{equation}
By rewriting \eqref{eq:SC_bound} in terms of $\mcc{\epsilon}_i$, we obtain
\begin{equation}
	\mcc{\epsilon}_{\mathrm{max}}\leq P_{\SC} \leq \sum_{i=1}^k \mcc{\epsilon}_i \label{eq:UB_Product}.
\end{equation}
A loose upper bound can be obtained by tracking only the largest erasure probability \mcc{for $i=1$}, i.e.,
\begin{equation}
	P_{\SC} \leq k\mcc{\epsilon}_{\mathrm{max}}.
	\label{eq:LooseUB_Product}
\end{equation}

\subsubsection{\GL{Comparison with Elias' Decoder}}

Remarkably, \eqref{eq:erasure_SPC_PC} also describes the evolution of the bit erasure probabilities under Elias' decoder \cite{Elias:errorfreecoding54} by setting $i=1$. The following lemma, together with Theorem \ref{theorem:comparison}, formalizes the relation between the error probability of \ac{SC} decoding and the one of the decoding algorithm proposed by Elias.

%\medskip

\begin{lemma}
	For an \ac{SPC} product code, the erasure probability of the first decoded information bit under \ac{SC} decoding is equal to the erasure probability of each information bit under Elias' decoding. \label{lem:worst_channel}
\end{lemma}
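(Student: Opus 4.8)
The plan is to show that the recursion \eqref{eq:erasure_SPC_PC} governing the bit erasure probability under genie-aided \ac{SC} decoding, when specialized to $i=1$ (equivalently $j=0$, $t=1$), coincides term by term with the recursion describing the erasure probability of each information bit under Elias' decoder. First I would recall that, for an $(n,n-1)$ \ac{SPC} code over the \ac{BEC}($\epsilon_{\mathrm{ch}}$), the \ac{SC} decoder decoding the very first information bit $u_1$ has no genie knowledge available ($u_1^0$ is empty), so its erasure probability is exactly $\epsilon_{\mathrm{ch}}(1-(1-\epsilon_{\mathrm{ch}})^{n-1})$ from \eqref{eq:erasure_SPC} with $i=1$; this is precisely the probability that $y_2$ is erased \emph{and} at least one of the remaining $n-1$ coordinates $(y_1,y_3^n)$ is erased, which is the condition under which the parity check cannot recover $u_1$ even when no side information is used. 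The key observation is that Elias' decoder for a single \ac{SPC} code, by construction (see \eqref{eq:recurs_elias_metric} and \eqref{eq:recurs_metric_SPC_ed}), treats \emph{all} other information bits $u_{\sim i}$ as uniform and unknown, so it recovers $u_i$ iff either $y_{i+1}$ is unerased, or $y_{i+1}$ is erased but every other coordinate is unerased — giving the same expression $\epsilon_{\mathrm{ch}}(1-(1-\epsilon_{\mathrm{ch}})^{n-1})$ for \emph{every} $i$, independent of $i$. This establishes the base case $m=1$.

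Next I would set up the induction on the number of dimensions $m$. Both decoders process the product code dimension by dimension via the same recursive channel structure: the likelihood (or, over the \ac{BEC}, the erasure indicator) passed up from level $m-1$ to level $m$ is, in the \ac{SC} case, the synthesized bit-channel $\W^{(j+1)}_{\G^{[m-1]}}$ with genie knowledge $u^{jk_m}_{\cdot,k_m}$, and in the Elias case the same synthesized channel but with those bits marginalized. The crucial point, which I would make explicit, is that for $i=1$ we have $j=0$, so the genie knowledge $u^{0}_{\cdot,k_m}$ is \emph{empty} — hence at the first decoded bit the \ac{SC} recursion's input channel $\W^{(1)}_{\G^{[m-1]}}$ carries no side information and is therefore identical to the input channel seen by Elias' decoder at level $m-1$. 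By the induction hypothesis, the erasure probability $\epsilon^{(1)}_{\G^{[m-1]}}$ of that channel under \ac{SC} equals the (common) per-bit erasure probability under Elias at dimension $m-1$. Plugging this common value into \eqref{eq:erasure_SPC_PC} with $j=0$, $t=1$ on the \ac{SC} side, and into the analogous Elias recursion (which, as already noted in the paragraph preceding the lemma, is \eqref{eq:erasure_SPC_PC} evaluated at $i=1$) on the Elias side, yields the same output, completing the induction.

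The main obstacle — really the only subtle point — is making the argument about \emph{which} bit-channel feeds the level-$m$ recursion fully rigorous: one must verify that decoding $u_1$ at level $m$ invokes exactly the level-$(m-1)$ channel $\W^{(1)}_{\G^{[m-1]}}$ on each of the $n_m$ constituent blocks $\vecy_j^{[m]}$, with no cross-contamination from genie bits of other blocks, and that Elias' decoder invokes the identical channel. I would handle this by appealing directly to the recursion \eqref{eq:recurs_metric}: with $i=1$ we get $j=0$ and $t=1$, so the product-of-likelihoods term over $j'=1,\dots,t$ is empty, every factor $\W^{(j+1)}_{\G^{[m-1]}} = \W^{(1)}_{\G^{[m-1]}}$ has empty conditioning vector $u^{0}_{j',k_m}$, and the expression reduces to exactly the Elias expression \eqref{eq:recurs_elias_metric} with $i=1$ read off with the same substitutions. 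Over the \ac{BEC} this equality of likelihood expressions translates directly into equality of the erasure events, hence of the erasure probabilities, and the lemma follows. (Strictly, one should also observe that for $i$ with $j=0$ but $t>1$, the \ac{SC} probability is in general smaller than the $i=1$ value because of the monotonicity in $t$ already recorded after \eqref{eq:erasure_SPC_PC}, but that refinement is not needed here — it is what underlies Theorem \ref{theorem:comparison}.)
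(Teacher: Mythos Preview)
Your proposal is correct and follows essentially the same approach as the paper: both arguments observe that Elias' decoder, at each level, uses no knowledge of other information bits and therefore coincides with the \ac{SC} recursion evaluated at $i=1$, and that this identification propagates across all $m$ levels. The paper states this in two sentences without spelling out the induction, whereas you make the inductive structure and the verification that $j=0$, $t=1$ empties the genie conditioning explicit; the underlying idea is the same.
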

\begin{proof}
    \ed{Consider the input-output relation \eqref{eq:erasure_SPC} for an $(n,n-1)$ \ac{SPC} code under \ac{SC} decoding. In the case of Elias' decoding each bit is decoded in parallel without any knowledge of the information bits; therefore, each bit has the same erasure probability. This probability is obtained by setting $i=1$ in \eqref{eq:erasure_SPC} since the decoder does not have access to the knowledge of any information bit in this case. The same applies for the other decoding levels; hence,} the recursion \eqref{eq:erasure_SPC_PC} for $i=1$ also provides the erasure probability of any information bit under Elias' decoding.
\end{proof}
As a result \last{of Lemma \ref{lem:worst_channel}}, the bound \eqref{eq:LooseUB_Product} holds also for \mcc{the block error probability $P_{\E}$ under} Elias' decoding\mcc{, i.e., we have}
\begin{equation} \label{eq:bounds_ED}
\mcc{\epsilon}_{\mathrm{max}} < P_{\E} \leq k \mcc{\epsilon}_{\mathrm{max}}.
\end{equation}

%\medskip

By comparing \eqref{eq:LooseUB_Product} with the \ac{RHS} of \eqref{eq:bounds_ED}, the question on whether the two algorithms provide the same block error probability may arise. In the following, it is shown that the \ed{block error probability under \ac{SC} decoding is upper bounded by the block error probability under Elias' decoding}.

%\medskip

\begin{theorem} For an \GL{$(n,k)$} \ac{SPC} product code, the block error probabilities under \ac{SC} and Elias' decoding over the \ac{BEC}($ \epsilon $) satisfy
	\begin{equation}
		\ed{P_{\SC} \leq P_{\E}}.
	\end{equation}
	\label{theorem:comparison}
\end{theorem}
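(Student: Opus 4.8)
The plan is to prove the inequality pattern-by-pattern on the \ac{BEC}. Since the channel never flips bits, both decoders either recover an information bit correctly or declare it erased, so by Lemma~\ref{lem:block_error} the \ac{SC} block-error event is $\mathcal{E}_{\SC}=\bigcup_{i=1}^{k}\mathcal{B}_{\SC,i}^{\mathrm{GA}}$ and the Elias block-error event is $\mathcal{E}_{\E}=\bigcup_{i=1}^{k}\mathcal{B}_{\E,i}$, where $\mathcal{B}_{\E,i}$ denotes the event that \eqref{eq:decision_Elias_BEC} outputs an erasure for $u_i$. On the \ac{BEC}, whether the \ac{SC} recursion \eqref{eq:recurs_metric} (with the genie supplying the correct $u_1^{i-1}$) or the Elias recursion \eqref{eq:recurs_elias_metric} produces a tie for $u_i$ depends only on the set $\mathcal{Z}\subseteq[n]$ of erased channel positions, and not on the transmitted codeword nor on the values of the genie bits, because the two values of each likelihood differ precisely when the underlying bit is pinned down, which is a property of $\mathcal{Z}$ alone. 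Hence it suffices to show, for every $\mathcal{Z}$ and every $i\in[k]$,
\[
\text{Elias recovers } u_i \quad\Longrightarrow\quad \text{genie-aided SC recovers } u_i,
\]
i.e. $\mathcal{B}_{\SC,i}^{\mathrm{GA}}\subseteq\mathcal{B}_{\E,i}$; the union over $i$ followed by taking probabilities then gives $P_{\SC}\le P_{\E}$.

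This per-bit implication is proved by induction on the number of dimensions $m$. For $m=0$ the code is $\vecx=\vecu$ and both decoders recover the single bit iff its position is unerased. For $m\ge1$, view $\code$ as the $(n_m,k_m)$ \ac{SPC} kernel built on top of an $(m-1)$-dimensional \ac{SPC} product code $\code'$, and read off from \eqref{eq:recurs_metric1} and \eqref{eq:recurs_elias_metric} that both recursions have the same two-level form: an \emph{outer} decoder for the single parity-check code $\SPC_{n_m}$ --- genie-aided \ac{SC} in \eqref{eq:recurs_metric1} and Elias in \eqref{eq:recurs_elias_metric} --- that operates over $n_m$ inner ``super-symbols'', the $j'$-th of which is produced by a decoder for $\code'$ applied to the length-$n^{[m-1]}$ block $\vecy_{j'}^{[m]}$; this inner decoder is genie-aided \ac{SC} of $\code'$ in \eqref{eq:recurs_metric1} and Elias of $\code'$ in \eqref{eq:recurs_elias_metric}, in both cases targeting the \emph{same} bit-channel index $j+1$ of $\code'$, and in the \ac{SC} case with the genie supplying the first $j$ information bits of that inner codeword, precisely the side information $u_{j',k_m}^{jk_m}$ (and $\bigoplus_{i'}u_{i',k_m}^{jk_m}$ for $j'=0$) appearing in \eqref{eq:recurs_metric1}. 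Let $a_{j'},b_{j'}\in\{0,1\}$ indicate whether the genie-aided \ac{SC} decoder, respectively the Elias decoder, of $\code'$ recovers bit $j+1$ from $\vecy_{j'}^{[m]}$, for $j'=0,1,\dots,k_m$. The induction hypothesis applied to $\code'$ gives $b_{j'}\le a_{j'}$ for every $j'$.

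It then remains to analyze the outer $\SPC_{n_m}$ stage on the \ac{BEC}. Placing the parity symbol in block $0$ and the systematic symbols in blocks $1,\dots,k_m$, and writing $t=[(i-1)\bmod k_m]+1$ for the position of $u_i$ inside its block, a direct inspection of \eqref{eq:recurs_metric} shows that the genie-aided \ac{SC} decoder recovers $u_i$ iff $a_t=1$ or $a_0=a_{t+1}=\dots=a_{k_m}=1$ --- the genie already supplies the systematic symbols $1,\dots,t-1$, whereas the symbols $t+1,\dots,k_m$ must be recovered together with the parity symbol in block $0$ in order to close the parity-check equation --- while inspection of \eqref{eq:recurs_elias_metric} shows that Elias recovers $u_i$ iff $b_t=1$ or $b_{j'}=1$ for all $j'\ne t$. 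Now, if Elias recovers $u_i$ because $b_t=1$, then $a_t\ge b_t=1$ and the \ac{SC} decoder recovers it; and if Elias recovers $u_i$ because $b_{j'}=1$ for all $j'\ne t$, then in particular $a_0=1$ and $a_{t+1}=\dots=a_{k_m}=1$, so the \ac{SC} decoder recovers it as well. This closes the induction, and the theorem follows.

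The step I expect to be the main obstacle is the bookkeeping in the inductive step: one has to verify carefully that, on the \ac{BEC}, the quantities computed by \eqref{eq:recurs_metric1} and \eqref{eq:recurs_elias_metric} factor exactly through the inner recovery indicators $a_{j'},b_{j'}$ --- so that a tie at the outer level is equivalent to the stated Boolean conditions, the only delicate point being that completing every summed-over bit to its true value keeps the relevant likelihood factors strictly positive, which rules out spurious cancellations --- and that the inner bit-channel index $j+1$ together with the genie's prefix are threaded through the recursion so that the induction hypothesis applies unchanged. The outer $\SPC_{n_m}$ analysis itself is merely the single-\ac{SPC} computation already used to obtain \eqref{eq:erasure_SPC}, carried out symbol-wise.
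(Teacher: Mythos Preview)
Your proposal is correct and follows the same high-level strategy as the paper: both proofs establish the per-bit inclusion $\mathcal{B}_{\SC,i}^{\mathrm{GA}}\subseteq\mathcal{B}_{\E,i}$ on the \ac{BEC} and then take the union via Lemma~\ref{lem:block_error}. The difference is one of detail. The paper dispatches the inclusion in a single line, simply asserting that ``an erasure at the output of genie-aided \ac{SC} decoding of the $i$-th bit implies an erasure for its Elias' decoding''---relying on the reader's intuition that the \ac{SC} recursion \eqref{eq:recurs_metric} differs from the Elias recursion \eqref{eq:recurs_elias_metric} only by conditioning on additional side information, which on the \ac{BEC} can never hurt. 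You instead \emph{prove} this implication by an explicit induction on the number of dimensions~$m$, factoring each recursion into an inner $(m{-}1)$-dimensional stage and an outer $\SPC_{n_m}$ stage, and tracking the Boolean recovery indicators $a_{j'}\geq b_{j'}$ through both. Your argument is thus strictly more detailed and makes rigorous what the paper leaves implicit; the bookkeeping you flag as the main obstacle (that the outer-level tie conditions really are the stated Boolean functions of the inner indicators, and that the inner bit-channel index and genie prefix thread correctly through the recursion) is exactly the content that the paper's one-line justification glosses over.
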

%\vspace{-2em}
\begin{proof}
	The $i$-th bit error event under Elias' decoder is
	\begin{equation*}\label{eq:elias_bit_error_event}
	\mathcal{B}_{\E,i} \triangleq \{(\vecu, \vecy)\in\mathcal{X}^k\times\mathcal{Y}^n: \hat{u}_i^{(\mathrm{E})}(\vecy) \neq u_i\} 
	\end{equation*}
with $ \hat{u}_i^{(\mathrm{E})}(\vecy) $ being the output of Elias' decoder for $u_i$. \ed{First, we} show that \ed{over the \ac{BEC}} $\mathcal{B}_{\SC,i}^{\mathrm{GA}}\subseteq\mathcal{B}_{\E,i}$. \ed{To this end, we write
	\begin{align}
		&\mathcal{B}_{\SC,i}^{\mathrm{GA}} = \{(\vecu, \vecy)\in\mathcal{X}^k\times\mathcal{Y}^n: \hat{u}_i^{(\mathrm{SC})}(\vecy,u_1^{i-1}) = ?\} \\
		&= \!\{\!(\vecu, \vecy)\!\in\!\mathcal{X}^k\times\mathcal{Y}^n\!:\! \hat{u}_i^{(\mathrm{SC})}(\vecy,u_1^{i-1}) = ?, \hat{u}_i^{(\mathrm{E})}(\vecy) = ?\}\label{eq:equality}\\
		&\subseteq \!\{\!(\vecu, \vecy)\!\in\!\mathcal{X}^k\times\mathcal{Y}^n\!:\! \hat{u}_i^{(\mathrm{E})}(\vecy) = ?\} = \mathcal{B}_{\E,i}\label{eq:equality2}
	\end{align}
	where \eqref{eq:equality} follows from the fact that an erasure at the output of genie-aided SC decoding of the $i$-th bit implies an erasure for its Elias' decoding. Combining \eqref{eq:equality2} with Lemma~\ref{lem:block_error} concludes the proof.}
\end{proof}
\ed{Fig. \ref{fig:sc_125_64_SPC_bec} illustrates the simulation results for the $3$-dimensional $(125,64)$ \ac{SPC} product code, obtained by iterating $(5,4)$ \ac{SPC} codes, over the \ac{BEC}. The results are provided in terms of \ac{BLER} vs. channel erasure probability $\epsilon_{\mathrm{ch}}$. The \ac{SC} decoding performance is compared to the performance under Elias' decoding. The former outperforms the latter. The tight upper bound on the \ac{SC} decoding, computed via the \ac{RHS} of \eqref{eq:UB_Product}, is also provided.}
\begin{figure}[t]
	\begin{center}
		\includegraphics[width=1\columnwidth]{./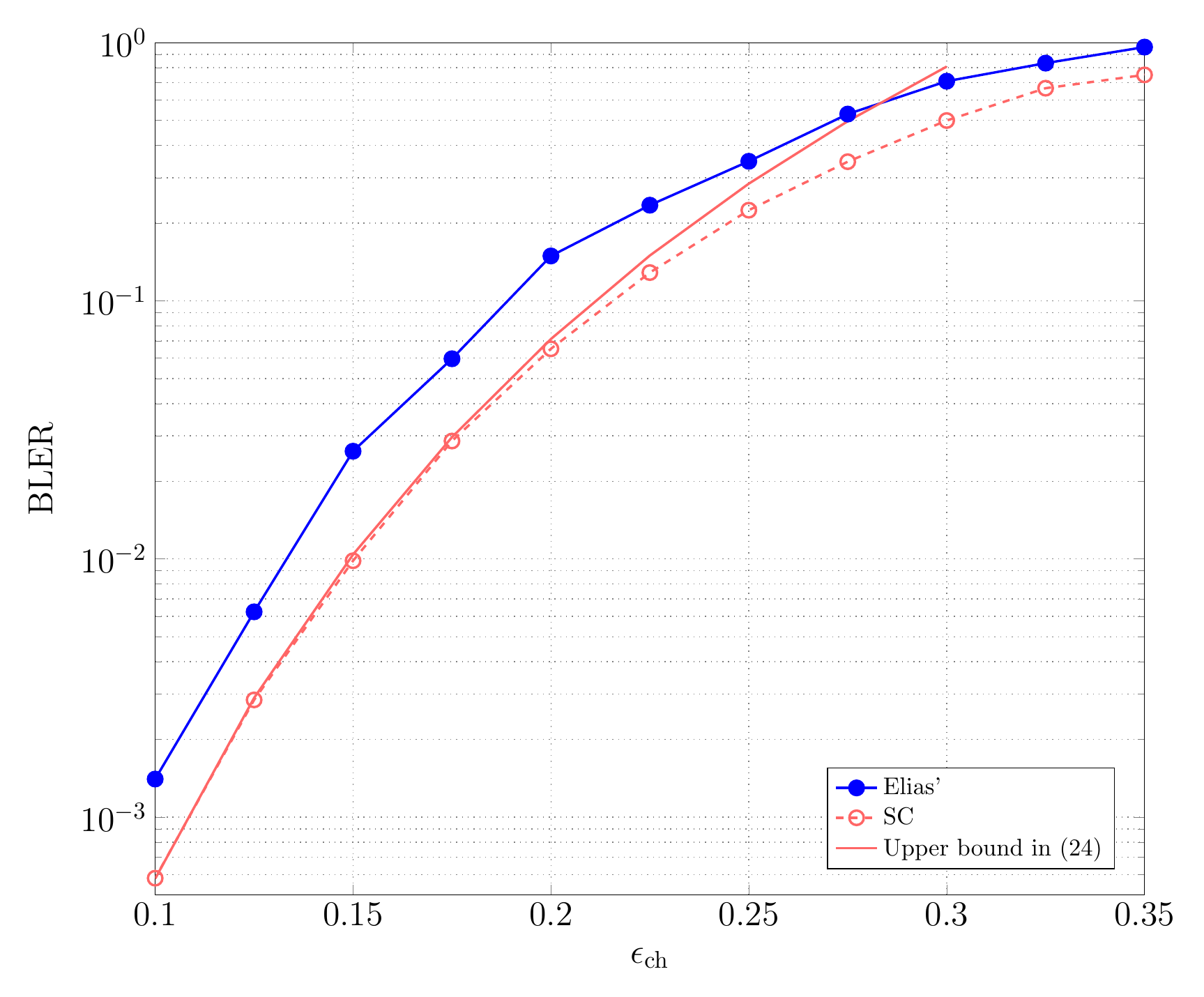}
	\end{center}
	\vspace{-0.5cm}
	\caption{\ed{\ac{BLER} vs. $\epsilon_{\mathrm{ch}}$  under \ac{SC} decoding for the $(125,64)$ product code, compared to Elias' decoding.}}\label{fig:sc_125_64_SPC_bec}
\end{figure}
\ed{\begin{remark}
    The inequality in Theorem \ref{theorem:comparison} can be made strict for an $m$-dimensional product code with $m>1$ and for which $\exists\ell\in\{1,\dots,m\}$ such that $n_\ell>2$. The proof is tedious as it requires the definition of erasure patterns that are resolvable by the \ac{SC} decoder while Elias' decoder fails. The general expression of such erasure patterns yields a complicated expression that we omit. In the following, we provide an example of such a pattern for the $(9,4)$ \ac{SPC} product code.
%\vspace{-2mm}
\end{remark}}
\begin{example}\label{example:comparison}
	Consider transmission \ed{over the \ac{BEC}} using the $ (9,4) $ product code with the received vector $ \vecy = \{0,?,?,?,0,?,0,0,0\}$, where $\mathcal{E}=\{2,3,4,6\}$. Under \ac{SC} decoding, the message is decoded correctly while Elias' decoding would fail to decode the $3$rd information bit \ed{as provided in Fig. \ref{fig:size9_polar_PC_ex}}.
	\begin{figure}    
		\centering
		\begin{subfigure}{0.5\textwidth}
			\includegraphics[width=1\textwidth]{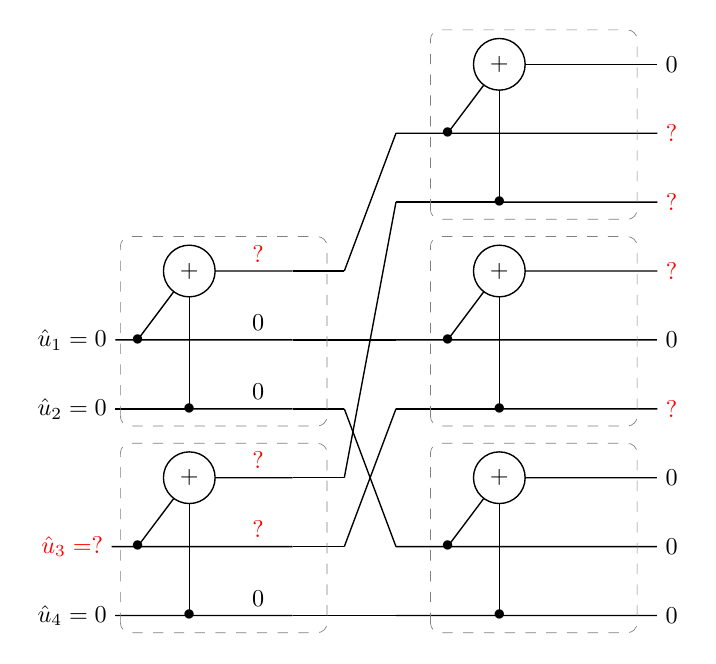}
			\vspace{-0.5cm}
			\caption{}
			%	\label{fig:size9_polar_construction_PCa}
		\end{subfigure}
		\begin{subfigure}{0.5\textwidth}
			\includegraphics[width=1\textwidth]{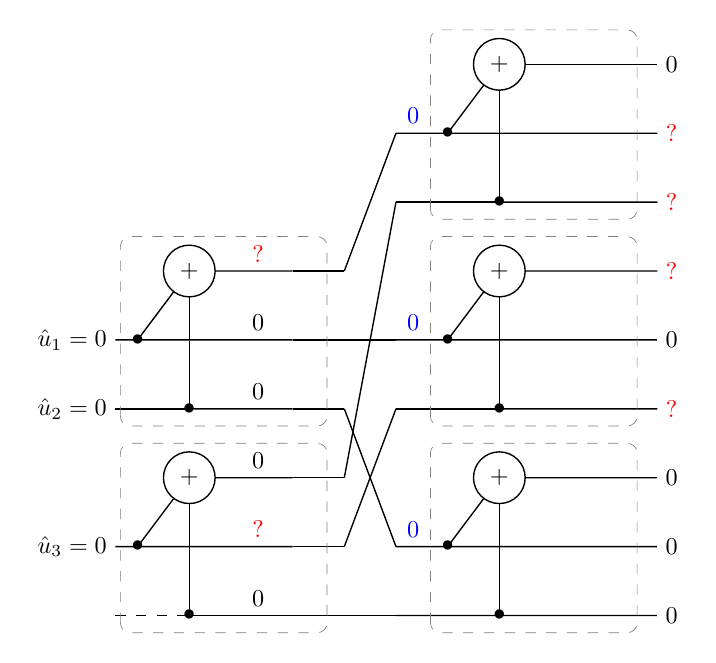}
			\vspace{-0.5cm}
			\caption{}
		\end{subfigure}
		\caption{\ed{(a) illustrates the output of Elias' decoding and (b) shows that $u_3$ indeed is correctly decoded by \ac{SC} decoding for this erasure pattern. Note that the blue labeled values denote the hard-decisions propagating from left to right in SC decoding.}}
		\label{fig:size9_polar_PC_ex}
	\end{figure}
\end{example}

\subsubsection{Asymptotic Performance Analysis}\label{subsec:asymptotic}

We consider now the asymptotic performance of \ac{SPC} product codes. More specifically, we analyze the error probability of a \emph{product code sequence}, defined by \ed{an ordered} sequence of component code sets
\[
	\mathscr{C}^{\ed{[m]}}=\left\{\code_1, \code_2, \ldots, \code_m\right\}
\] 
where we constrain $\left|\mathscr{C}^{\ed{[m]}}\right|=m$, i.e., where the number of component codes for the $m$-th product code in the sequence is set to $m$\ed{, and the component code rates satisfy $R_i\leq R_j$ for $i<j$}. \ed{We denote by $\code^{\ed{[m]}}$ the $m$-th product code in the sequence corresponding to the set $\mathscr{C}^{\ed{[m]}}$.} We aim at studying the behavior of the error probability as the dimension $m$ tends to infinity \ed{when the \ac{SC} decoding starts from $\code_1$ up to $\code_m$}. We remark  that, as $m$ changes, the component codes used to construct the product code are allowed to change, i.e., the sequence of product codes is defined by the set of component codes employed for each value of $m$.
Observe that the rate of $\code^{\ed{[m]}}$ may vanish as $m$ grows large, if the choice of the component codes forming the sets $\mathscr{C}^{\ed{[m]}}$ is not performed carefully.
We proceed by analyzing the limiting behavior in terms of \emph{block erasure thresholds} for different product code sequences \ed{with positive rates} under \ac{SC} decoding. Recall that we consider $m$-dimensional systematic \ac{SPC} product code constructions, where component code generator matrices $\G_{\ed{1}},\dots,\G_{\ed{m}}$ are obtained by removing the first rows of the kernels $\GK_{n_1},\dots,\GK_{n_m}$ of the form \eqref{eq:kernel}.
\begin{definition}\label{def:threshold}
	The \ac{SC} decoding block erasure threshold of a \ac{SPC} product code sequence defined by the component code \ed{sets $\mathscr{C}^{[1]},\ldots,\mathscr{C}^{[m]}$} is the largest channel erasure probability $\epsilon^\star$ for which the block error probability $P_{\SC}\left(\code^{\ed{[m]}}\right)$ converges to $0$ asymptotically in $m$ \ed{if the limit exists}, i.e.,
	\begin{equation}
		\epsilon^\star = \sup_{\epsilon_{\mathrm{ch}}\in[0,1)} \{\epsilon_{\mathrm{ch}}:\lim_{m\rightarrow\infty} P_{\SC}\left(\code^{\ed{[m]}}\right) = 0\}.
	\end{equation}
\end{definition}
As it is not possible to evaluate $P_{\SC}\left(\code^{\ed{[m]}}\right)$ exactly, we rely on the upper bound \eqref{eq:LooseUB_Product} to obtain a lower bound $\epsilon^\star_{\mathrm{LB}}$ on the block erasure threshold in the form
\begin{equation}
	\epsilon^\star_{\mathrm{LB}} = \hspace{-1mm}\sup_{\epsilon_{\mathrm{ch}}\in[0,1)} \hspace{-1mm}\{\epsilon_{\mathrm{ch}}:\lim_{m\rightarrow\infty} k\left(\code^{\ed{[m]}}\right)\epsilon_{\mathrm{max}}\left(\code^{\ed{[m]}}\right)\hspace{-1mm}=\hspace{-1mm}0\}
\end{equation}
where the dependence of the code dimension $k$ and of the maximum information bit erasure probability $\epsilon_{\mathrm{max}}$ on the sequence of product codes has been made explicit. We provide next two examples of product code sequences, whose rates converge to a positive value. The first sequence exhibits a positive block erasure threshold (lower bound), which is however arguably far from the Shannon limit. We then analyze a product code sequence that is known to achieve the \ac{BEC} capacity under bit-wise \ac{MAP} decoding \cite{kumar16,pfister17}.

\medskip
\begin{example}[Euler's infinite-product representation of the sine function as an \ac{SPC} product code]
\label{sec:asymptotic1}
Consider an \ac{SPC} product code \GL{sequence} with an $(a^2 \ell^2, a^2 \ell^2-1)$ \GL{\ac{SPC} component code at the $\ell$-th dimension, yielding $R_\ell = \left(1- (a\ell)^{-2}\right)$, with $\ell=1,\ldots,m$.} \GL{The asymptotic} rate is computed easily \mcc{via} the Euler's infinite-product representation of the sine \mcc{function}, i.e.,
\[\sin\left(\frac{\pi}{a}\right)=\frac{\pi}{a} \prod_{\ell=1}^{\infty}\left(1-\frac{1}{a^2\ell^2}\right)\] 
yielding an \mc{asymptotic} rate $R = \frac{a}{\pi}\sin\left(\frac{\pi}{a}\right)$. \GL{Different product code sequences can be obtained for various choices of the parameter $a$.}
\last{The} lower bounds \last{on the} block erasure thresholds are provided in Table \ref{table:asymptotic} \last{for several values of $a$.}
The second column \GL{in Table \ref{table:asymptotic} provides the asymptotic rate of the \ac{SPC} product code sequence defined by the parameter $a$ (whose squared value is reported in the first column). The third column reports the lower bound on the block erasure threshold. The fourth column gives the \mc{Shannon limit} for the given asymptotic rate, while the last one shows \mc{the fraction of the Shannon limit achieved by each construction.} The thresholds achieved by the different product code sequences lie relatively far from the Shannon limit. In relative terms, the lowest-rate construction (obtained for $a^2=2$) achieves the largest fraction (above $1/2$) of the limit, while the efficiency of the sequences decreases as the rate grows.}
{\renewcommand{\arraystretch}{1.2}
	\begin{table}
		\caption{Lower bounds on the \GL{block} erasure thresholds for some \ac{SPC} product code  \GL{sequences based on Euler's infinite-product representation of the sine function}}
		\begin{center}
			\vspace*{-3mm}
			\begin{tabular}{c|cc|c|c}
				\hline\hline
				$a^2$             & $R$ & $\epsilon^\star_{\mcc{\mathrm{LB}}}$ & Limit, $\epsilon=1-R$ & $\epsilon^\star_{\mcc{\mathrm{LB}}} /\epsilon$\\
				\hline
				$2$ & $0.3582$ & $0.3308$ & $0.6418$ & $0.5154$\\
				$4$ & $0.6366$ & $0.1440$ & $0.3634$ & $0.3963$\\
				$8$ & $0.8067$ & $0.0681$ & $0.1933$ & $0.3523$ \\
				$16$ & $0.9003$ & $0.0332$ & $0.0997$ & $0.3331$ \\
				$32$ & $0.9494$ & $0.0164$ & $0.0506$ & $0.3241$ \\
				$64$ & $0.9745$ & $0.0081$ & $0.0255$ & $0.3176$ \\ \hline\hline
			\end{tabular}
		\end{center}
		\label{table:asymptotic}
	\end{table}
}
\end{example}

\medskip

\begin{example}[Product of $(m,m-1)$ \ac{SPC} product codes in $m$ dimensions]
    \label{sec:asymptotic2}
We consider now the product code obtained by iterating $(m,m-1)$ \ac{SPC} codes in $m$ dimensions\ed{, i.e., the resulting code is an $(m^m,(m-1)^m,2^m)$ code. Hence, the} rate of the $m$th product code in the sequence is 
\begin{equation*}
	R\left(\code^{\ed{[m]}}\right) = \left(1-\frac{1}{m}\right)^{m}
\end{equation*}
and it converges, for $m\rightarrow \infty$, to $\texttt{e}^{-1}$.
As presented in \cite{pfister17}, this product code sequence is remarkable: It is capacity-achieving over the \GL{BEC} under bit-wise \ac{MAP} decoding. \ed{This} observation follows from results derived in \cite{kumar16}. Disappointingly, the block-wise erasure threshold under \ac{SC} decoding turns out to be zero. This (negative) result is provided by the following theorem.

\begin{theorem}
	\label{theorem:threshold}
	\GL{Under \ac{SC} decoding, the block erasure threshold of the product code sequence defined by the component code sets $\mathscr{C}^{\ed{[m]}}=\left\{\code_1, \code_2, \ldots, \code_m\right\}$, where $\code_i$, $i=1,\dots,m$, are $(m,m-1)$ \ac{SPC} codes, is zero.}
\end{theorem}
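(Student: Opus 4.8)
The plan is to show that for every channel erasure probability $\epsilon_{\mathrm{ch}}\in(0,1)$ the block error probability $P_{\SC}\!\left(\code^{[m]}\right)$ stays bounded away from $0$ as $m\to\infty$; by Definition~\ref{def:threshold} this forces $\epsilon^\star=0$. The starting point is the lower bound in \eqref{eq:UB_Product}, namely $P_{\SC}\!\left(\code^{[m]}\right)\geq\epsilon_{\mathrm{max}}\!\left(\code^{[m]}\right)=\epsilon^{(1)}_{\G^{[m]}}$, the erasure probability of the first decoded information bit. So it suffices to prove that $\epsilon^{(1)}_{\G^{[m]}}$ does not vanish.

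Next I would specialize the recursion \eqref{eq:erasure_SPC_PC} to the bit index $i=1$, for which $j=0$ and $t=1$. Since all $m$ component codes of $\code^{[m]}$ are the $(m,m-1)$ \ac{SPC} code, every component blocklength entering the recursion equals $m$; writing $x_\ell\triangleq\epsilon^{(1)}_{\G^{[\ell]}}$ for the first-bit erasure probability of the product of the first $\ell$ of these component codes, we get $x_0=\epsilon_{\mathrm{ch}}$ and
\[
x_\ell = x_{\ell-1}\left(1-(1-x_{\ell-1})^{m-1}\right),\qquad \ell=1,\dots,m,
\]
so that $\epsilon^{(1)}_{\G^{[m]}}=x_m$. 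The sequence $x_0\geq x_1\geq\cdots\geq x_m$ is nonincreasing, since each factor $1-(1-x)^{m-1}$ lies in $[0,1]$.

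The core of the argument is that $x_m$ cannot fall below $\epsilon_{\mathrm{ch}}/2$ once $m$ is large. Put $\delta\triangleq\epsilon_{\mathrm{ch}}/2$ and suppose some $x_\ell<\delta$; let $\ell^\star$ be the smallest such index, so $\ell^\star\geq1$ (because $x_0=2\delta>\delta$) and $x_j\geq\delta$ for all $j<\ell^\star$. Unrolling the recursion,
\[
x_{\ell^\star}=\epsilon_{\mathrm{ch}}\prod_{j=1}^{\ell^\star}\!\left(1-(1-x_{j-1})^{m-1}\right)\geq\epsilon_{\mathrm{ch}}\left(1-(1-\delta)^{m-1}\right)^{\ell^\star}\geq\epsilon_{\mathrm{ch}}\left(1-(1-\delta)^{m-1}\right)^{m},
\]
and Bernoulli's inequality gives $\left(1-(1-\delta)^{m-1}\right)^{m}\geq 1-m(1-\delta)^{m-1}$. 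Since $0<1-\delta<1$, we have $m(1-\delta)^{m-1}\to0$, so there is $M=M(\epsilon_{\mathrm{ch}})$ with $m(1-\delta)^{m-1}<\tfrac12$ for all $m\geq M$; for such $m$ this chain yields $x_{\ell^\star}>\epsilon_{\mathrm{ch}}/2=\delta$, a contradiction. Hence $x_\ell\geq\delta$ for every $\ell\leq m$ whenever $m\geq M$, and in particular $P_{\SC}\!\left(\code^{[m]}\right)\geq\epsilon^{(1)}_{\G^{[m]}}=x_m\geq\epsilon_{\mathrm{ch}}/2$. Thus $\liminf_{m\to\infty}P_{\SC}\!\left(\code^{[m]}\right)\geq\epsilon_{\mathrm{ch}}/2>0$ for every $\epsilon_{\mathrm{ch}}\in(0,1)$, so only $\epsilon_{\mathrm{ch}}=0$ lies in the set of Definition~\ref{def:threshold} and $\epsilon^\star=0$.

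The one delicate point is the uniformity used in the previous paragraph: the product $\prod_{j=1}^{\ell^\star}(1-(1-x_{j-1})^{m-1})$ may contain as many as $m$ factors, each slightly below $1$, so a careless estimate would let it decay to $0$; the resolution is that the ``defect'' $(1-\delta)^{m-1}$ of each factor is exponentially small in $m$, which beats the linearly many factors and keeps the product above $1/2$. Everything else — the reduction to $\epsilon^{(1)}_{\G^{[m]}}$ via \eqref{eq:UB_Product}, the monotonicity of $x_\ell$, and the Bernoulli step — is routine. It is worth stressing the contrast: the rate of this sequence converges to $\mathsf{e}^{-1}>0$ and the sequence is capacity-achieving under bit-wise \ac{MAP} decoding, yet the above shows the genie-aided first bit is essentially never recovered, so \ac{SC} decoding has a degenerate threshold.
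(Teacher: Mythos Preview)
Your proof is correct and follows essentially the same approach as the paper: reduce to showing that $\epsilon_{\mathrm{max}}=\epsilon^{(1)}_{\G^{[m]}}$ does not vanish, then analyse the one-dimensional recursion and exploit that the per-step deviation from the identity is exponentially small in $m$, which beats the $m$ iterations. The only cosmetic difference is that the paper works with the mutual information $I=1-\epsilon$ and bounds $f(I)\leq I(1+\delta/m)$ directly, whereas you stay with the erasure probability and reach the same conclusion via a first-passage contradiction and Bernoulli's inequality.
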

\begin{proof}
		The proof revolves around the following idea: For the product code with $(m,m-1)$ \ac{SPC} component codes in $m$ dimensions over the \ac{BEC}, the largest information bit erasure probability $\epsilon_{\mathrm{max}}$ under \ac{SC} decoding is equal to the channel erasure probability $\epsilon_{\mathrm{ch}}\in(0,1)$, i.e., $\epsilon_{\mathrm{max}} = \epsilon_{\mathrm{ch}}$ as $m\rightarrow\infty$\ed{, which is proved below after this paragraph}. Since the largest information bit erasure probability $\mcc{\epsilon}_{\mathrm{max}}$ is a lower bound on the block error probability under \ac{SC} decoding (see \eqref{eq:UB_Product}), we have that $\epsilon^\star = 0$.
		
		\begin{figure}[t]
			\begin{center}
				\includegraphics[width=\columnwidth]{./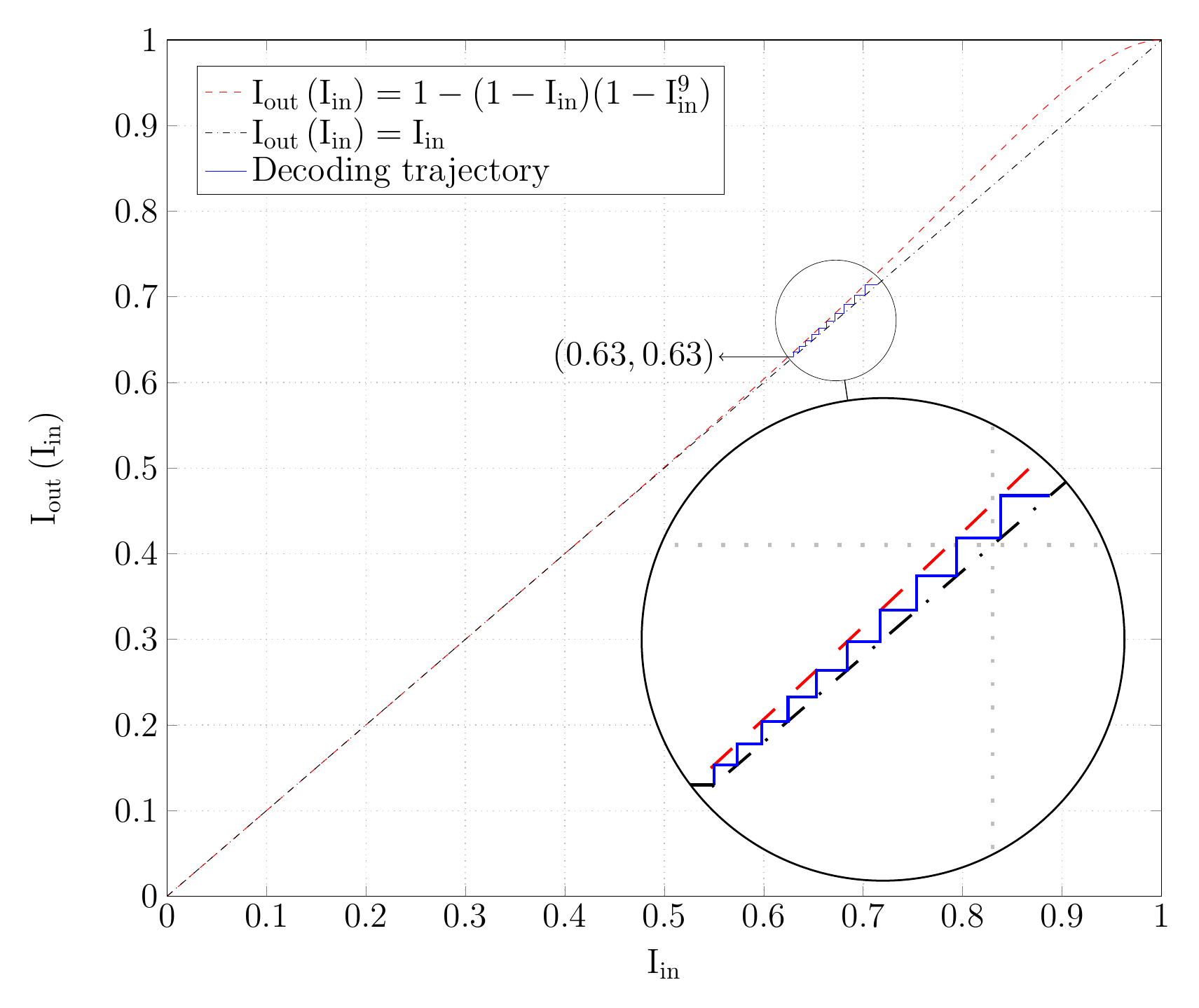}
			\end{center}
			\vspace{-0.5cm}
			\caption{Decoding trajectory for the bit $u_1$ of the $10$-dimensional SPC product code, where the component codes are $(10,9)$ SPC codes, over the BEC$\left(0.37\right)$.}\label{fig:decoding_trajectory}
		\end{figure}
		We rewrite the recursion \eqref{eq:erasure_SPC_PC}, for $i=1$ and $j=1$, in mutual information as
		\begin{equation}
		    \ed{I^{(1)}_{\G^{[m]}}}=1-\left[1-\ed{I^{(1)}_{\G^{[m-1]}}}\right]\left[1-\ed{\left(I^{(1)}_{\G^{[m-1]}}\right)}^{n_m-1}\right]
		\label{eq:mi_SPC_PC}
		\end{equation}
		by \ed{noting that} $\ed{I^{(i)}_{\G^{[m]}}} = 1-\ed{\epsilon^{(i)}_{\G^{[m]}}}$ \ed{where $I^{(i)}_{\G^{[m]}}$ denotes the mutual information of the \ac{BEC} with an erasure probability of $\epsilon^{(i)}_{\G^{[m]}}$ with uniform inputs}.
		We are interested in $\mi_\mathrm{min} \triangleq 1-\epsilon_\mathrm{max}$ for a given \ac{BEC}$(\epsilon_\mathrm{ch})$, with $\epsilon_\mathrm{ch}\in[0,1)$, which can be calculated recursively via \eqref{eq:mi_SPC_PC}. This recursion is illustrated \last{in Fig. \ref{fig:decoding_trajectory}} for the case \last{$m=10$ and $\epsilon_{\mathrm{ch}}=0.37$} as an example. Note that, for the considered construction, we have $n_\ell = m$, $\ell = 1,\dots,m$. \mc{This means \last{that} the top curve in the figure shifts down for a larger $m$, resulting in a narrower tunnel between the two curves, although the number of recursions, equal to $m$, increases. Note that it is necessary to have $\mi_\mathrm{min}\rightarrow 1$, i.e., one reaches \ed{the} $(1,1)$ point, with $m$ recursions in the figure for $P_{\SC}\rightarrow 0$. In the following, we provide an answer for the question on the dominating effect (narrower tunnel or more recursions) with increasing $m$.}
		
		The first (and single) recursion is simplified, using \ed{the \ac{RHS} of \eqref{eq:mi_SPC_PC} by setting the input} $\mi_{\mathrm{ch}}\triangleq 1-\epsilon_{\mathrm{ch}}$, as the following input-output relationship
		\begin{align}
		    \ed{f}(\mi_\mathrm{ch}) &\triangleq 1-\left(1-\mi_\mathrm{ch}\right)\left(1-\mi_\mathrm{ch}^{m-1}\right)\nonumber\\
		    &=\mi_{\mathrm{ch}}+\mi_{\mathrm{ch}}^{m-1}-\mi_{\mathrm{ch}}^m.
		    \label{eq:mi_evol}
		\end{align}
		Hence, we know that the mutual information after $m$ iterations is	$\mi_\mathrm{min} = \ed{f}^{\circ m}\left(\mi_\mathrm{ch}\right)$ where $\ed{f}^{\circ m}\left(\mi_\mathrm{ch}\right) = \ed{f}\left(\ed{f}^{\circ m-1}\left(\mi_\mathrm{ch}\right)\right)$ denotes the $m$-th iteration of function $\ed{f}$ with $\ed{f}^{\circ 1}\left(\mi_\mathrm{ch}\right)\triangleq \ed{f}\left(\mi_\mathrm{ch}\right)$. We are interested in the lowest channel mutual information for which the block error rate converges to zero asymptotically in $m$, i.e.,
		\[\mi^\star = \inf_{\mi_\mathrm{ch}\in(0,1]}\{\mi_\mathrm{ch}:\lim_{m\rightarrow\infty} P_{\SC} \rightarrow 0 \}.\]
		Consider now \ed{an arbitrary $\delta>0$. For any} positive $\gamma<1$, there exists a sufficiently large $m$ such that $m\gamma^{m-2}\leq\delta$. Then, for any \ed{non-negative $\mi\leq\gamma$, we write
		\begin{align}
		    \ed{f}(\mi) &= \mi+\mi^{m-1}-\mi^{m} \label{eq:upper_bound_mi1}\\
		    &\leq \mi\left(1+\mi^{m-2}\right)\label{eq:upper_bound_mi2} \\
		    &\leq \mi\left(1+\frac{\delta}{m}\right)
		\label{eq:upper_bound_mi}
		\end{align}
		where \eqref{eq:upper_bound_mi2} follows from the fact that $\mi\geq0$ and \eqref{eq:upper_bound_mi} from the fact that $\mi\leq\gamma$, which, combined with $m\gamma^{m-2}\leq\delta$, leads to $\mi^{m-2}\leq\frac{\delta}{m}$.} For any \ed{initial} $\mi_{\mathrm{ch}}\leq\texttt{e}^{-\delta}\gamma$ and any $m'\leq m$, we have
		\begin{equation}
		    \ed{f}^{\circ m'}(\mi_\mathrm{ch}) \leq \texttt{e}^{-\delta}\gamma \left(1+\frac{\delta}{m}\right)^{m'} \leq\gamma
		\end{equation}
		which makes sure that the condition \ed{$\mi\leq\gamma$} for \eqref{eq:upper_bound_mi} is not violated with $m'$ iterations. Therefore, for any positive $\gamma < 1$, any $\delta>0$ and any $\mi_{\mathrm{ch}}\leq\texttt{e}^{-\delta}\gamma$, we write
		\begin{equation}
		    \lim_{m\rightarrow\infty} \ed{f}^{\circ m}(\mi_\mathrm{ch}) \leq \mi_\mathrm{ch}\texttt{e}^{\delta}.
		\end{equation}
		The result follows by choosing $\delta$ arbitrarily small \ed{and $\gamma$ arbitrarily close to $1$}.
\end{proof}
\end{example}

\subsection{Analysis over Binary Memoryless Symmetric Channels}
\label{sec:DE}

\ed{It is well-known that the block error probability of polar codes under \ac{SC} decoding can be analyzed using density evolution, when the transmission takes place over a \ac{BMS} channel\cite{Mori:2009SIT}. In the following, the same method is used to provide a tight upper bound on the block error probability of \ac{SPC} product codes over \ac{BMS} channels.}

\ed{Due to the channel symmetry and the linearity of the codes, we assume that the all-zero codeword is transmitted. We write $L^{(i)}_{\ed{\G^{[m]}}}(\vecy,\boldsymbol{0})$ to denote the the log-likelihood ratio for $u_i$ based on \eqref{eq:recurs_metric} where all the previous bit-values are provided as zeros to the decoder, i.e.,
\begin{equation*}
	L^{(i)}_{\ed{\G^{[m]}}}(\vecy,\boldsymbol{0}) \triangleq \log\frac{\W^{(i)}_{\ed{\G^{[m]}}}(\vecy,U_1^{i-1} = \boldsymbol{0}|U_i=0)}{\W^{(i)}_{\ed{\G^{[m]}}}(\vecy,U_1^{i-1} =\boldsymbol{0}|U_i=1)}.
\end{equation*}
Accordingly, we use $l^{(i)}_{\ed{\G^{[m]}}}$ to denote the \ac{p.d.f.} of the \ac{RV} $L^{(i)}_{\ed{\G^{[m]}}}(Y_1^n,\boldsymbol{0})$. Extending the equations \eqref{eq:erasure_SPC} and \eqref{eq:erasure_SPC_PC} to general \ac{BMS} channels, the densities can be computed recursively as
\begin{equation}
	l^{(i)}_{\ed{\G^{[m]}}} = l^{(j +1)}_{\ed{\G^{[m-1]}}} \varoast (l^{(j +1)}_{\ed{\G^{[m-1]}}})^{\boxast (n_m-t)}
	\label{eq:de_awgn}
\end{equation}
\own{with $j = \lfloor\nicefrac{(i-1)}{k_m}\rfloor$ and $ t \triangleq [(i-1)\!\!\mod k_m] +1$} where  $\varoast$ denotes the variable node convolution and $(l^{(j +1)}_{\ed{\G^{[m-1]}}})^{\boxast (a)}$  the $a$-fold check node convolution with $(l^{(j +1)}_{\ed{\G^{[m-1]}}})^{\boxast (1)} \triangleq l^{(j +1)}_{\ed{\G^{[m-1]}}}$ as defined in \cite[Ch. 4]{Richardson:2008:MCT:1795974}. Then, the \ac{RHS} of \eqref{eq:SC_bound} can be computed via $l^{(i)}_{\ed{\G^{[m]}}}$ with $i = 1,\dots,k$, i.e., as
\begin{equation}
    \sum_{i=1}^{k}\lim\limits_{z\rightarrow 0} \left(\int_{-\infty}^{-z}l^{(i)}_{\ed{\G^{[m]}}}(x)dx+\frac{1}{2}\int_{-z}^{+z}l^{(i)}_{\ed{\G^{[m]}}}(x)dx\right).\label{eq:UB_AWGN}
\end{equation}
The computation of \eqref{eq:de_awgn} and \eqref{eq:UB_AWGN} can be carried out, for instance, via quantized density evolution \cite{chung_urbanke2001}, yielding an accurate estimate of the \ac{RHS} of \eqref{eq:SC_bound}.}

\ed{In Fig. \ref{fig:sc_125_64_SPC}, we provide simulation results for the $(125,64)$ \ac{SPC} product code over the \ac{B-AWGN} channel. The results are provided in terms of \ac{BLER} vs. \ac{SNR}, where the \ac{SNR} is expressed as $E_b/N_0$ ($E_b$ is here the energy per information bit and $N_0$ the single-sided noise power spectral density). The \ac{SC} decoding performance is compared to the performance under Elias' and \ac{BP} decoding. While it was only proven for the case of \ac{BEC} in Theorem \ref{theorem:comparison}, the results illustrate that the \ac{SC} decoding outperforms Elias' decoding also over the \ac{B-AWGN} channel for the simulated code. \ac{BP} decoding with a maximum number of iterations set to $100$ outperforms the \ac{SC} decoding significantly, which motivates us to introduce \ac{SCL} decoding in the next section. The tight upper bound on the \ac{SC} decoding, computed via \eqref{eq:UB_AWGN}, is also provided.
\begin{figure}[t]
	\begin{center}
		\includegraphics[width=\columnwidth]{./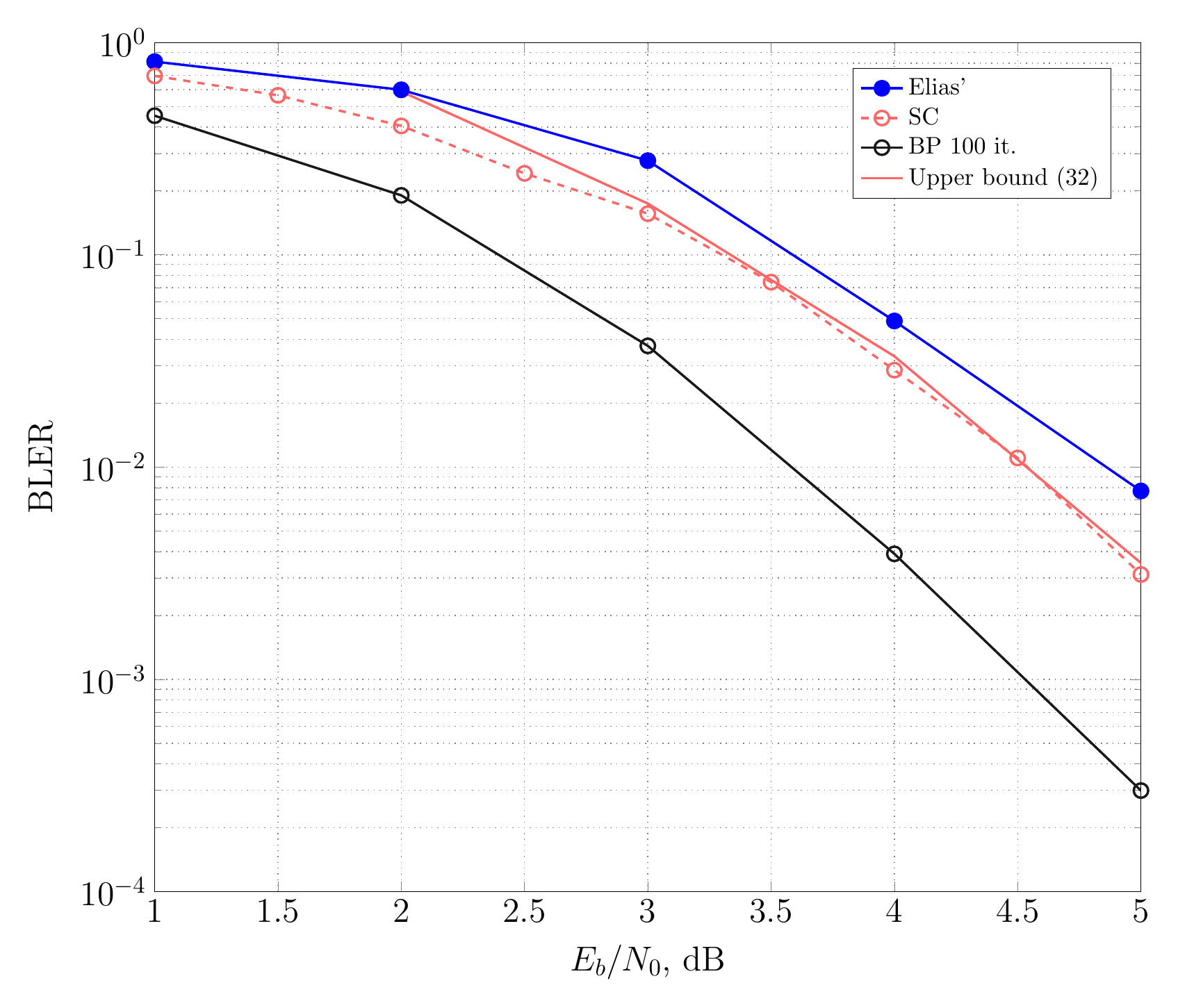}
	\end{center}
	\vspace{-0.5cm}
	\caption{\ed{\ac{BLER} vs. \ac{SNR} over the B-AWGN channel for a $(125,64)$ product code under \ac{SC} decoding, compared to Elias' decoding as well as a \ac{BP} decoding with $100$ iterations.}}\label{fig:sc_125_64_SPC}
\end{figure}}

	\section{Successive Cancellation List Decoding of Single Parity-Check Product Codes}
\label{sec:scl}

While the asymptotic analysis provided in Section \ref{subsec:asymptotic} provides some insights on the \ac{SPC} product codes constructed over a large number of dimensions (yielding very large block lengths), we are ultimately interested in the performance of product codes in the practical setting where the number of dimensions is small, and the block length is moderate (or small). \last{Like} polar codes, \ac{SC} decoding of \ac{SPC} product codes \ed{performs} poorly in this regime\ed{, e.g., see Fig. \ref{fig:sc_125_64_SPC}}. Hence, following the footsteps of \cite{tal15}, \last{we} investigate the error probability of \ac{SPC} product codes under \ac{SCL} decoding.
\mcc{The} \ac{SC} decoder decides on the value of $u_i$, $1\leq i\leq k$, after computing the corresponding likelihood. The \ac{SCL} decoder \last{does not make} a final decision on the bit value $u_i$ immediately. Instead, it \last{considers both options (i.e., $\hat{u}_i = 0$ and $\hat{u}_i = 1$) and runs several instances of \last{the} \ac{SC} decoder in parallel}. Each \last{\ac{SC} decoder} corresponds to a decoding \emph{path}, \mc{defined by a given} different hypothesis on $i-1$ preceding \mc{information} bits, namely $\hat{u}_1^{i-1}$, $1 < i < k$, at decoding stage $i$. After computing the likelihood for $u_i$, \mc{a} path is split into two new paths, \last{that share the} same decisions $\hat{u}_1^{i-1}$ for the preceding $i-1$ bits. \last{But these} new paths consider \last{different} decisions \last{for $\hat{u}_i$ (i.e., $\hat{u}_i = 0$ and $\hat{u}_i = 1$),} respectively, which doubles the number of paths at each \mc{decoding stage}. \mc{In order to avoid an \GL{excessive} growth in the number of paths, a maximum list size $L$ is imposed. The} \ac{SCL} decoder discards the paths \mc{except for the most likely paths, according to metrics computed via \eqref{eq:recurs_metric},} whenever their number exceeds \mc{the} list size $L$.
After $k$ steps, among the surviving $L$ candidates, the \ac{SCL} decoder chooses as the final decision the candidate \mc{path} maximizing the likelihood\mc{, yielding a decision $\vecuhat$}. Obviously, the block error rate decreases with an increasing $L$ at the expense of a higher complexity.
\begin{remark}\label{remark:ML}
	For \ed{short- and moderate-length} polar codes, it was shown in \cite{tal15} that \last{close-to-\ac{ML}} performance can be attained with a sufficiently large \ed{(yet, manageable)} list size. This was demonstrated by computing a numerical lower bound on the \ac{ML} decoding error probability via Monte Carlo simulation, where the correct codeword is introduced artificially in the final list, prior to the final \mcc{decision}. If, for a specific list size $L$, the simulated error probability is close to the numerical \ac{ML} decoding lower bound, then increasing the list size $L$ would not yield any performance improvement. The same principle applies to \ac{SCL} decoding of \ac{SPC} product codes.
\end{remark}
Owing to this observation, we \ed{first} study the performance of \ac{SPC} product codes under \ac{ML} decoding, by developing a weight enumerator analysis. Similarly, inspired by the concatenated polar code construction of \cite{tal15}, we \ed{also} study the performance of a concatenation of a high-rate outer code with an inner \ac{SPC} product code under \ac{ML} decoding. \ed{For both cases, t}he analysis is complemented by the \ac{SCL} decoding simulations\ed{. We will see that (for some short product codes) the \ac{ML} decoding performance is indeed attainable by \ac{SCL} decoding with small list sizes, e.g., $L\leq 8$, while moderate list sizes, e.g., $128\leq L\leq 1024$, are required when \ac{SPC} product codes are concatenated with an outer code}.

\ed{\subsection{Finite-length Performance Analysis via Weight Distribution of SPC Product Codes}
\label{sec:wef_SPC_PC}
Computing the weight enumerator of \ac{SPC} product codes for small constructions is feasible using the method presented in \cite{caire}. First, we provide an alternative derivation to the \ac{WEF} of a $2$-dimensional product code $\code$ with systematic arbitrary binary linear component code $\code_1$ and systematic $(\nu,\nu-1)$ \ac{SPC} code $\SPC_\nu$ as the second component code.
\begin{theorem}
	\label{theorem:wef_spc}
	Let $\code_1$ and $\SPC_\nu$ be an arbitrary $(n_1,k_1)$ systematic code with a generator matrix \ed{$\boldsymbol{G}$} and a length-$\nu$ systematic \ac{SPC} code, respectively. Then, the \ac{WEF} of the product code $\code$ with component codes $\code_1$ and $\SPC_\nu$ is
	\begin{equation}
		A_{\code}(z) = 2^{-k_{\scaleto{1}{3pt}}}\sum_{\vecv\in\{0,1\}^{k_{\scaleto{1}{3pt}}}} \left(\sum_{\vecu\in\{0,1\}^{k_{\scaleto{1}{3pt}}}}(-1)^{\vecu\cdot\vecv^\mathrm{T}}z^{w_{\scaleto{\mathrm{H}}{3pt}}\left(\vecu\ed{\boldsymbol{G}}\right)}\right)^\nu.\label{eq:theorem_wef}
	\end{equation}
\end{theorem}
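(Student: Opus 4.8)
The plan is to coordinatize a codeword of $\code$ column by column. Since $\code=\code_1\otimes\SPC_\nu$, a codeword can be arranged as an $n_1\times\nu$ array whose $\nu$ columns are codewords of $\code_1$ and whose $n_1$ rows are codewords of $\SPC_\nu$; this is the usual characterization of a product code and follows from $\G=\G_1\otimes\G_{\SPC_\nu}$ together with the fact that $\SPC_\nu$ is precisely the even-weight code of length $\nu$. Write the $j$-th column as $\vecu_j\boldsymbol{G}$ with $\vecu_j\in\{0,1\}^{k_1}$, $j=1,\dots,\nu$. Requiring every row to have even weight is the same as $\bigoplus_{j=1}^\nu \vecu_j\boldsymbol{G}=\boldsymbol{0}$, and because $\boldsymbol{G}$ has full row rank this is equivalent to $\bigoplus_{j=1}^\nu\vecu_j=\boldsymbol{0}$. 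Hence the codewords of $\code$ are in bijection with the tuples $(\vecu_1,\dots,\vecu_\nu)\in(\{0,1\}^{k_1})^\nu$ satisfying $\bigoplus_j\vecu_j=\boldsymbol{0}$, and the Hamming weight of the associated codeword is $\sum_{j=1}^\nu w_{\mathrm{H}}(\vecu_j\boldsymbol{G})$, so that
\begin{equation*}
	A_{\code}(z)=\sum_{(\vecu_1,\dots,\vecu_\nu):\,\bigoplus_j\vecu_j=\boldsymbol{0}}\prod_{j=1}^\nu z^{w_{\mathrm{H}}(\vecu_j\boldsymbol{G})}.
\end{equation*}

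Next I would remove the linear constraint with the standard character-sum (discrete Fourier) identity $\mathbbm{1}\{\boldsymbol{s}=\boldsymbol{0}\}=2^{-k_1}\sum_{\vecv\in\{0,1\}^{k_1}}(-1)^{\vecv\cdot\boldsymbol{s}^{\mathrm{T}}}$ applied to $\boldsymbol{s}=\bigoplus_j\vecu_j$, noting that $(-1)^{\vecv\cdot\boldsymbol{s}^{\mathrm{T}}}=\prod_{j=1}^\nu(-1)^{\vecv\cdot\vecu_j^{\mathrm{T}}}$. Inserting this into the display above, dropping the constraint so that each $\vecu_j$ now ranges freely over $\{0,1\}^{k_1}$, and interchanging the two finite sums, the summand factorizes over the index $j$; each of the $\nu$ factors equals the same sum $\sum_{\vecu\in\{0,1\}^{k_1}}(-1)^{\vecu\cdot\vecv^{\mathrm{T}}}z^{w_{\mathrm{H}}(\vecu\boldsymbol{G})}$. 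Collecting the prefactor $2^{-k_1}$ and the exponent $\nu$ then gives exactly \eqref{eq:theorem_wef}.

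I do not expect a genuine obstacle: the argument is a coordinatization step followed by the familiar Fourier-analytic indicator trick. The only place that needs care is the reduction of the row-parity constraints to $\bigoplus_j\vecu_j=\boldsymbol{0}$, which uses that $\boldsymbol{G}$ is a bona fide generator matrix (full row rank) of $\code_1$; the systematic form of $\code_1$ is not actually required here, although it is convenient when the same bookkeeping is later extended to track the information weight for the \ac{IOWEF}. One should also check that the column decomposition is consistent with the Kronecker-product definition of the product code, but this is immediate from $\G=\G_1\otimes\G_{\SPC_\nu}$ and the characterization of $\SPC_\nu$ as the even-weight code of length $\nu$.
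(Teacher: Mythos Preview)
Your proposal is correct and in fact cleaner than the paper's own argument. Both proofs rest on the same Fourier/character-sum indicator trick, but they apply it at different levels. The paper works with the complete \ac{WEF} of the trivial product $\code_1\otimes\mathcal{I}$, imposes the even-weight constraint on each of the $n_1$ column coordinates by averaging over $\vecy\in\{\pm 1\}^{n_1}$, and then has to reduce the outer sum from $n_1$ to $k_1$ variables via a change of variables $\vecy\mapsto\tilde{\boldsymbol H}\vecy^{\mathrm T}$ with $\tilde{\boldsymbol H}=[\boldsymbol H^{\mathrm T}\,\bar{\boldsymbol H}^{\mathrm T}]$ nonsingular, exploiting $\boldsymbol G\boldsymbol H^{\mathrm T}=\boldsymbol 0$ to collapse the redundant $n_1-k_1$ coordinates. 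Your route parameterizes each column directly by its information vector $\vecu_j$, so the $n_1$ row-parity constraints collapse immediately to the single $k_1$-bit constraint $\bigoplus_j\vecu_j=\boldsymbol 0$ by injectivity of the encoding map; the character sum is then applied once over $\{0,1\}^{k_1}$ and no parity-check-matrix bookkeeping is needed. What the paper's approach buys is a starting point at the complete \ac{WEF}, which is a slightly more general framework; what your approach buys is a shorter and more transparent derivation for this particular statement, and it carries over to the \ac{IOWEF} just as easily (track $w_{\mathrm H}(\vecu_j)$ on the first $\nu-1$ columns only).
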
}
\begin{proof}
    \ed{Let $f(\boldsymbol{Z}) = \prod_{i=1}^\nu f_i(z_{i,1},\ldots,z_{i,n})$ be a multinomial in the variables $\boldsymbol{Z} = \{z_{i,j}\}$, $1\leq i\leq \nu$, $1\leq j\leq \own{n}$, where each factor $f_i$ is a multinomial only in the variables $z_{i,1},\ldots,z_{i,n}$. Assume further that each variable $z_{i,j}$ appears with exponent either $0$ or $1$ in the multinomial $f_i(z_{i,1},\ldots,z_{i,n})$, and, hence, in $f(\boldsymbol{Z})$.}
	\ed{Suppose now that we wish to remove from $f(\boldsymbol{Z})$ all the terms in the form \begin{equation}
	    \prod_{i\in\mathcal{T}}z_{i,j}
	\end{equation}
	where $|\mathcal{T}|$ is odd. The remaining terms can be obtained by computing
	\begin{equation}
	    \frac{1}{2}\sum_{y\in\{+1,-1\}}\prod_{i=1}^\nu f_i(z_{i,1},\ldots,z_{i,j-1},yz_{i,j},z_{i,j+1},\ldots,z_{i,n}).
	\end{equation}
	Similarly, to remove all the terms in the form
	\begin{equation}
	    \prod_{i\in\mathcal{T}_j}z_{i,j},\quad j\in[n]
	\end{equation}
	where $|\mathcal{T}_j|$ is odd, it is sufficient to evaluate
	\begin{align}
	    2^{-n}\sum_{\vecy\in\{+1,-1\}^n}\prod_{i=1}^\nu f_i(y_1z_{i,1},\ldots,y_nz_{i,n}).
	\end{align}
     }
	
	\ed{Consider first a product code composed of $\nu\times n_1$ arrays whose rows and columns are codewords of $\code_1$ and a trivial rate-$1$ code $\mathcal{I}$ with a generator matrix $\I_\nu$, respectively. Then, the complete \ac{WEF} $A_{\code_1\otimes\mathcal{I}}(\boldsymbol{Z})$ of the product code uses the dummy variables $\boldsymbol{Z} = \{z_{i,j}\}$, $1\leq i\leq \nu$, $1\leq j\leq n_1$, to track bits by their $(i,j)$ coordinate in the codeword. This is obtained simply by the multiplication of the complete \acp{WEF} for the codes corresponding to each row $i$, i.e., we have
	\begin{equation}
	\label{eq:comp_wef_prod_trivial}
		A_{\code_1\otimes\mathcal{I}}(\boldsymbol{Z}) = \prod_{i=1}^{\nu}A_{\code_{\scaleto{1}{3pt}}}(z_{i,1},\ldots,z_{i,n_{\scaleto{1}{3pt}}}).
	\end{equation}
	Recall now that the codewords of $\code$ are $\nu\times n_1$ arrays whose rows and columns are codewords of $\code_1$ and $\SPC_\nu$, respectively. Then, the complete \ac{WEF} $A_{\code}(\boldsymbol{Z})$ of the product code $\code$ is derived from $A_{\code_1\otimes\mathcal{I}}(\boldsymbol{Z})$ by imposing that each column word has an even weight, i.e., we have
	\begin{align}
		\!\!\!\!\!\!\!\!A_{\code}(\boldsymbol{Z}) = 2^{-n_{\scaleto{1}{3pt}}}\sum_{\vecy\in\{+1,-1\}^{n_{\scaleto{1}{3pt}}}} \prod_{i=1}^{\nu}A_{\code_{\scaleto{1}{3pt}}}(y_1z_{i,1},\dots,y_{n_{\scaleto{1}{3pt}}}z_{i,n_{\scaleto{1}{3pt}}}).\label{eq:comp_wef_prod}
	\end{align}
	Then, the \ac{WEF} of the product code $\code$ is obtained by setting $z_{i,j} = z$, yielding
	\begin{align}
        &A_{\code}(z) = 2^{-n_{\scaleto{1}{3pt}}}\sum_{\vecy\in\{1,-1\}^{n_{\scaleto{1}{3pt}}}} \left(A_{\code_{\scaleto{1}{3pt}}}(y_1z,\dots,y_{n_{\scaleto{1}{3pt}}}z)\right)^\nu \label{eq:wef_step1}\\
        &= 2^{-n_{\scaleto{1}{3pt}}}\sum_{\vecy\in\{1,-1\}^{n_{\scaleto{1}{3pt}}}}\left(\sum_{\vecx\in\code_1} z^{w_{\scaleto{\mathrm{H}}{3pt}}(\vecx)}\prod_{i^\prime = 1}^{n_1}y_{i^\prime}^{x_{i^\prime}}\right)^{\nu} \label{eq:wef_step2}\\
        &= 2^{-n_{\scaleto{1}{3pt}}}\sum_{\vecy\in\{0,1\}^{n_{\scaleto{1}{3pt}}}}\left(\sum_{\vecx\in\code_1} z^{w_{\scaleto{\mathrm{H}}{3pt}}\left(\vecx\right)}(-1)^{\vecx\cdot\vecy^{\mathrm{T}}} \right)^\nu \label{eq:wef_step4}\\
        &= 2^{-n_{\scaleto{1}{3pt}}}\sum_{\vecy\in\{0,1\}^{n_{\scaleto{1}{3pt}}}}\left(\sum_{\vecu\in\{0,1\}^{k_{\scaleto{1}{3pt}}}}z^{w_{\scaleto{\mathrm{H}}{3pt}}\left(\vecu\G\right)}(-1)^{\vecu\G\vecy^{\mathrm{T}}} \right)^{\nu} \label{eq:wef_step_rev1}\\
        &= 2^{-n_{\scaleto{1}{3pt}}}\sum_{\vecy\in\{0,1\}^{n_{\scaleto{1}{3pt}}}}\left(\sum_{\vecu\in\{0,1\}^{k_{\scaleto{1}{3pt}}}}z^{w_{\scaleto{\mathrm{H}}{3pt}}\left(\vecu\G\right)}(-1)^{\vecu\G\tilde{\boldsymbol{H}}\vecy^{\mathrm{T}}} \right)^{\nu} \label{eq:wef_step5}\\
        &= 2^{-n_{\scaleto{1}{3pt}}}\!\sum_{y_1^{n_{\scaleto{1}{3pt}}-k_{\scaleto{1}{3pt}}}}\!\sum_{\vecv\in\{0,1\}^{k_{\scaleto{1}{3pt}}}}\!\!\left(\sum_{\vecu\in\{0,1\}^{k_{\scaleto{1}{3pt}}}}\!\!z^{w_{\scaleto{\mathrm{H}}{3pt}}\left(\vecu\G\right)}(-1)^{\vecu\G\bar{\boldsymbol{H}}^{\mathrm{T}}\vecv^{\mathrm{T}}}\!\right)^{\nu} \label{eq:wef_step6}
    \end{align}
with $\tilde{\boldsymbol{H}}\triangleq\left[\boldsymbol{H}^{\mathrm{T}},\bar{\boldsymbol{H}}^{\mathrm{T}}\right]$ where $\boldsymbol{H}$ and $\bar{\boldsymbol{H}}$ are the parity-check matrix of $\code_1$ and a complementary matrix such that $\tilde{\boldsymbol{H}}$ is non-singular, respectively, and $\vecv\triangleq y_{n_{\scaleto{1}{3pt}}-k_{\scaleto{1}{3pt}}+1}^{n_{\scaleto{1}{3pt}}}$. Note that \eqref{eq:wef_step2} follows from \eqref{eq:cwef} by noting that $(y_1z,\ldots,y_{n_{\scaleto{1}{3pt}}}z)^{\vecx} = z^{w_{\scaleto{\mathrm{H}}{3pt}}(\vecx)}\vecy^{\vecx}$, \eqref{eq:wef_step4} from re-defining the dummy vector $\vecy$, \ed{\eqref{eq:wef_step_rev1} from $\vecx=\vecu\boldsymbol{G}$ and performing the summation over $\vecu\in\{0,1\}^{k_{\scaleto{1}{3pt}}}$ instead of $\vecx\in\code_{\scaleto{1}{3pt}}$, \eqref{eq:wef_step5} from the nonsingularity of $\tilde{\boldsymbol{H}}$ and the summation being over all possible $\vecy\in\{0,1\}^{n_{\scaleto{1}{3pt}}}$}, and \eqref{eq:wef_step6} from dividing the outer summation into two parts, namely over $y_1^{n_{\scaleto{1}{3pt}}-k_{\scaleto{1}{3pt}}}\in\{0,1\}^{n_{\scaleto{1}{3pt}}-k_{\scaleto{1}{3pt}}}$ and $\vecv\in\{0,1\}^{k_{\scaleto{1}{3pt}}}$, and the fact that the product $\G\boldsymbol{H}^{\mathrm{T}}$ results in the all-zero matrix. Finally, \eqref{eq:theorem_wef} follows from the fact that the outer summation can be removed by multiplying the remaining term by $2^{n_{\scaleto{1}{3pt}}-k_{\scaleto{1}{3pt}}}$ and the fact that the product $\G\bar{\boldsymbol{H}}^{\mathrm{T}}$ is non-singular.\footnote{\ed{Note that the rows of $\bar{\boldsymbol{H}}^{\mathrm{T}}$ are linearly independent of the rows of $\boldsymbol{H}$ by definition.}}}
\end{proof}
\ed{Thanks to Theorem~\ref{theorem:wef_spc}, one can compute the \ac{WEF} of short and moderate-length \ac{SPC} product codes iteratively, by simply choosing one component code to be a \ac{SPC} product code $\code_1$ and the other one to be a \ac{SPC} code $\SPC_\nu$. Given the weight enumerator of a product code, upper bounds on the \ac{ML} decoding error probability can be obtained. As an example, a tight bound on the block error probability over the \ac{B-AWGN} channel is provided by Poltyrev's \ac{TSB} \cite{poltyrev94}. Another example of a tight bound on the block error probability of a code based on its weight enumerator is Di's union bound over the \ac{BEC} \cite[Lemma B.2]{di02}. We refer the interested reader to \cite{sason2006performance} for an extensive survey on performance bounds under \ac{ML} decoding.}

\ed{Next, we provide simulation results for the $3$-dimensional $(125,64)$ \ac{SPC} product code under \ac{ML} decoding implemented  over the \ac{BEC} via \ac{SCL} decoding with $L=\infty$. The results are provided in Fig. \ref{fig:ml_125_64_SPC_vs_polar_bec}. As reference, simulation results for a $(125,64)$ punctured polar code~\cite{WR14:polar_punc} are also provided, where the polar code design follows the guidelines of the 5G standard~\cite{BCL21:5g_polar}. The polar code slightly outperforms the considered \ac{SPC} product code. Note that Di's union bound tightly approaches the performance of the \ac{SPC} product code.}
\begin{figure}[t]
	\begin{center}
		\includegraphics[width=\columnwidth]{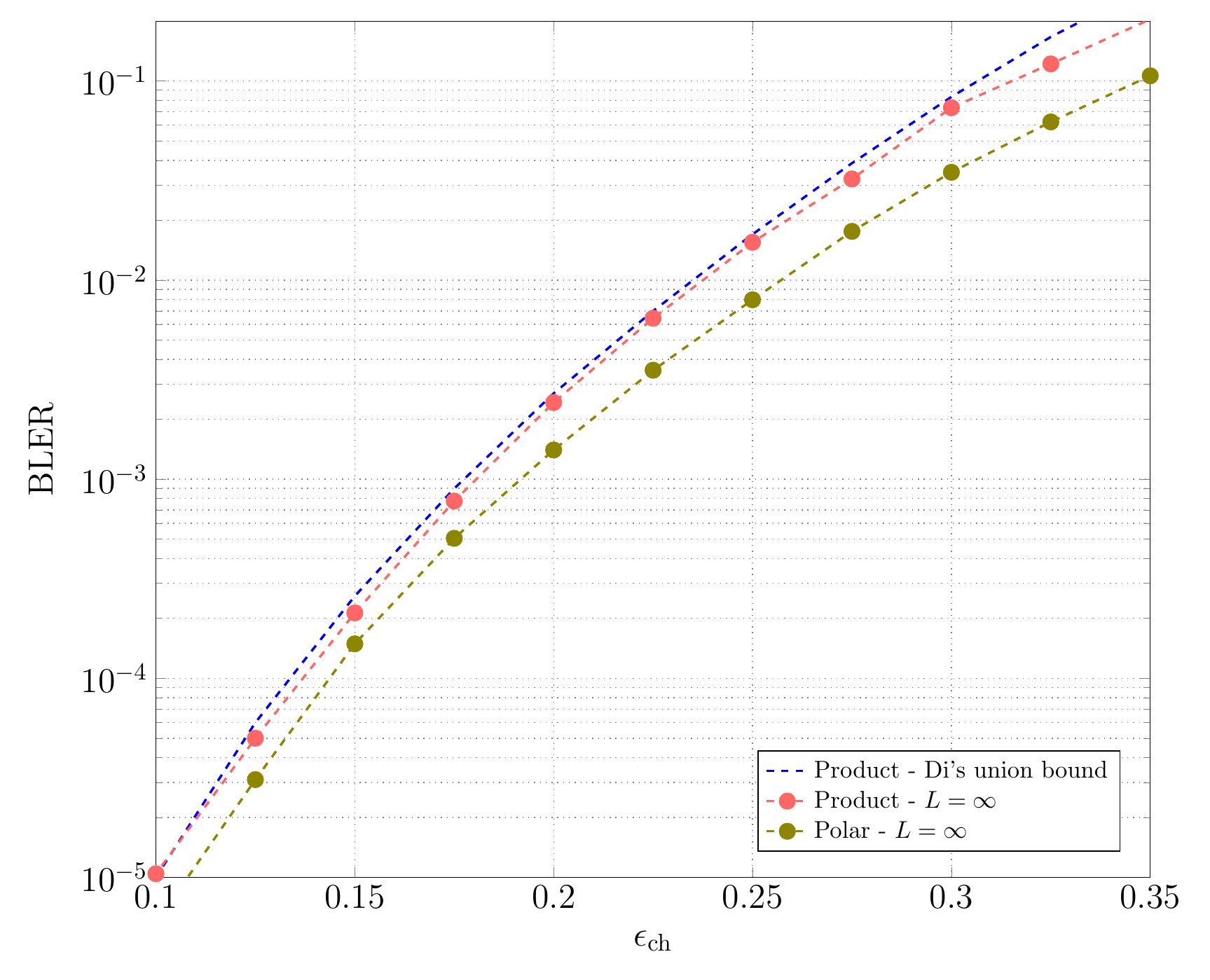}
	\end{center}
	\vspace{-0.5cm}
	\caption{\ed{\ac{BLER} vs. $\epsilon_{\mathrm{ch}}$ over the \ac{BEC} for the $(125,64)$ product code under \ac{ML} decoding, implemented via \ac{SCL} decoding with $L=\infty$, compared to a $(125,64)$ punctured polar code.}
	}\label{fig:ml_125_64_SPC_vs_polar_bec}
\end{figure}

\ed{The same $(125,64)$ \ac{SPC} product code is also simulated over the \ac{B-AWGN} channel under \ac{SCL} decoding with various list sizes. The results are given in Fig. \ref{fig:scl_125_64_SPC}. The \ac{TSB} is also provided, thanks to the weight enumerator analysis.} Remarkably, \ed{\ac{SCL}} decoding with $L=4$ is sufficient to operate \ed{very close to} the \ac{TSB} and \ed{to} outperform \ac{BP} decoding. With $L=8$, the \ac{SCL} decoder tightly \ed{approaches} the \ac{ML} lower bound, which is not the case for \ac{BP} decoding. \ed{The \ac{RCU}~\cite[Thm.~16]{Polyanskiy10:BOUNDS} and the \ac{MC}~\cite[Thm.~26]{Polyanskiy10:BOUNDS} bounds, are here plotted as reference.} The gap to the \ac{RCU} bound reaches to \ed{$1.7$} dB at \ed{\ac{BLER}} of $10^{-3}$.
\begin{figure}[t]
	\begin{center}
		\includegraphics[width=\columnwidth]{./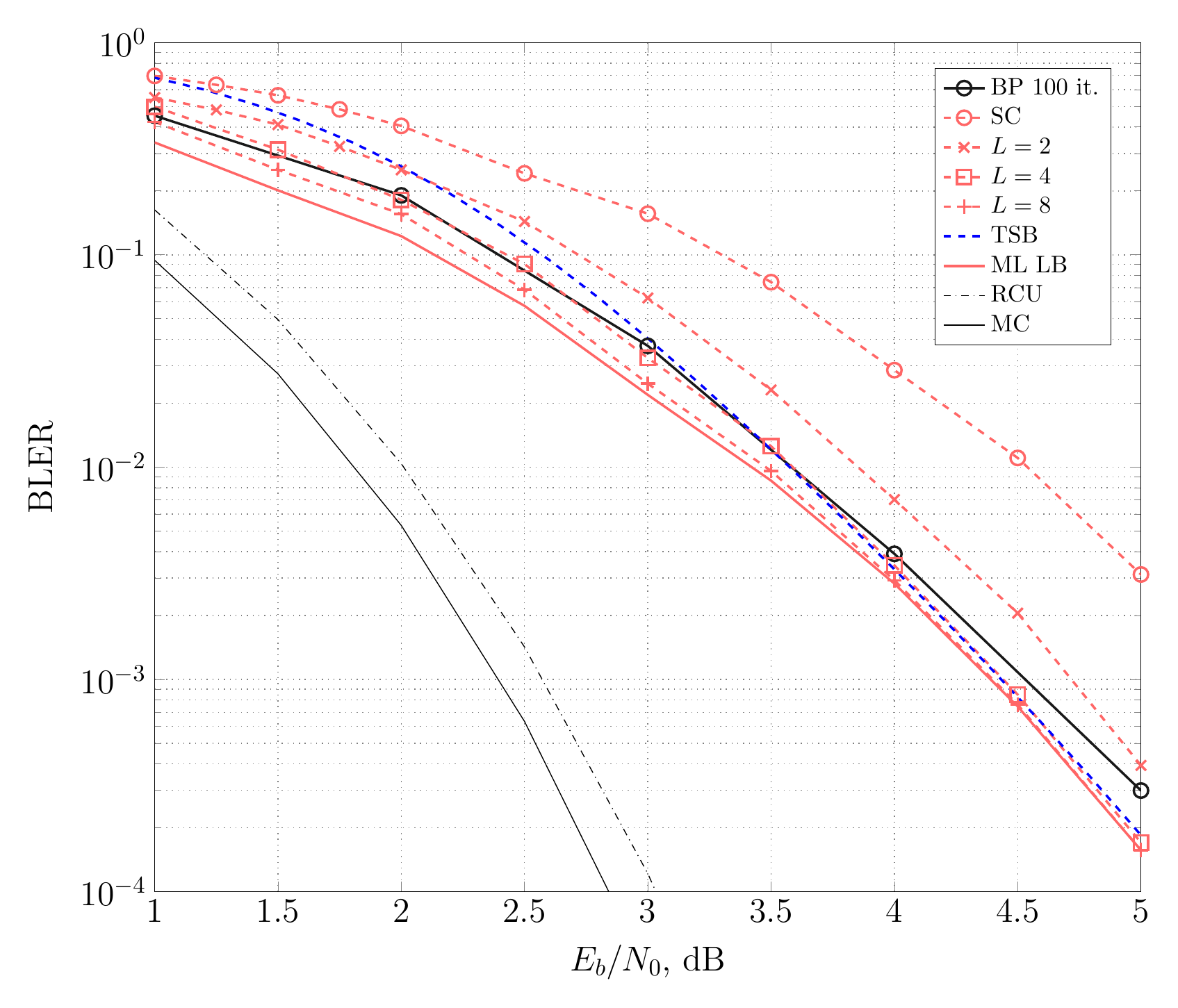}
	\end{center}
	\vspace{-0.5cm}
	\caption{\ed{\ac{BLER} vs. \ac{SNR}  under \ac{SCL} decoding for the $(125,64)$ product code with various list sizes, compared to a BP decoding with $100$ iterations.}
	}\label{fig:scl_125_64_SPC}
\end{figure}

\ed{The performance of \ac{SPC} product codes is compared to that of the $(125,64)$ polar code in Fig. \ref{fig:scl_125_64_SPC_vs_polar}. When $L=4$ is considered, the performance of polar code tightly matches its \ac{ML} lower bound and outperforms the \ac{SPC} product code by around $0.3$ dB at \ac{BLER} of $10^{-3}$. The gap between their \ac{ML} performance is around $0.25$ dB. Note that the polar code requires a smaller list size to approach its \ac{ML} performance.}
\begin{figure}[t]
	\begin{center}
		\includegraphics[width=\columnwidth]{./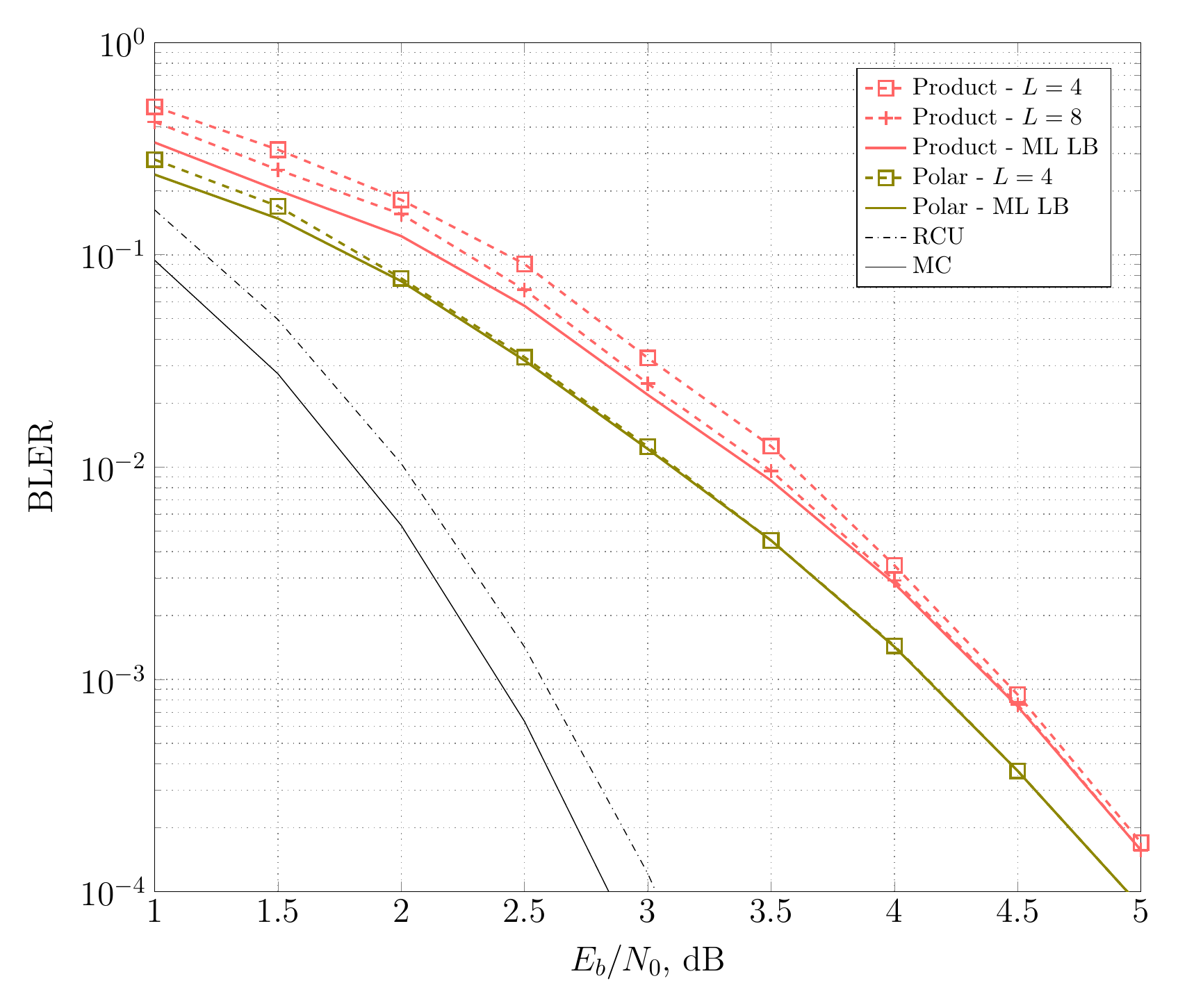}
	\end{center}
	\vspace{-0.5cm}
	\caption{\ed{\ac{BLER} vs. \ac{SNR} over the \ac{B-AWGN} channel for the $(125,64)$ product code under \ac{SCL} decoding with various list sizes, compared to a $(125,64)$ punctured polar code.}
	}\label{fig:scl_125_64_SPC_vs_polar}
\end{figure}

\subsection{Finite-length \GL{Performance} Analysis via Average Weight Distribution of Concatenated Ensembles}
\label{sec:concatenation}
Following \cite{tal15}, \ed{\ac{SPC} product codes are concatenated with a high-rate outer code to improve the distance profile, aiming at an improved performance under \ac{SCL} (and \ac{ML}) decoding. At the receiver, an \ac{SCL} decoder with list size $L$ is employed for the inner code.} The outer code is used to \GL{test the} $L$ codewords in the final list. Among those satisfying the outer code constraints, the most likely one is chosen as the final decision. \ed{Our goal is to analyze the \ac{ML} decoding performance of the concatenated code, since for short and moderate blocklengths close-to-ML performance may be practically attained via \ac{SCL} decoding by choosing a sufficiently large list. To analyze the \ac{ML} decoding performance of such a concatenation, we first \ed{derive} the weight enumerator of product codes in concatenation with an outer code. The weight enumerators will  then be used to derive well-known tight upper bounds on the \ac{ML} decoding error probability \ed{as before}. A reason to analyze such concatenation lies (besides in the expected performance improvement) in the fact that actual schemes employing product codes may make use of an error detection code to protect the product code information message (this is the case, for instance, of the IEEE~802.16 standard \cite{IEEE80216}). One may hence consider sacrificing (part of) the error detection capability introduced by the outer error detection code for a larger coding gain \cite{Coskun19:RM-Product}.}

Consider the concatenation of an $(n_{\mathsf i},k_{\mathsf i})$ inner product code $\code_{\mathsf{i}}$ with an $(n_{\mathsf o},k_{\mathsf o})$ high-rate outer code $\code_{\mathsf{o}}$ with $k_{\mathsf i}=n_{\mathsf o}$. The generator matrices of $\code_{\mathsf{i}}$ and $\code_{\mathsf{o}}$ are $\G_{\mathsf{i}}$ and $\G_{\mathsf{o}}$, respectively.
\begin{definition}[Concatenated Ensemble] The (serially) concateneted ensemble $\ensemble\left(\code_{\mathsf{o}},\code_{\mathsf{i}}\right)$ is the set  of all codes with generator matrix of the form
	\[
	\G=\G_{\mathsf{o}}\bm{\Pi}\G_{\mathsf{i}}
	\]
	where $\bm{\Pi}$ is an $n_{\mathsf o}\times n_{\mathsf o}$ permutation matrix.
\end{definition}
Assume that the outer code weight enumerator $A^{\mathsf o}_j$ and the input-output weight enumerator of the inner product code $A^{\mathrm{IO},\mathsf i}_{j,w}$ are known. We are interested in the weight enumerator of the concatenated code for a given permutation matrix $\bm{\Pi}$ (e.g., $\bm{\Pi} = \I_{n_{\mathsf o}}$), which interleaves the output of the outer encoder. Given an interleaver, this requires the enumeration of all possible input vectors of length $k_{\mathsf o}$, which is not possible in practice. For this reason, the impact of the outer code is \last{typically} analyzed in a concatenated ensemble setting by assuming that the interleaver is distributed uniformly over all possible ${n_\text{out}\choose i}$ permutations\cite{BenMon96}. Then, the average weight enumerator of the ensemble is derived as
\begin{equation}\label{eq:average_wef}
\bar{A}_w = \sum_{j = 0}^{n_\text{out}} \frac{A_j^{\mathsf o}\cdot A_{j,w}^{\mathrm{IO},\mathsf i}}{{n_{\mathsf o}\choose j}}
\end{equation}
where $\bar{A}_w$ is the average multiplicity of codewords $\ed{\vecx}$ with $w_{\Ham}(\ed{\vecx}) = w$. \ed{The computation of \eqref{eq:average_wef} requires the knowledge of the input-output weight enumerator of the inner \ac{SPC} product code. Therefore, we extend the result  of Theorem \ref{theorem:wef_spc} to the derivation of the \ac{IOWEF}.}

\begin{theorem}\label{theorem:iowef}
	\mc{Let $\code_1$ and $\SPC_\nu$ be an arbitrary $(n_1,k_1)$ systematic code with a generator matrix $\ed{\boldsymbol{G}}$ and a systematic \ac{SPC} code, respectively. Then, the \ac{IOWEF} of the product code $\code$ with component codes $\code_1$ and $\SPC_\nu$ is}
	\begin{align}
		A_{\code}^{\mathrm{IO}}(x,z) \!=\! 2^{-k_{\scaleto{1}{3pt}}}\!\sum_{\vecv\in\{0,1\}^{k_{\scaleto{1}{3pt}}}}\!\!&\left(\sum_{\vecu\in\{0,1\}^{k_{\scaleto{1}{3pt}}}}\!\!(-1)^{\vecu\cdot\vecv^\mathrm{T}}x^{w_{\scaleto{\mathrm{H}}{3pt}}\left(\vecu\right)}z^{w_{\scaleto{\mathrm{H}}{3pt}}\left(\vecu\ed{\boldsymbol{G}}\right)}\!\right)^{\nu-1}\\
		&\qquad\left(\sum_{\vecu\in\{0,1\}^{k_{\scaleto{1}{3pt}}}}(-1)^{\vecu\cdot\vecv^\mathrm{T}}z^{w_{\scaleto{\mathrm{H}}{3pt}}\left(\vecu\ed{\boldsymbol{G}}\right)}\right).
	\end{align}
\end{theorem}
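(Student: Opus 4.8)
The plan is to follow the proof of Theorem~\ref{theorem:wef_spc} almost verbatim, changing only the bookkeeping so that the information weight of a product codeword is tracked on its systematic rows. Fix the systematic encoder: a codeword of $\code$ is a $\nu\times n_1$ array whose rows $1,\dots,\nu-1$ are the $\code_1$-codewords $\vecu^{(i)}\boldsymbol{G}$ produced by the $(\nu-1)\times k_1$ block of information bits, and whose $\nu$-th row equals $\bigoplus_{i=1}^{\nu-1}\vecu^{(i)}\boldsymbol{G}$, i.e.\ the $\SPC_\nu$ parities along the columns; this last row contributes to the codeword weight but carries no information. Hence the product codeword indexed by $(\vecu^{(1)},\dots,\vecu^{(\nu-1)})$ has information weight $\sum_{i=1}^{\nu-1} w_{\Ham}(\vecu^{(i)})$ and Hamming weight $\sum_{i=1}^{\nu-1} w_{\Ham}(\vecu^{(i)}\boldsymbol{G}) + w_{\Ham}\!\big(\bigoplus_{i=1}^{\nu-1}\vecu^{(i)}\boldsymbol{G}\big)$, so that
\[
A_{\code}^{\mathrm{IO}}(x,z) = \!\!\sum_{\vecu^{(1)},\dots,\vecu^{(\nu-1)}\in\{0,1\}^{k_1}}\!\!\Big(\prod_{i=1}^{\nu-1}x^{w_{\Ham}(\vecu^{(i)})}z^{w_{\Ham}(\vecu^{(i)}\boldsymbol{G})}\Big)\,z^{w_{\Ham}(\bigoplus_{i=1}^{\nu-1}\vecu^{(i)}\boldsymbol{G})}.
\]

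Next I would enlarge the index set by appending a formal row $\vecu^{(\nu)}$ together with the indicator $\mathbbm{1}\{\bigoplus_{i=1}^{\nu}\vecu^{(i)}\boldsymbol{G}=\boldsymbol 0\}$, which leaves the sum unchanged because $\vecu\mapsto\vecu\boldsymbol{G}$ is injective (so the indicator sums to $1$ over $\vecu^{(\nu)}$) and on its support $\vecu^{(\nu)}\boldsymbol{G}=\bigoplus_{i=1}^{\nu-1}\vecu^{(i)}\boldsymbol{G}$. Replacing the indicator by its Walsh--Hadamard expansion $2^{-n_1}\sum_{\vecy\in\{0,1\}^{n_1}}(-1)^{(\bigoplus_i\vecu^{(i)}\boldsymbol{G})\cdot\vecy^{\mathrm{T}}}$ and interchanging sums, the $\vecy$-summand factors into $\nu-1$ copies of $\sum_{\vecu}(-1)^{\vecu\boldsymbol{G}\vecy^{\mathrm{T}}}x^{w_{\Ham}(\vecu)}z^{w_{\Ham}(\vecu\boldsymbol{G})}$ coming from the systematic rows and one copy of $\sum_{\vecu}(-1)^{\vecu\boldsymbol{G}\vecy^{\mathrm{T}}}z^{w_{\Ham}(\vecu\boldsymbol{G})}$ coming from the parity row, all under $2^{-n_1}\sum_{\vecy\in\{0,1\}^{n_1}}$.

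Then I would collapse the dummy sum exactly as in the passage from \eqref{eq:wef_step4} to \eqref{eq:theorem_wef}: the substitution $\vecy\mapsto\tilde{\boldsymbol{H}}\vecy$ with $\tilde{\boldsymbol{H}}=[\boldsymbol{H}^{\mathrm{T}},\bar{\boldsymbol{H}}^{\mathrm{T}}]$ uses $\boldsymbol{G}\boldsymbol{H}^{\mathrm{T}}=\boldsymbol 0$ to make the first $n_1-k_1$ coordinates of $\vecy$ disappear (a factor $2^{n_1-k_1}$) and turns every exponent into $\vecu\boldsymbol{G}\bar{\boldsymbol{H}}^{\mathrm{T}}\vecv^{\mathrm{T}}$ with $\vecv\in\{0,1\}^{k_1}$; a second substitution $\vecv\mapsto(\boldsymbol{G}\bar{\boldsymbol{H}}^{\mathrm{T}})^{-\mathrm{T}}\vecv$, which is a bijection of $\{0,1\}^{k_1}$ since $\boldsymbol{G}\bar{\boldsymbol{H}}^{\mathrm{T}}$ is nonsingular, reduces all of them to $\vecu\cdot\vecv^{\mathrm{T}}$ simultaneously. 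Combining $2^{-n_1}\cdot 2^{n_1-k_1}=2^{-k_1}$ then yields the claimed expression, and setting $x=1$ recovers Theorem~\ref{theorem:wef_spc} as a consistency check.

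The only genuinely new ingredient, and hence the step to argue carefully, is the asymmetry between the two factors: one must show that in the systematic realization the information bits occupy precisely the $(\nu-1)\times k_1$ top-left sub-array, so that the $\nu$-th (column-parity) row is weighted by $z$ alone. Once this is granted, everything else is a verbatim replay of the WEF argument in which $x$ is an inert spectator — it never couples to the dummy vector $\vecy$ and is therefore carried unchanged through all the character-sum manipulations.
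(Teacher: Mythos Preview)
Your proposal is correct and follows essentially the same route as the paper: the paper's proof simply specializes the dummy variables in the complete WEF formula \eqref{eq:comp_wef_prod} (setting $z_{i,j}=xz$ on the systematic sub-array and $z_{i,j}=z$ elsewhere) to arrive directly at the factored expression with $\nu-1$ ``$x$-weighted'' factors and one ``$z$-only'' factor, and then reuses verbatim the parity-check collapse \eqref{eq:wef_step4}--\eqref{eq:wef_step6}. Your indicator/Walsh--Hadamard step is just an inline re-derivation of \eqref{eq:comp_wef_prod} with that variable specialization already baked in, so the two arguments coincide after the first paragraph.
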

\begin{proof}
	Recall \eqref{eq:comp_wef_prod} and assume, without loss of generality, that the component code generator matrices are of the form $[\I_{k_i} | \boldsymbol{P}_i]$. \ed{In this case, we set $z_{i,j} = xz$, for $1\leq i\leq k_1$ and $1\leq j\leq k_2 = \nu-1$, and $z_{i,j} = z$, otherwise, to obtain}
	\ed{\begin{align}
	    A_{\code}(x,z) = 2^{-n_{\scaleto{1}{3pt}}}\sum_{\vecy\in\{1,-1\}^{n_{\scaleto{1}{3pt}}}}&\left(\sum_{\vecu\in\{0,1\}^{k_{\scaleto{1}{3pt}}}}x^{w_{\scaleto{\mathrm{H}}{3pt}}\left(\vecu\right)}z^{w_{\scaleto{\mathrm{H}}{3pt}}\left(\ed{\vecu\G}\right)}\ed{\vecy^{\vecu\G}} \right)^{\nu-1}\\
	    &\qquad\left(\sum_{\vecu\in\{0,1\}^{k_{\scaleto{1}{3pt}}}} z^{w_{\scaleto{\mathrm{H}}{3pt}}\left(\ed{\vecu\G}\right)}\ed{\vecy^{\vecu\G}}\right) \label{eq:iowef_step3}%\\
	\end{align}
	where the result follows by applying the similar steps in the proof of Theorem \ref{theorem:wef_spc}.}
\end{proof}
\ed{Similar to the \ac{WEF}, one can compute the \ac{IOWEF} of short and moderate-length \ac{SPC} product codes iteratively, by choosing one component code to be a \ac{SPC} product code $\code_1$ and the other one to be a \ac{SPC} code $\SPC_\nu$. Given the average weight enumerator of a concatenated ensemble, upper bounds on the \ac{ML} decoding error probability can be obtained as in Section \ref{sec:wef_SPC_PC}.}

Fig. \ref{fig:scl_125_56_SPC} shows the performance of concatenating the $(125,64)$ product code with \ed{a $8$-bit} outer \ac{CRC} code with generator polynomial \ed{$g(x)=x^8+x^6+x^5+x^4+x^2+x+1$}, where the interleaver between the codes is the trivial one defined by an identity matrix. This concatenation leads to a $(125,56)$ code. Since the code distance properties are improved (observed directly via the \ac{TSB} on the average performance of the code ensemble), the performance improvement under \ac{ML} decoding is expected to be significant. \ed{The \ac{CRC} polynomial is selected as the one which provides the best \ac{TSB}, obtained for the average weight enumerator of a concatenated ensemble with the uniform interleaver assumption~\cite{BenMon96}.} The gain achieved by \ac{SCL} decoding is remarkably large, operating below the \ac{TSB}. At  a \ed{\ac{BLER}} of $10^{-2}$, \ac{SCL} decoding of the concatenated code achieves gains up to $1.25$ dB over the original product code, reaching up to \ed{$1.5$} dB at a \ed{\ac{BLER}} $\approx 10^{-4}$. The gap to the \ac{RCU} bound is approximately \ed{$0.7$} dB at \ed{a \ac{BLER}} of \ed{$10^{-4}$ and less for higher \acp{BLER}, providing a competitive performance for similar parameters\cite{Coskun18:Survey}. For example, the performance of a $(128,64)$ \own{5G-NR} LDPC code (base graph 2, see \cite{Coskun18:Survey}) is reported. The code has been decoded with the BP algorithm by setting the maximum number of iterations to $100$. The concatenation of the outer CRC code with the inner SPC product code yields a remarkable gain of $\approx 0.6$ dB with respect to the \own{5G-NR} LDPC code (it shall be noted, however, that the concatenated SPC code possesses a slightly lower code rate)}. Note that it is not always possible to attain \ed{a} performance close to the \ac{ML} performance of concatenated codes using \ac{BP} decoding\cite{Coskun19:RM-Product,geiselhart2020crcaided}. In that sense, \ac{SCL} decoding provides a low-complexity solution to \ed{approach} the \ac{ML} performance of the concatenated \ed{\ac{SPC} product code} scheme. \ed{As another reference, a (125, 56) CRC-concatenated polar code is constructed by using the $(125,64)$ polar code with an $8$-bit outer \ac{CRC} code optimized using the guidelines of \cite{Yuan19}, which takes into account the exact code concatenation under \ac{SCL} decoding rather than an ensemble performance. The generator polynomial of the CRC code is $g(x)=x^8+x^7+x^6+x^5+1$. The gap between the \ac{ML} performance of two codes is less than $0.5$ dB in the considered regime. A careful optimization of the interleaver of the CRC-concatenated \ac{SPC} product code might provide further gains as for polar codes\cite{RBC18}, but it is not in the scope of this work.}
\begin{figure}[t]
	\begin{center}
		\includegraphics[width=\columnwidth]{./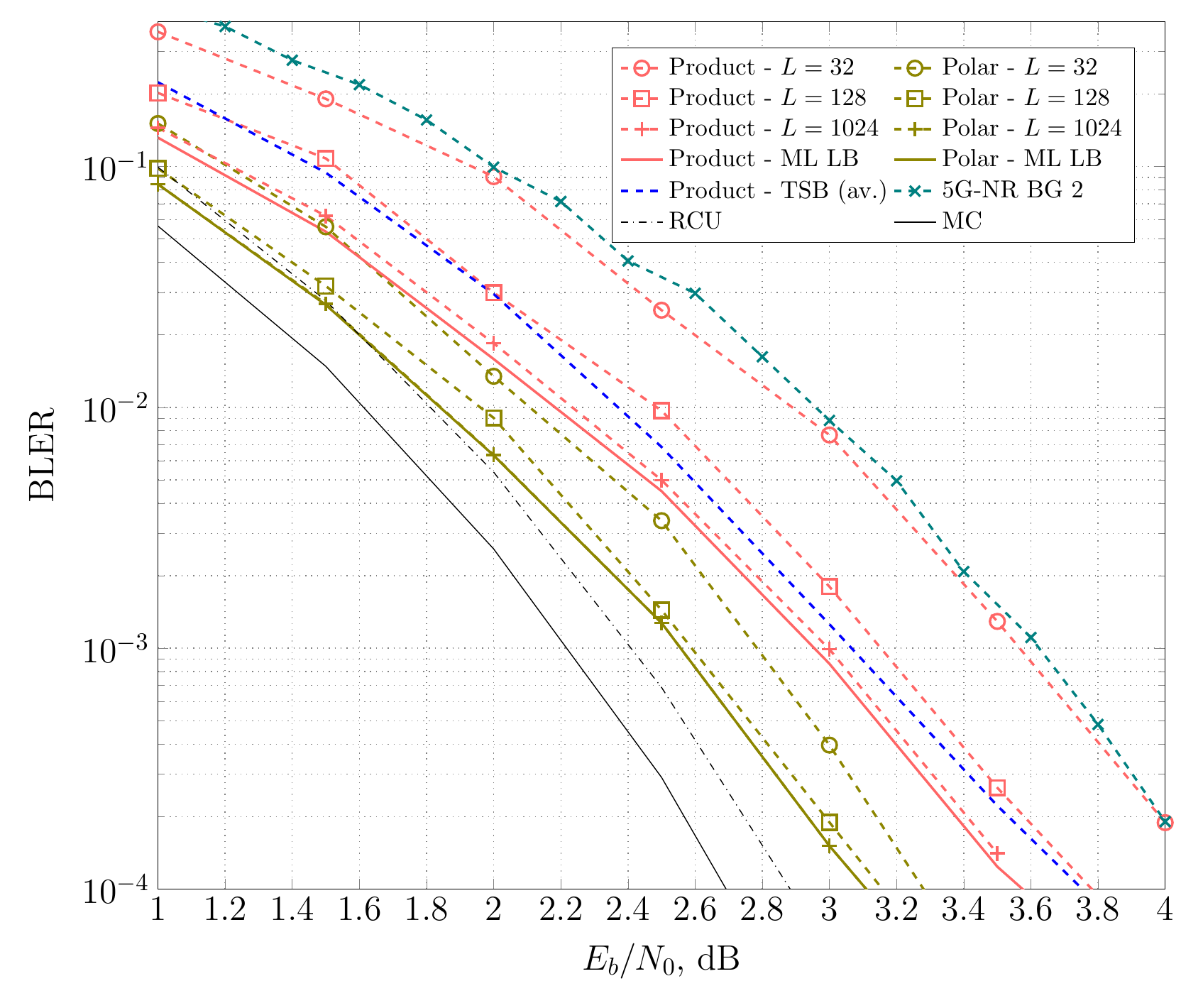}
	\end{center}
	\vspace{-0.5cm}
	\caption{\ed{\ac{BLER} vs. \ac{SNR} over the \ac{B-AWGN} channel for a $(125,64)$ product code concatenated with a $(64,56)$ \ac{CRC} code under \ac{SCL} decoding with various list sizes. The performance is compared to the one of a $(125,56)$ CRC-concatenated polar code with various list sizes, where the generator polynomial of the outer code is optimized for \ac{SCL} decoding\cite{Yuan19}, and the one of the $(128,64)$ \own{5G-NR} LDPC code (base graph 2, see \cite{Coskun18:Survey}) under BP decoding where the maximum number of iterations is set to $100$.}
	}\label{fig:scl_125_56_SPC}
\end{figure}
	\section{Conclusions}
\label{sec:conc}

Successive cancellation (SC) decoding was introduced for the class of product codes obtained by iterating single parity-check (SPC) codes, namely SPC product codes. As a byproduct, \ed{the relations between \ac{SPC} product and multi-kernel polar codes were studied, which enables us to use tools of the latter.} SPC product codes have been, then, analyzed under \ac{SC} \ed{decoding} over the binary \ed{memoryless symmetric channels, in particular, over the binary-input additive white Gaussian (B-AWGN) channel and the binary erasure channel (BEC)}.

Successive cancellation list (SCL) decoding has been described for \ac{SPC} product codes as well, which outperforms belief propagation (BP) decoding even with a considerably small list sizes, e.g., $L=4$, for the demonstrated example \ed{over the B-AWGN channel}. Larger gains are attained by concatenating an inner \ac{SPC} product code with a high-rate outer code. The \ed{maximum-likelihood} performance \ed{(with an outer code)} was analyzed via \ed{(average)} distance spectrum by developing an efficient method to compute \ed{the (input-output) weight enumerator} of short \ac{SPC} product codes, \ed{which is shown to be achieved via \ac{SCL} decoding with moderate list sizes for the considered example.}

\ed{While the asymptotic analysis over the BEC under SC decoding has shown, for some specific product code sequences, a considerable gap to channel capacity, finite-length results on \ac{SPC} product codes show that they are well-suited for SC and SCL decoding, with performance gains over BP decoding that are especially visible when an outer error detection code is used in combination with the list decoder.}

	\section*{Acknowledgement}
	The authors would like to thank the associate editor and the anonymous reviewers for their valuable comments, which improved the presentation of the work significantly.
	\appendix
\subsection{Proof of Lemma 2}\label{sec:appendix}

   \ed{Let $U_1^{n_\ell}$ be a-priori uniform on $\mathcal{X}^{n_\ell}$ that is mapped onto $X_1^{n_\ell}$ as $X_1^{n_\ell} = U_1^{n_\ell}\GK_{n_\ell}$. It is sufficient to show that\cite[Theorem 1]{benammar_land} there exists $\alpha,\beta>0$ for all $i\in[n_\ell]$ such that
    \begin{equation}
        \left|I\left(\W^{(i)}_{\GK_{n_\ell}}\right)-I\left(\W\right)\right|\geq I\left(\W\right)^\alpha\left(1-I\left(\W\right)\right)^\beta
    \end{equation}
    which can be equivalently translated into
    \begin{align}
        \!\!\!\!\!\!\!\!\left|H\left(U_i\big|Y_1^{n_\ell},U_1^{i-1}\right)-H(\W)\right|\geq\left(1-H(\W)\right)^\alpha H^\beta(\W).
        \label{eq:second_condition}
    \end{align}
    }
    
    \ed{Note that $H\left(U_i\big|Y_1^{n_\ell}, U_1^{i-1}\right)$ is decreasing in $i$ for $i\geq 2$ due to the kernel structure. This means, for $i\in\{2,\ldots,n_\ell\}$, we have
    \begin{equation}
        H\left(U_i\big|Y_1^{n_\ell},U_1^{i-1}\right) \leq H\left(U_2\big|Y_1^{n_\ell},U_1\right).\label{eq:monotone}
    \end{equation}
    In the following, we focus on the \ac{RHS} of \eqref{eq:monotone}. By noting $U_1 = X_1\oplus\ldots\oplus X_{n_\ell}$ and $U_2 = X_2$ and setting $S = U_1\oplus X_2 = X_1\oplus X_3\ldots\oplus X_{n_\ell}$, we re-write the \ac{RHS} of \eqref{eq:monotone} as
    \begin{align}
        H&\left(X_2\big|Y_1^{n_\ell},S\oplus X_{2}\right) = H(X_2|Y_1^{n_\ell})\!+\!H(S\oplus X_{2}|Y_1^{n_\ell},X_2)\\
        &\qquad\qquad\quad\qquad\qquad\qquad\qquad-H(S\oplus X_{2}|Y_1^{n_\ell})\label{eq:separation23}\\
        &\qquad\,= H(X_2|Y_2) + H(S|Y_{\sim 2}) - H(S\oplus X_2|Y_1^{n_\ell}) \label{eq:separation24}
    \end{align}
    where \eqref{eq:separation23} follows from the chain rule of entropy and \eqref{eq:separation24} from the fact that $X_2-Y_2-Y_{\sim 2}$ and $T-Y_{\sim 2}-Y_2$ form Markov chains with $Y_{\sim i}$ being the random vector where the $i$-th element is removed. An upper-bound on the last term in the \ac{RHS} of \eqref{eq:separation24} can be found as
    \begin{align}
        &\!\!\!\!\!\!\!\!\!\!H(S\!\oplus \!X_2|Y_1^{n_\ell}) \!\geq\! H_2\!\left(\!H_2^{-1}\!\left(H(S|Y_{\sim 2})\right) \!*\! H_2^{-1}\!\left(\!H(X_2|Y_2)\right)\right)\label{eq:separation31}\\
        &\qquad\qquad\geq H(S|Y_{\sim 2}) + [1-H(S|Y_{\sim 2})]H^2\!\left(X_2|Y_2\right)\label{eq:separation32}
    \end{align}
    where \own{$H_2: [0,\nicefrac{1}{2}] \rightarrow [0,1]$ is the binary entropy function. Then,} \eqref{eq:separation31} and \eqref{eq:separation32} are due to Mrs. Gerber's Lemma \cite{WZ73} and $H_2(a*b) \geq H_2(a) + [1-H_2(a)]H_2^2(b)$,\footnote{\ed{Note that this inequality is very similar to the one given  in \cite[Eq. (38)]{benammar_land} without proof. Its validity can be easily verified numerically.}} respectively. \own{W}e subtract $H(\W)$ from the both sides of \eqref{eq:separation24} to have
    \begin{align}
        \!\!H&\!\left(\!X_2\big|Y_1^{n_\ell},S\oplus X_{2}\right) \!-\! H(\W)\!\leq\! H(S|Y_{\sim 2}) \!-\! H(S\!\oplus\! X_2|Y_1^{n_\ell})\label{eq:separation41}\\
        &\qquad\,\,\leq -[1-H(S|Y_{\sim 2})]H^2(\W)\label{eq:separation42}\\
        &\qquad\,\,= -[1-H(X_1\oplus X_3\oplus\ldots\oplus X_{n_\ell}|Y_{\sim 2})]H^2(\W)\label{eq:separation43}
    \end{align}
    where \eqref{eq:separation42} follows from \eqref{eq:separation32}.%, and \eqref{eq:separation43} by recalling $ S =  X_1\oplus X_3\oplus\ldots\oplus X_{n_\ell}$.
    For an upper-bound on \eqref{eq:separation43}, we write
    \begin{align}
        &H(\own{S}|Y_{\sim 2}) = \sum_{\vecy_{\sim 2}}p(\vecy_{\sim 2})H(\own{S}|Y_{\sim 2} = \vecy_{\sim 2})\label{eq:separation51} \\
        &\!= \!\sum_{\own{\vecy_{\sim 2}}} p(y_1)p(y_3^{n_\ell})H_2\left(p_{X_1\oplus X_3\oplus\ldots\oplus X_{n_\ell}|Y_{\sim 2} = \vecy_{\sim 2}}\right)\label{eq:separation52} \\
        &\!= \!\sum_{\own{\vecy_{\sim 2}}} p(y_1)p(y_3^{n_\ell})H_2\!\left(p_{X_1|Y_1=y_1}\!*\!p_{X_3\oplus\ldots\oplus X_{n_\ell}|Y_3^{n_\ell} = y_3^{n_\ell}}\!\right)\label{eq:separation53} \\
        &\leq H(X_1|Y_1) \!+\! [1\!-\!H(X_1|Y_1)]H(X_3\!\oplus\!\ldots\!\oplus\! X_{n_\ell}|Y_3^{n_\ell})\label{eq:separation54} \\
        &= H(\W) + [1-H(\W)]H(X_3\oplus\ldots\oplus X_{n_\ell}|Y_3^{n_\ell})\label{eq:separation55}
    \end{align}
    where \eqref{eq:separation52} \own{follows by recalling $ S =  X_1\oplus X_3\oplus\ldots\oplus X_{n_\ell}$ and defining $p_{S|Y_{\sim 2} = \vecy_{\sim 2}}\triangleq H_2^{-1}(H(S|Y_{\sim 2} = \vecy_{\sim 2}))$,} \eqref{eq:separation53} follows \own{from} the independence of $\{Y_i\}$, $i\in[n_\ell]$, and \eqref{eq:separation54} from the inequality  $H_2(a*b)\leq H_2(a)+[1-H_2(a)]H_2(b)$.\footnote{\ed{This inequality can also be found in\cite[Eq. (45)]{benammar_land} and verified numerically.}} By recalling $X_2 = U_2$ and $U_1 = S\oplus X_2$, combining \eqref{eq:separation55} and \eqref{eq:separation43} results in
    \begin{align}
        &H\left(U_2\big|Y_1^{n_\ell},U_1\right) - H(\W) \leq\\
        &\!-\!\!\left[\!1\!-\!\left(\!H(\W) \!+\! [1\!-\!H(\W)]H(X_3\!\oplus\!\ldots\!\oplus\! X_{n_\ell}|Y_3^{n_\ell})\right)\!\right]\!H^2(\W) \\
        &\!= -\!\left(1-H(X_3\oplus\ldots\oplus X_{n_\ell}|Y_3^{n_\ell})\right)[1-H(\W)]H^2(\W)\label{eq:separation56}
    \end{align}
    where \eqref{eq:separation56} follows from algebraic manipulation. Now recall \eqref{eq:monotone}. By recursively applying the steps to reach \eqref{eq:separation56}, one obtains
    \begin{equation}
        H\left(U_i\big|Y_1^{n_\ell},U_1^{i-1}\right)-H(\W) \leq -[1-H(\W)]^{n_\ell - 1} H^2(\W)\label{eq:separation61}
    \end{equation}
    for $i = 2,\ldots,n_\ell$.}
    
    \ed{Using the chain rule for conditional entropy, we write
    \begin{align}
        \sum_{i=1}^{n_\ell} \left[H(U_i|Y_1^{n_\ell},U_1^{i-1})\!-\!H(\W)\right] &= H(U_1^{n_\ell}|Y_1^{n_\ell}) \!-\! n_\ell H(\W)\\
        &= 0. \label{eq:separation71}
    \end{align}
    Combining \eqref{eq:separation61} and \eqref{eq:separation71} provides
    \begin{align}
        H\left(U_1\big|Y_1^{n_\ell}\right)-H(\W) &\geq (n_\ell-1)[1-H(\W)]^{n_\ell - 1} H^2(\W) \\
        &\geq [1-H(\W)]^{n_\ell - 1} H^2(\W) \label{eq:separation81}
    \end{align}
    where \eqref{eq:separation81} follows from the fact that $n_\ell\geq 2$. We obtain \eqref{eq:second_condition} by setting $\alpha = n_\ell - 1$ and $\beta = 2$, which concludes the proof.}
	%\IEEEtriggeratref{34}
	%\bibliographystyle{IEEEtran}
	\bibliography{IEEEabrv,product}

\end{document}